\let\orgdescriptionlabel\descriptionlabel
\renewcommand*{\descriptionlabel}[1]{%
  \let\orglabel\label
  \let\label\@gobble
  \phantomsection
  \edef\@currentlabel{#1\unskip}%
  \let\label\orglabel
  \orgdescriptionlabel{#1}%
}
\theoremstyle{definition}
\newtheorem{lemma}{Lemma}
\newtheorem{theorem}{Theorem}
\newtheorem{definition}{Definition}
\newtheorem{corollary}{Corollary}
\providecommand{\customgenericname}{}
\newcommand{\newcustomtheorem}[2]{%
  \newenvironment{#1}[1]
  {%
   \renewcommand\customgenericname{#2}%
   \renewcommand\theinnercustomgeneric{##1}%
   \innercustomgeneric
  }
  {\endinnercustomgeneric}
}
\renewcommand{\textsf}{\texttt}
\newcommand{\algorithmname}{Query and Parallelism Optimized Space-Saving\xspace}
\newcommand{\subalgorithmname}{Query Optimized Space-Saving\xspace}
\newcommand{\shortalgorithmname}{QPOPSS\xspace}
\newcommand{\shortsubalgorithmname}{QOSS\xspace}
\newcommand{\name}{\emph{\algorithmname}\xspace}
\newcommand{\newtext}[1]{#1}
\begin{document}
\begin{frontmatter}

\title{\shortalgorithmname: \name for 
Finding Frequent Stream Elements
}
\author[1,2]{Victor Jarlow\texorpdfstring{\corref{cor1}}}
\ead{victor.jarlow@\{astazero.com,ri.se,outlook.com\}}
\author[1,3]{Charalampos Stylianopoulos}
\ead{charalampos.stylianopoulos@gmail.com}
\author[1]{Marina Papatriantafilou}
\ead{ptrianta@chalmers.se}
\cortext[cor1]{Corresponding author, whose majority of work was carried out while affiliated with Chalmers University of Technology}
\affiliation[1]{organization={Dept. Computer Science and Eng., Chalmers Un. of Technology and Un. of Gothenburg, Gothenburg, Sweden}}
\affiliation[2]{organization={R\&D Dept., AstaZero, Gothenburg, Sweden}}
\affiliation[3]{organization={emnify, Berlin, Germany}}

\begin{abstract}
The frequent elements problem, a key component in demanding stream-data analytics, involves selecting elements whose occurrence exceeds a user-specified threshold.
Fast, memory-efficient $\epsilon$-approximate {synopsis algorithms} select all frequent elements but may overestimate them depending on $\epsilon$ (user-defined parameter). Evolving applications demand performance only achievable by parallelization. However, algorithmic guarantees concerning concurrent updates and queries have been overlooked.
We propose 
\algorithmname (\shortalgorithmname), 
providing concurrency guarantees. The design includes  
an implementation of the \emph{Space-Saving} algorithm supporting fast queries, implying minimal overlap with concurrent updates. 
\shortalgorithmname integrates this with the distribution of
work and fine-grained synchronization among threads,  swiftly balancing high throughput, high accuracy, and low memory consumption. 
Our analysis, under various concurrency and data distribution conditions, shows space and approximation bounds. Our empirical evaluation relative to representative state-of-the-art methods reveals that \shortalgorithmname's multi-threaded throughput scales linearly while maintaining the highest accuracy, with orders of magnitude smaller memory footprint.

\end{abstract}

\end{frontmatter}

\section{Introduction}\label{sec:introduction}

Efficient data synopsis algorithms are a core part of many important applications, including the online processing of events, click-streams, web log analysis, natural language processing, heavy flow detection in computer networks, dimensionality reduction in Machine Learning (ML) applications and more~\cite{charikar_finding_2004, cormode-roadmap2021, hernandez2020eraia, may_streaming_2017, metwally_efficient_2005}.
E.g., in a stream of web page user clicks from a vast number of users, synopsis queries may estimate the number of
unique users clicking on a particular link, the most active users, quantiles describing the time spent on a web page, and more.
Synopses give (often very close to accurate) answers to queries while having small memory footprint and processing times. 

Given the increasingly high data rates due to increased digitalization globally, data cannot be processed continuously in a scalable and timely manner without parallelism. In the literature, there is an established volume of knowledge on data synopses~\cite{cormode_finding_2008, charikar_finding_2004,estan_new_2003, HeavyHittersP4,manerikar_frequent_2009,metwally_efficient_2005}. However, studies regarding consistency implications of algorithmic implementations enabling concurrency of queries and updates are relatively recent, with few results in the literature, also describing the need and significance of addressing the problem in practice as well as analytically~\cite{cormode2018data, stylianopoulos_delegation_2020, RinbergKeidarDisc2020, rinberg2023intermediate}.
This gap in the literature means that the way such algorithms perform (in terms of processing timeliness and accuracy) in a parallel execution environment is widely unclear.

In this work, we target the \emph{frequent elements} problem, in which data synopses are queried to respond with the elements whose occurrence in the stream exceeds a user-specified threshold. The problem has multiple applications, including continuous monitoring, data and ML pipelines, as discussed in related literature~\cite{cafaro_parallel_2016, das_cots:_2009, heavy-keeper, TopKapi, Augmented_Sketch,top-k-sgd,zhang_efficient_2014}. 
Consider one of the more intuitive applications in continuous monitoring, e.g. of network traffic, where an IP network flow identifies a connection and is represented by a 5-tuple of source and destination IP addresses, source and destination port numbers, and protocol. A small number of distinct flows, dubbed \emph{elephant flows}, tend to make up a large share of the bandwidth consumption, and tracking them is useful for accounting and statistics \cite{goos_frequency_2002}, detecting anomalous patterns such as DDoS attacks~\cite{harrison_network-wide_2018}, as well as for dynamically scheduling network traffic in Software Defined Networks~\cite{al-fares_hedera_2010}.
An exact answer requires tracking each unique flow, 
and consumes memory space on the order of exabytes due to the massive number of bit-combinations possible in a flow, which is an impractical approach. If a small and controllable incurred error is acceptable, the memory consumed can be drastically reduced by using a synopsis data structure.
Moreover, high-velocity streams limit the per-packet processing time (ppt). In the case of backbone routers, which process vast amounts of data, e.g., the optical carrier bandwidth specifications OC-192 and OC-782 can require a ppt of less than 100 ns and 25 ns respectively~\cite{kumar_data_2004}. Achieving such low ppt, in the presence of a high incoming packet rate,  is in some cases only possible by utilizing multiple processor cores in parallel.
 
Much of the previous work on \emph{parallelizing frequent elements algorithms} \cite{parallel_space_saving,cafaro_parallel_2016, TopKapi} addresses performance on only updates without evaluating the effect of concurrent queries.
Understanding performance under concurrent updates and queries is essential, as the answer to a streaming query may naturally be needed on the fly and regularly, preferably without halting the processing of the high-rate stream, balancing inherent trade-offs between timeliness and accuracy. 

To satisfy this need, we make the following contributions:

\begin{itemize}[topsep=0pt, leftmargin=*]
\setlength\itemsep{0em}
    \item We build on the Space Saving algorithm \cite{metwally_efficient_2005} to propose Query and Parallelism Optimized Space-Saving (QPOPSS),
a memory-conservative, scalable method for estimating frequent
elements that support concurrent updates and queries while maintaining
high approximation-accuracy results. 
 \item A key to achieving the above is our  \emph{\subalgorithmname} (\shortsubalgorithmname) algorithmic implementation, which enhances known algorithm designs by supporting lower-latency bulk operations (queries for the Space-Saving algorithm), thus enabling reduced update overlaps and therefore improved accuracy preservation under concurrency.
    \item We provide a detailed analysis of the \shortalgorithmname properties under various concurrency and data distribution conditions, showing space and approximation bounds.
    Moreover, we analyze the problem regarding efficiency and concurrency-aware accuracy-associated trade-offs, which, to our knowledge, have not been done before.
    \item We present a detailed evaluation of our open-source implementation \cite{Delegation_Space-Saving_2022} using high-rate data from real-world and synthetic sources to investigate the associated trade-offs. The multi-threaded \shortalgorithmname is studied together with \shortsubalgorithmname as well as the representative alternatives Topkapi \cite{TopKapi} and PRIF \cite{zhang_efficient_2014}
    regarding parallelism speedup,  query accuracy and latency, memory consumption, as well as overall processing throughput. Our design uses as little as almost the same amount of memory as the single-threaded version, sometimes consuming $10^{-4}$ less memory bytes than other approaches. Still, \shortalgorithmname shows very high accuracy and linear speedup.
\end{itemize}

The rest of the paper is structured as follows: Section~\ref{sec:preliminaries} covers the background, a description of the system model, and the basic metrics of interest. Section~\ref{sec:problem_analysis} analyzes the problem at hand, motivating a novel balanced approach, and outlines the associated challenges relative to state of the art. For \shortalgorithmname, we give an overview and its algorithmic implementation in Section~\ref{sec:deleg_space_saving}, while 
Sections~\ref{sec:algo_impl_analysis} and~\ref{sec:evaluation} cover its analysis and empirical evaluation. We discuss other related work and present our conclusions in Sections~\ref{sec:relatedwork} and~\ref{sec:conclusions}, respectively.\looseness=-1

\section{Preliminaries}\label{sec:preliminaries}

In an unbounded stream of elements $\mathcal{S}\coloneqq \tau_1,\tau_2,...,\tau_N, ...$ 
for any $N$, we say that an element $e\in U$ has a frequency count $f_N(e) = |\{ j | \tau_j = e\}|$ after $N$ elements have been processed, where $U$ is the universe of possible elements. 
From this point on, we assume that $\mathcal{S}$ contains only elements of positive unit weight, also known as the cash-register model~\cite{GarofalakisDataStreamManagement}.

The $\phi$-\emph{frequent elements problem} is concerned with selecting the elements of a stream with a frequency count above $\phi N$, where $\phi\in [0,1]$. 
To find the $\phi$-frequent elements of an arbitrary stream using a deterministic algorithm, at least $O(|U|)$ space has to be used \cite{cormode_finding_2008}.
For applications where an approximation is acceptable, we can relax the exact problem to the $\epsilon$\emph{-approximate} $\phi$-\emph{frequent elements problem}\footnote{From this point on, we may refer to the $\epsilon$\emph{-approximate} $\phi$-\emph{frequent elements problem} as the frequent elements problem.}, defined in~\cite{manku_approximate_2002} as follows.

\begin{definition}
 \label{def:epsapprox}
 Given a stream $\mathcal{S}$ of $N$ elements, 
 the \emph{$\epsilon$-approximate $\phi$-frequent elements} problem is to \emph{report} a set $F$ containing all elements $e\in U$ with $f_N(e) > N\phi$ and no element $e \in U$ with $f_N(e) < N(\phi - \epsilon)$, where $0<\epsilon\leq \phi<1$.
\end{definition}
Accordingly, $F$ must contain all elements with frequency of occurrence higher than $N\phi$ but may also include elements that occur at least $N(\phi - \epsilon)$ times due to approximation-induced errors.

Also specified in ~\cite{manku_approximate_2002}, a closely related definition concerns the estimated frequency count of an individual element.

\begin{definition}
\label{def:frequencyestimation}
Given a stream $\mathcal{S}$ of $N$ elements, an \emph{$\epsilon$-approximate frequency estimation} denoted $\hat{f_N}(e)$, of the true frequency count of $e\in U$ is bounded: $f_N(e) \leq \hat{f_N}(e) \leq f_N(e) + N\epsilon$. 
\end{definition}
In other words, the estimated count of an element is bounded from above by the sum of the elements' actual frequency count and a fraction of the stream length, as decided by the $\epsilon$-factor.

Algorithms that report the  $\epsilon$-approximate $\phi$-frequent elements described in definitions \ref{def:epsapprox} and \ref{def:frequencyestimation} generally provide at least two operations:
\begin{itemize}[leftmargin=*]
\setlength\itemsep{0em}
    \item \textbf{\emph{Update(e,w)}}: element $e$ is processed, registering its occurrence in the stream. If \emph{weighted} updates are supported, then $w$ is the number of simultaneous arrivals of $e$ to update the data structure with.
    \item \textbf{\emph{Query($N,\phi $)}}: Returns $F$ from definition \ref{def:epsapprox} with estimated frequency count of each individual element $e \in F$ adhering to the bounds in definition \ref{def:frequencyestimation}.
\end{itemize}

These algorithms can be distinguished into two classes:

\noindent\textbf{Counter-based Algorithms:} which are generally deterministic and keep a fixed-size set that contains tuples of an element and its estimated occurrence in the stream. As an individual element is observed, its associated estimated occurrence is incremented. The sets' fixed size demands a tactic for managing the occurrence of an element not in the set while the set is full. The tactic often leads to an incurred error but can be chosen such that the error is minimized depending on the area of use. Keeping more counters reduces the error and vice versa, highlighting the \emph{memory space/error trade-off} present in all approximate synopsis data structures. 
Prominent such algorithms are the \emph{Frequent Algorithm} (also known as the \emph{Misra-Gries Algorithm}) \cite{misra_finding_1982,goos_frequency_2002,karp_simple_2003}, \emph{Lossy Counting} \cite{manku_approximate_2002}, and \emph{Space-Saving} \cite{metwally_efficient_2005}.

\noindent\textbf{Sketch-based Algorithms:} 
Use randomized hashing to compress the stream state into a 1- or 2-dimensional array of counters. 
Updating the sketch involves computing a hash value for the incoming element. A counter corresponding to the hash value is then, e.g., incremented or decremented, concluding the update operation. Since the same hash value can be computed for multiple different elements (i.e., hash collisions), the compressed stream state is encoded in the counters. Element queries are carried out by computing some statistic, e.g., minimum or median, on the counters mentioned above. Heap data structures can be used to supplement sketches and track frequent elements. Similar to counter-based algorithms, sketch-based ones also imply a memory space/error trade-off: increasing the memory consumed by the sketch can reduce the error and also increase the probability of remaining within the error bounds 
and vice versa. Influential sketch-based algorithms that can form the components of a solution to the frequent elements problem include the \emph{Count-Min Sketch} \cite{cormode_improved_2005} and \emph{Count Sketch} \cite{charikar_finding_2004}.\looseness=-1

\noindent
\textbf{Comparison:}
\label{sec:comparison_spacesaving}
Counter-based algorithms demonstrate superior accuracy per memory byte when processing a continuous stream of positive updates compared to sketch-based algorithms~\cite {hsu2019learning}. Moreover, Space-Saving has been shown to guarantee better accuracy than both Frequent and Lossy Counting while having better or equivalent throughput compared to both of the latter~\cite{cormode_finding_2008}. 
\looseness=-1

\noindent\textbf{Concurrency model:}
For our algorithm design, we consider a system with multiple cores and a set of sequential threads $t_{1},...,t_{T}$, that does not arbitrarily fail or halt, capable of communicating using asynchronous shared memory, supported by a coherent caching model.
Threads in the system can perform one of two operations at a time: updates by consuming elements from the input stream or responding to a frequent elements query.

\noindent\textbf{Performance metrics:} Metrics for synopsis algorithms include both \emph{time/space efficiency} and \emph{error} ~\cite{cormode_finding_2008,manerikar_frequent_2009,metwally_efficient_2005}. Key time/space metrics are \emph{latency} (duration of operations), \emph{throughput} (number of updates or queries carried out per unit of time), \emph{scalability} (the ability to utilize efficiently multiple threads), and \emph{consumed memory}.
Regarding error metrics, the aforementioned $\epsilon$ factor can be tuned to impact the \emph{precision} (fraction of 
 relevant elements reported out of all reported elements), \emph{recall} (fraction of relevant elements reported out of all relevant elements), and \emph{average relative error} (the average of all per-element absolute estimation errors divided by each actual in-stream occurrence). Further, we need to consider the effects of concurrency on estimation error, which we discuss in the following section, where we analyze the problem relating to key trade-offs.
 
\section{Problem analysis}
\label{sec:problem_analysis}
Several works present designs that utilize parallelism for the frequent elements problem \cite{zhang_efficient_2014, TopKapi,cafaro_parallel_2016,das_cots:_2009, Augmented_Sketch}. The previous designs can be distinguished into those based on \emph{Global Data Structure} and \emph{Thread-local Data Structures}. 
We discuss their associated trade-offs and describe the problem of query accuracy as it relates to concurrency. Lastly, we motivate the need for a new approach.

\subsection{Global Data Structure}
This category's design features a singular synopsis data structure that all threads access through query and update operations, making efficient synchronization methods an integral part of the algorithm design.
The memory footprint is similar to sequential frequent elements algorithms, with any excess memory being attributed to inter-thread synchronization mechanisms.
The main benefit is that a query can be answered directly by accessing the singular synopsis tracking the $\epsilon$-approximate $\phi$-frequent elements.

The {Cooperative Thread Scheduling Framework (CoTS)}~\cite{das_cots:_2009} and its multi-stream extension~\cite{parallel_space_saving} belong to this category. 
The design features a method for enabling thread cooperation rather than contention, achieved by threads getting help in accomplishing their update operations by the one thread currently accessing the data structure at any instant. 
The works lack discussion on the effect overlapping updates and queries have on query accuracy and rely on the analysis of the sequential Space-Saving algorithm.

\subsection{Thread-Local Data Structures}
The thread-local data structures category contains designs utilizing several synopsis data structures, one for each thread. Since each thread only updates a local synopsis data structure, the update rate scales well with the number of threads and can be carried out without synchronization.
Due to duplication of data structures, designs in this category fail to maintain the same accuracy guarantees as a sequential algorithm without consuming memory space on par with a sequential solution multiplied by the number of threads.
Moreover, to perform a query, a thread has to merge multiple synopses, which can become a predominant factor of latency when the number of threads is high.

The Topkapi sketch~\cite{TopKapi} and the parallel algorithm in~\cite{cafaro_parallel_2016} utilizing Space-Saving follow this approach; worth noting is that neither of them support overlapping queries and updates. The PRIF algorithm in \cite{zhang_efficient_2014} allows overlapping queries and updates. Its memory-intensive design features several thread-local data structures that periodically update a single large data structure containing the final synopsis. While the memory/accuracy trade-off is rigorously examined, query throughput and concurrency guarantees are not discussed.

\subsection{Accuracy and Consistency}\label{sec:concurrency-awareX}
Since synopsis queries target \emph{approximate} output,
it can be observed that strict consistency requirements and associated synchronization can induce an overly excessive, partly unnecessary overhead in the concurrent setting.

Since there is no, to our knowledge, concurrency-aware definition of the set of $\epsilon$-approximate $\phi$-frequent elements, we find it natural to explore how such a definition could be formulated and its implications.
Works on concurrency-aware semantics discuss notions such as {regularity}, {intermediate value linearizability}, $k$-out-of-order relaxation and more~\cite{nikolakopoulos2015consistency, RinbergKeidarDisc2020, rinberg2023intermediate, k-relaxations}.

These specifications model queries/operations on objects that return values complying with ``observing" associated subsets of update operations relative to the sets implieed by linearizability or sequential consistency.
Such concurrency models inspire further accuracy and consistency analysis of queries involving overlapping updates.

\subsection{Need for a Balancing Approach}
\label{sec:challenges}
The categories mentioned earlier represent two contrasting perspectives. Firstly, in the global-data structure approach, the frequent elements are tracked in a single data structure that can be scanned quickly, enabling high query throughput. However, the single data structure can become detrimental when processing high-velocity data streams due to the synchronization overhead involved in supporting concurrent multi-thread access. Secondly, the thread-local approach allows for high update throughput and scalability since multiple threads can process stream elements in parallel. However, the frequent elements are scattered across several distinct data structures that occupy precious bytes of memory and require latency-inducing assembly upon querying.

Clearly, there is a need for a concurrent approach to the frequent elements problem that balances and preferably combines, to a large extent, the favorable properties of both categories favorably: a high update/query throughput, low memory space, high accuracy, and low query latency solution to the highest possible extent. Moreover, an essential missing part in the literature is a thorough analysis (both theoretical and empirical) of the effects of overlapping updates and queries. This analysis can serve as a tool for further exploration of the involved trade-offs.
To this end, we identify a set of challenges in designing a balanced approach for the frequent elements problem, namely to enable:
\begin{description}
\setlength\itemsep{0em}
    \item[{[}C1{]}\label{challenge:throughput_latency}] 
    high query and update throughput without impacting query latency;    
    \item[{[}C2{]}\label{challenge:memory_accuracy}] 
    parallelism without an inherent impact on memory and accuracy;
    \item[{[}C3{]}\label{challenge:accuracy_bound}]reasoning about accuracy guarantees in the presence of overlapping updates and queries.
\end{description}

We discuss how we address these challenges in the following section, where we present our method. The properties of our proposed method regarding challenge \ref{challenge:accuracy_bound} are further discussed in Section \ref{sec:algo_impl_analysis}.
\section{\algorithmname}\label{sec:deleg_space_saving}
In this section, we present our proposed method, an accuracy-preserving multi-threaded design for finding the frequent elements of a stream, supporting concurrent updates and queries.
Due to the properties of Space-Saving as mentioned in Section~\ref{sec:preliminaries}, we have chosen it as a component in our design, hereafter referred to as~\algorithmname (\shortalgorithmname{}).

We begin with an overview of the \shortalgorithmname~design and its components along with some auxilliary concepts, followed by our sequential Space-Saving algorithmic implementation, with latency optimizations as motivated in Section~\ref{sec:challenges}, and finally, we describe both the update and query procedure of \shortalgorithmname in detail.

\subsection{Design Overview}
\label{sec:designOverivew}

In addressing the challenges listed in Section \ref{sec:challenges}, several design choices were made, which we outline here and connect to the method we are about to present.

To address challenge \ref{challenge:throughput_latency}, we present our proposal to reduce query latency and improve query throughput drastically, namely the \emph{\subalgorithmname} (\shortsubalgorithmname) algorithmic implementation, an integral part of our method that diminishes overlaps of queries with concurrent updates as a consequence of its design.
By making use of the min-max heap data structure~\cite{atkinson_min-max_1986}, we facilitate finding both the element with the least count, which is essential to the Space-Saving update procedure, and the elements with the largest counts, promoting high query throughput.
This is achieved as the min-max heap groups elements with similar counts together, allowing them to be swiftly selected during a query.

Challenge \ref{challenge:memory_accuracy}, which relates to maintaining accuracy and space guarantees in a concurrent setting, is addressed by designing our approach to adopt domain splitting \cite{stylianopoulos_delegation_2020}. The purpose of this scheme is to delegate responsibility for a subset of the domain of possible elements to each thread. As we show in Section~\ref{sec:algo_impl_analysis}, Space-Saving's accuracy and memory efficiency benefit greatly from this.
This adoption implies a new challenge regarding running a global query over parts of the data structure maintained by different threads; we explain how we address this through efficient synchronization in section \ref{sec:fequeries}.

To address challenge \ref{challenge:accuracy_bound} (in conjunction with \ref{challenge:throughput_latency}),
focusing on ensuring high concurrent update and query throughput with known accuracy guarantees, \shortalgorithmname supports both a) concurrent updating and b) concurrent querying. 

In particular, when it comes to a), to maintain high update throughput, we wish to minimize the need for costly synchronization between threads. To this end, we base our update method on a thread-cooperation technique~\cite{stylianopoulos_delegation_2020} that promotes thread-local updates to a large extent.
The technique involves associating a series of lightweight thread-local \emph{filter} data structures with each thread that acts as buffers for elements owned by other threads. This approach makes inter-thread synchronization necessary only when filters are full; at this point, the filter is transferred to the owner thread and fed as updates to a reserved \shortsubalgorithmname algorithm. 
We extend this technique to support \emph{global queries}, which we describe with proper context in more detail in the upcoming subsection~\ref{sec:updates}.

As for b), the objective is to collect the frequent elements from each thread's reserved \shortsubalgorithmname algorithm while causing as little contention as possible. Therefore, the query procedure is designed to be lightweight. This leads to minimal contention on the \shortsubalgorithmname algorithm data structures in memory in conjunction with buffered updates. Furthermore, the risk of overlapping thread access is reduced, promoting high throughput and low latency. We analyze the consistency-related implications of our query method in greater detail in Section \ref{sec:algo_impl_analysis}.
The method for updating and querying \shortalgorithmname is based on the auxiliary concepts presented in the following subsection.

\subsection{Auxiliary Concepts}

Besides query optimization that targets parallelism-aware accuracy, \shortalgorithmname builds on two concepts from \cite{stylianopoulos_delegation_2020}, which we reiterate here, for self-containment.

\noindent
\textbf{Domain Splitting}
logically divides $U$, the domain of possible elements, into equally sized subsets.  Ownership is then distributed to each of the $T$ dispatched threads using the function $owner: U \to \{1..T\}$, which maps an element from the input domain to a thread-id. 
partitions $U$, the domain of possible elements, into subsets and distributes ownership to each of the $T$ dispatched threads using the function $owner: U \to \{1..T\}$.
The subdomain of elements owned by thread $i$ is therefore $U_i = \{e \in U\ |\ owner(e) = i\}$. 

\noindent
\textbf{Delegation Filters} facilitate efficient inter-thread communication by buffering elements owned by other threads. Full filters are handed to the thread owning the contained elements. For each of the $T$ dispatched threads, a series of $T$ Delegation Filters are reserved for each thread, arranged in a $T \times T$ matrix. Delegation filters are small and of fixed size to maintain a low memory footprint. 

At this point, we depart from the approach taken in \cite{stylianopoulos_delegation_2020} and describe the foundations of our method in the following subsections, starting with our query-optimized Space-Saving algorithmic implementation in the following subsection.

\subsection{\subalgorithmname}
\label{sec:QOSS}
Emphasizing improved query processing timeliness, we propose our  \subalgorithmname (\shortsubalgorithmname) algorithmic implementation of Space-Saving. \shortsubalgorithmname retains the accuracy guarantees and memory requirements of  Space-Saving, while using optimized underlying data structures and query procedures. 

For self-containment, first 
we summarize  the {basics of Space-Saving} \cite{metwally_efficient_2005} for the $\epsilon$-approximate frequent elements problem: a set of $m$ tuples of the form $(e,\hat{f(e)})$ are kept. If an element $e$ that is in the set arrives, the associated estimated count $\hat{f(e)}$ is incremented. If $e$ is not in the set, and the set contains fewer than $m$ tuples, then  $(e,1)$ is added to the set; if the set contains $m$ tuples, the one with the least counter, $(e_{min},\hat{f(e)}_{\{min\}})$ is identified, and the tuple $(e_{new},\hat{f(e)}_{min}+1)$ takes its place, effectively replacing $e_{min}$ and incrementing the estimated count by 1. 
When queried, a set of tuples containing an estimated frequency count greater than $N\phi$ are output. 
Setting the number of counters to $m=\frac{1}{\epsilon}$ 
ensures that the set of $\epsilon$-approximate $\phi-$frequent elements (cf. Definitions \ref{def:epsapprox} and \ref{def:frequencyestimation}) is returned \cite{metwally_efficient_2005}.

Efficient algorithmic implementations of Space-Saving exist through the \emph{Space-Saving Linked List} (SSL) and \emph{Space-Saving Heap} (SSH) data structures, extensively evaluated in \cite{cormode_finding_2008}. 
SSL, similar to the \emph{Stream-Summary} in \cite{metwally_efficient_2005}, keeps a linked list sorted on the estimated frequency of elements. SSH keeps elements in a \emph{min-heap} of size $m$, which makes finding the element with the least count possible in $O(1)$ time. The hash map data structure is also time-efficient.
The evaluation argues that SSH is somewhat slower than SSL but requires significantly less memory space. Moreover, SSH supports \emph{weighted updates}, which SSL does not. Such updates allow several/weighted counts of the same element to take place at the same computational cost as a single element. This makes SSH the option of choice in the weighted setting~\cite{anderson_high-performance_2017}, also needed here.

However, notice a \emph{shortcoming}: to answer a query, an array of size $m$ has to be traversed in SSH. At each traversal, the element count is compared to a threshold value to decide if the element belongs to the output set, which requires time linear in $m$. 

To alleviate this shortcoming and {address \ref{challenge:throughput_latency} of Section~\ref{sec:challenges}},  the \emph{\subalgorithmname} (\shortsubalgorithmname) keeps a \emph{min-max heap}~\cite{atkinson_min-max_1986} data structure of size $m$ with  the counter count satisfying that: 
\begin{itemize}[topsep=0pt, leftmargin=*]
\setlength\itemsep{0em}
    \item at an even level (min-level) it is less than all of its descendants;
    \item  at an odd level (max-level) it is greater than all of its descendants.
\end{itemize}
The min-max heap allows: a) finding the least element in $O(1)$, which is essential to perform update operations quickly, and b) performing a query in $O(|F|)$ time, where $|F|$ is the number of frequent elements from definition \ref{def:epsapprox}. The number of elements in $F$ is commonly significantly less than $m$, especially when the input stream is skewed. This modification introduces a slight per-element processing overhead, overshadowed by the overall throughput gain when queries are repeatedly carried out while high-rate streams are processed.

\begin{figure}[tb!]
\begin{center}
    \includegraphics[width=0.70\columnwidth]{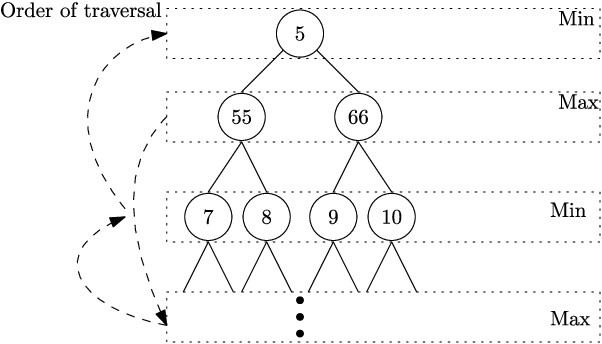}
\end{center}
    \caption{A binary min-max tree with alternating levels. The dashed arrows depict the traversal order during a \shortsubalgorithmname query.
    \label{fig:QOSS_traversal_order}
    }
\end{figure}

\begin{algorithm}[t!]
\small
\caption{Algorithmic implementation of the Query Optimized Space-Saving (QOSS) algorithm using a binary min-max heap 
\label{algo:qoss}
}
\begin{algorithmic}[1]
\Function{$InitializeQOSS$}{$\epsilon \in [0,1]$}
\State $m \gets \lceil\frac{1}{\epsilon}\rceil$
\State $m \gets 4 \lfloor \frac{m}{4} \rfloor + 3$ \Comment{All nodes have 3 or 0 grandchildren}
\State let $H$ be a min-max heap of size $m$ of counters initialized to ($\varnothing,0)$
\State let $M$ be a hash map of size $O(m)$ of pointers to counters
\EndFunction
\Function{$UpdateQOSS$}{$e \in U$, $w \in \mathbb{N}$}
\If {$(e,\hat{f(e)}) \in \textit{M}$}
    \State $i \gets $ \textit{M.Find(e)}
    \State $H[i] \gets (e,\hat{f(e)}+w)$ 
\Else 
    \State $(\_,\hat{f(e)}_{min}) \gets H[1]$
    \State $H[1] \gets (e,\hat{f(e)}_{min}+w)$
\EndIf
\State Ensure min-max heap property of $H$ is maintained
\EndFunction
\Function{$QueryQOSS$}{$\phi\in [0,1]$, $N\in\mathbb{N}$}
\State initialize empty stack
\State V $\gets \varnothing$
\State push 2 and 3 to stack
\While{stack is not empty}
    \State $i$ $\gets$ stack.pop()
    \State $(e,\hat{f(e)}) \gets H[i]$
    \If{$\hat{f(e)} >=\phi N$}
        \State Output $(e,\hat{f(e)})$
        \State traverse\_next\_level()
    \EndIf
\EndWhile

\EndFunction

\Function{$traverse\_next\_level$}{}
\If{$\lfloor \log_2(i)\rfloor \equiv 1 \mod 2$} \Comment{i is on a max-level}
    \If{$4i+3 <= m$} \Comment{i has grandchildren}
        \State push $4i + j, j\in \{0,1,2,3\}$ to stack
    \Else
        \If{$2i+1 <= m$} \Comment{i only has children}
            \State push $2i + j, j\in \{0,1\} $ to stack
        \Else \Comment{$i$ has no children or grandchildren}
            \State push $\lfloor \frac{i}{2} \rfloor$ to stack if $\lfloor \frac{i}{2} \rfloor \notin V$
            \State $V \gets V \cup \{\lfloor \frac{i}{2} \rfloor\} $ 
        \EndIf    
    \EndIf
\Else \Comment{i is on a min-level}
    \If{$\lfloor \frac{i}{4}\rfloor > 0$} \Comment{i has a grandparent}
        \State push $\lfloor \frac{i}{4}\rfloor$ to stack if $\lfloor \frac{i}{4}\rfloor \notin V$
        \State $V \gets V \cup \{\lfloor \frac{i}{4}\rfloor\}$
    \EndIf
\EndIf

\EndFunction

\end{algorithmic}
\end{algorithm}

Algorithm~\ref{algo:qoss} shows the \shortsubalgorithmname pseudocode and a description of it follows here.

\noindent\textbf{Initialization:} As can be seen in Algorithm~\ref{algo:qoss} lines 1 through 5, the number of counters, $m$, is calculated using the desired $\epsilon$-factor. In line 3, steps ensure that each node has either 3 or 0 grandchildren. A counter is a tuple of an element identity $e$ and an estimated count $\hat{f(e)}$. The $m$ counters are arranged in a tree structure organized by levels in the min-max heap $H$ (see Algorithm~\ref{algo:qoss} line 4). Counters can be accessed through a hash map $M$, initialized at line~5.

\noindent\textbf{Updates:} 
Lines 6 through 12 detail the update procedure. If $e$ is found through hash-table lookup, $\hat{f(e)}$ is incremented by $w$, the weight of the update. 
Otherwise, $e$ takes the place of the counter with the least count, $\hat{f(e)}_{min}$, which is incremented by $w$. 
If the min-max heap property of $H$ is broken, it will be restored by performing at most $O(log(m))$ swaps involving $(e,\hat{f(e)})$.

\noindent\textbf{Queries:}
The query procedure is designed to avoid excessive counter-threshold comparisons by using the aforementioned min-max heap property. The strategy involves examining max-level counters first since they have a higher count and are, therefore, more likely to be in $F$ (see Figure \ref{fig:QOSS_traversal_order}). A stack data structure facilitates traversal, along with a set $V$ to track visited parent and grandparent counters when traversing min-levels (lines 15 and 16). Traversal of the data structure begins at each of the children of the root counter, i.e., the counters with the largest count (line 17). If the count of a counter encountered on a max-level is greater than $N\phi$, it is output (lines 21-23), and traversal continues. Otherwise, all descendant counters are less than $N\phi$, and the traversal halts. Traversals reaching the bottom-most max-level continue by examining counters in min-levels upwards until the root counter is reached (lines 31-32 and 34-36). To avoid duplicate comparisons when performing an upward traversal from a (grand-)child to a (grand-)parent, each examined counter is added to a set $V$ of visited counters (lines 33 and 37).

\noindent\textbf{Query Time Complexity:} Using an underlying binary min-max heap, at most $5|F|$ counter-comparisons are performed during a query. This can be verified by selecting any subtree with a root node $r \in F$. Line 26 describes that no more than four counter-comparisons are performed. Performing this procedure for each $r \in F$ gives at most $|F| + 4|F| = 5|F|$ counter-comparisons. This approach can greatly reduce the number of counter-threshold comparisons when responding to a query in practice, mainly when the input follows a skewed distribution. 

The effects of the \shortsubalgorithmname improvements are studied in Section \ref{sec:evaluation}, comparing the algorithmic implementation to a baseline and evaluating the query latency and throughput.
A better query latency implies less overlapping with concurrent updates, facilitating improved accuracy. 

\begin{figure*}[htb!]
    \includegraphics[width=1\columnwidth]{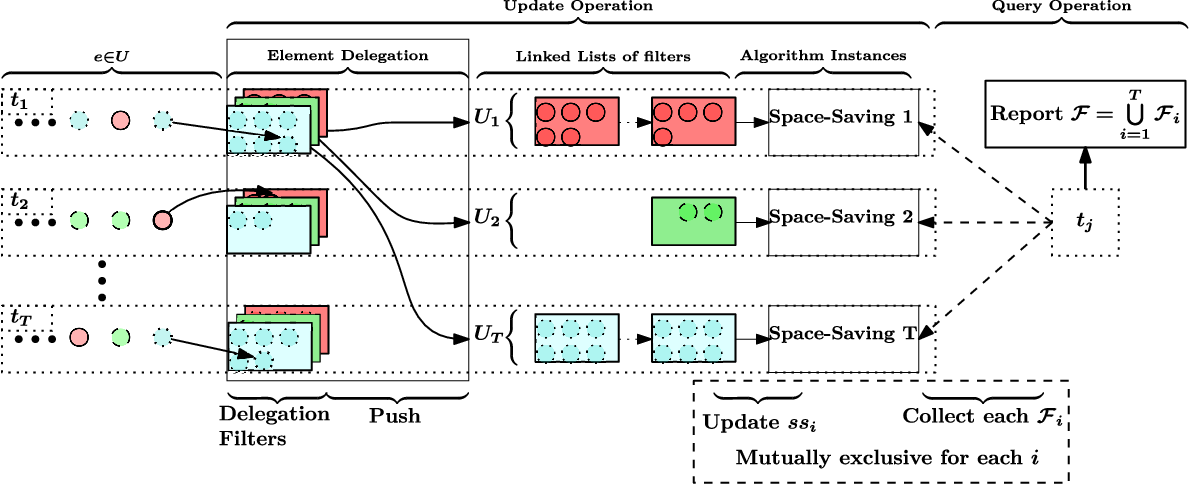}
    \caption{Overview of the update and query operations. Thread $t_1$ transfers full filters to the owner-threads for subsequent insertion into the reserved thread-local \shortsubalgorithmname data structures. Queries are mutually exclusive with insertions and gather the subset of frequent elements tracked by each thread into $F$.}
    \label{fig:overview_ins}
\end{figure*}

\begin{algorithm}[t!]
\small
\caption{Update operation on thread $j$. Constants $D$ and $E$ are user-defined and describe the filter size and the maximum number of processed elements per thread before handover.
}
\label{algo:update}
\begin{algorithmic}[1]
\Function{$UpdateQPOPSS$}{Element e}
\State $i \gets \textit{Owner }\text{(e)}$
\State $Filter \gets Threads[i].DelegationFilters[j]$ 
\State \newtext{(this $Filter$ is reserved for thread $j$)}
\If {$e \in \textit{Filter}$} 
    \State Increment count of e
\Else 
    \State Add e in Filter
    \State Set count of e to 1
\EndIf
\State Increment $N[j]$ by 1
\State Increment $c$ by 1
\State ($c$ counts updates since last handover)
\If {$Filter.size = D$ or $c = E$}
    \State Push all filters reserved for $j$ to respective owners linked list 
    \While {There are unflushed filters reserved for $j$ }
        \State $process\_pending\_updates()$ 
    \EndWhile
    \State $c \gets 0$
\EndIf
\EndFunction
\end{algorithmic}
\end{algorithm}

 \subsection{Concurrent Updates}
\label{sec:updates}
This section provides a detailed discussion of the update procedure in presence of concurrency, consisting of multiple algorithmic components arranged in a pipeline (see Figure \ref{fig:overview_ins}).
An arbitrary thread $j$ that processes an input-element $e$, owned by a thread $i=owner(e)$ (line 2, Algorithm \ref{algo:update}) inserts $e$ in the Delegation Filter of thread $i$ that is reserved for thread $j$ (lines 3-9).
 
 If an element is not already in the filter, it is added, with its count set to one. Otherwise, the count is incremented once. To keep track of the number of processed elements, both the counters $N[j]$
 (total filter insertions of the thread $j$) and $c$ (insertions since the last thread $j$ handed over its filter to the respective owners) are incremented once (lines 10 and 11). 
 The filters can be implemented as two arrays containing an element and count at the same index position (similar to Content-Addressable Memory \cite{pagiamtzis_content-addressable_2006}). 
 The length of the arrays is $D$, representing the number of distinct elements it can hold. If $D$ is kept small, the count of an element can be found efficiently by a simple linear search.

To enhance scalability and prevent filter staleness (the latter is elaborated in more detail in the next subsection),
elements are inserted in the appropriate Delegation Filters until any of these conditions is met:
a)~a filter becomes full by containing $D$ elements, or b)~$E$ elements have been processed by the thread since the previous handover. 
Either of these conditions (line 11) leads to the thread handing over all filters to their respective owners (line 12).
Handing over and flushing all filters periodically, especially in a skewed input distribution, enhances query accuracy since filter element counts are excluded from the query result (see Section \ref{sec:fequeries}).
A filter is handed over to the owner by pushing it to a multiple-producer single-consumer concurrent linked list reserved for each thread.
Thread $i$ continuously processes its own \emph{pending updates} until all its filters are marked empty (lines 15-16) before it resets $c$ to zero (line 17) and processes the next input.
 
 \algnewcommand{\LineComment}[1]{\State \(\triangleright\) #1}

\begin{algorithm}[tb!]
\small
\caption{Processing pending updates on thread $i$}
\label{algo:process_pending_updates}
\begin{algorithmic}[1]
\Function{$process\_pending\_update$}{}
\If {Threads[i].LinkedList empty}
    \State return
\EndIf
\If {Try-lock of Threads[i] taken
}
    \State return
\EndIf
\While {Threads[i].LinkedList is not empty}
    \State $Filter \gets Threads[i].LinkedList.pop()$ 
    \For{$each\ (Element\ e\ ,Weight\ w) \in\ Filter$}
        \State Threads[i].UpdateQOSS(e,w)
    \EndFor
    \State $Empty\ Filter$
    \State $Filter.size \gets 0$
\EndWhile
\State Threads[i].mutex $\gets 0$

\EndFunction
\end{algorithmic}
\end{algorithm}
 
\noindent\textbf{Processing pending updates:}
Each thread periodically checks for ready filters to process in its linked list, the absence of which causes immediate function termination (line 2-3 in algorithm \ref{algo:process_pending_updates}). Inversely, ready filters are processed given the successful acquisition of a thread-specific try-lock mutex (line 4-5), preventing possible data races due to concurrent update and query (see Sect.~\ref{sec:fequeries}) operations targeting the thread-local \shortsubalgorithmname instance. The lock is low in contention since it is only tested when the linked list contains a ready filter or when a query is carried out and can be implemented with a simple test-and-set variable. The elements of each filter are fed as weighted updates to the \shortsubalgorithmname instance (lines 8-9), and the filter is marked as empty (lines 10-11). Finally, the lock is released (line 12), allowing a querying thread to read the \shortsubalgorithmname data structure.

\subsection{Concurrent Frequent Elements Queries}
\label{sec:fequeries}
\begin{algorithm}[b!]
\small
\caption{Query operation on thread $j$}\label{pseudocode:query}
\label{algo:frequent_elements_query}
\begin{algorithmic}[1]
\Function{$QueryQPOPSS$}{$\phi\in [0,1]$}
\State $N \gets sum(N_i), i \in \{1..T\}$
\While{there exists QOSS$_i$, $i \in \{1..T\}$ not yet queried}
    \If{ Try-lock of Threads[i] taken 
    }
        \State Try another thread
    \Else
        \State Threads[i].QueryQOSS($\phi$,$N$)
        \State Release try-lock of Threads[i]
    \EndIf
    \State process\_pending\_updates()
\EndWhile
\State Output frequent elements

\EndFunction
\end{algorithmic}
\end{algorithm}

Any thread out of the $T$ dispatched ones may answer a frequent elements query while other threads are updating or querying (i.e., concurrently), the implications of which are discussed in Sect. \ref{sec:algo_impl_analysis}. A query aims to report the set of frequent elements from definition \ref{def:epsapprox}.
To calculate the threshold value $N\phi$, needed to select the frequent elements efficiently, the query procedure begins by calculating an estimate of the stream length, $N$. This value is computed as the sum of the elements processed by each thread, $N[i]$ (line 2 in Algorithm \ref{algo:frequent_elements_query}). To collect the subset of frequent elements tracked by each thread-local \shortsubalgorithmname algorithm, the querying thread tries to acquire the test-and-set-lock associated with each thread (line 3). Once the lock of a thread has been acquired, a query is issued to the corresponding \shortsubalgorithmname algorithm (line 7).
 
Recall that the try-lock acquisition is a non-blocking action; if a thread cannot acquire it immediately, it will simply retry later. Meanwhile, the thread can do other useful work, such as processing its pending updates (line 9).
Due to the aggregation of elements in Delegation Filters, the contention on these accesses is low.
Furthermore, in the \shortsubalgorithmname algorithmic implementation (section~\ref{sec:QOSS}), a query is processed in $O(|F|)$ time, where $|F|$ is the number of frequent elements, reducing the contention further.

\noindent
\textbf{Query scalability enhancement:}
We identify a performance tradeoff in the design of Delegation Filters: During a query, buffered element occurrences in the Delegation Filters can be ignored to improve query speed and overall throughput. However, this introduces a slack between when an element is first observed and subsequently reported in a query. By bounding the maximum sum of element counts in a Delegation Filter by a constant $E$, we can guarantee a fixed handover delay, which can be kept low concerning the usual inaccuracy inherent to data synopses.
At the same time, we want to maximize the portion of elements inside \shortsubalgorithmname data structures. We achieve this by introducing the aforementioned mechanisms concerning $D$ and $E$ for promptly handing over filters to owner threads at a fixed rate.
Indeed, initial experiments suggested that exploring this tradeoff yielded up to 1.73x higher throughput and 0.5x lower query latency (with 24 threads, $\phi=10\epsilon = 0.0001$, E=1000, D=32, querying on average once per 10 updates using real IP-packet data), compared to the non-enhanced approach, while bounding the reporting delay to concern less than $E$ element occurrences times the number of threads. 
The side effect on the approximate output introduced by the enhancement diminishes rapidly with the length of the execution, as we show in Section~\ref{sec:algo_impl_analysis}.

\section{\newtext{ Analysis}}
\label{sec:algo_impl_analysis}
Having described our method for estimating the frequent elements, we now focus on the space requirements and estimation guarantees under concurrent updates and queries. To aid us in this task, we define a set of symbols common to our analysis in table \ref{table:symbols}.

We initiate the discussion using a meta-lemma containing useful lemmas and theorems from \cite{metwally_efficient_2005}, that apply to the \shortsubalgorithmname algorithm.
\begin{lemma} \shortsubalgorithmname preserves the following properties (implied from the respective lemmas and theorems in ~\cite{metwally_efficient_2005})
    \label{claim:meta_imported}
    \begin{enumerate} [topsep=0pt, leftmargin=*]
    \setlength\itemsep{0em}
        \item (From Lemma 3.3 in \cite{metwally_efficient_2005}) If the number of counters $m$ is chosen such that $m = \frac{1}{\epsilon}$, then the minimum counter value of \shortsubalgorithmname, denoted as $ F_{\min}$, is less than or equal to $\lfloor N \epsilon \rfloor$, where $N$ is the length of the stream.
        \item (From Theorem 3.5 in \cite{metwally_efficient_2005}) Any element that occurs more than $F_{min}$ times in $\mathcal{S}$ is guaranteed to be tracked by \shortsubalgorithmname.
        \item (From Lemma 3.4 in \cite{metwally_efficient_2005}) 
        Elements tracked by \shortsubalgorithmname are overestimated by at most $F_{min}$. In other words: $f(e) \leq \hat{f}(e)\leq f(e) + \epsilon N$.
        \item (From Lemma 4.3 in \cite{metwally_efficient_2005}) 
        Any element that occurs in $\mathcal{S}$ more frequently than the maximum possible value of $F_{\min}$ is guaranteed to be reported, regardless of stream order.
    \end{enumerate}
\end{lemma}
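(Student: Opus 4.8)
The plan is to observe that \shortsubalgorithmname is, at the level of the abstract summary it maintains, identical to the classical Space-Saving summary of \cite{metwally_efficient_2005}, so each of the four cited results transfers verbatim. The argument splits into three parts: (i) the min-max heap $H$ together with the hash map $M$ faithfully implements the Space-Saving counter set; (ii) a weighted update reduces to a sequence of unit updates, so that the analysis of \cite{metwally_efficient_2005}, which is stated for unit updates, applies; and (iii) the rounding of $m$ in the initialization only increases the number of counters, so the bounds stated for $m=\frac{1}{\epsilon}$ continue to hold (indeed a fortiori) for the $m$ actually used.

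For (i), I would argue by induction on the length of the processed prefix that the multiset $\{(e,\hat f(e))\}$ stored by \shortsubalgorithmname equals the multiset stored by Space-Saving with the same number of counters on the same input. In the inductive step, the update procedure of Algorithm~\ref{algo:qoss} does exactly what Space-Saving does: if $e$ is currently monitored, which is detected via $M$, its counter is incremented by $w$; otherwise $e$ replaces the minimum-count entry, whose value $\hat f(e)_{min}$ is read in $O(1)$ time from the min-level root $H[1]$, with the counter set to $\hat f(e)_{min}+w$. The subsequent restoration of the min-max heap property performs $O(\log m)$ swaps that only permute the positions of counters within $H$ (updating the pointers in $M$ accordingly); it changes neither the set of monitored elements nor their counts. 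As in Space-Saving, the minimum-count entry may be one among several tied candidates, but the cited results of \cite{metwally_efficient_2005} are insensitive to this choice, so the particular tie-break induced by the heap layout is admissible.

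For (ii), a weighted update with weight $w$ produces the same post-state as $w$ consecutive unit updates of $e$: after the first unit update $e$ is monitored (it was either already present or it just displaced the minimum), so the remaining $w-1$ updates merely increment $e$'s counter, and the net effect matches the weighted update. Hence any stream with weighted updates is equivalent, for the purpose of the summary contents, to an expanded unit-weight stream of the same total length $N$, and the results of \cite{metwally_efficient_2005} apply to that expanded stream. For (iii), writing $m=4k+r$ with $r\in\{0,1,2,3\}$ after $m\gets\lceil\frac{1}{\epsilon}\rceil$ gives $4\lfloor m/4\rfloor+3 = m-r+3 \ge m \ge \frac{1}{\epsilon}$, so \shortsubalgorithmname runs with at least $\frac{1}{\epsilon}$ counters; since the minimum-counter bound of \cite{metwally_efficient_2005} is $F_{min}\le\lfloor N/m\rfloor$, monotone decreasing in $m$, it is only tightened to $F_{min}\le\lfloor N\epsilon\rfloor$, and items 2--4 likewise can only improve. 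With (i)--(iii) in place, items 1--4 are precisely Lemma 3.3, Theorem 3.5, Lemma 3.4 and Lemma 4.3 of \cite{metwally_efficient_2005}, instantiated on the Space-Saving summary realized by \shortsubalgorithmname.

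The main obstacle is conceptual rather than technical: part (i) must be stated carefully, because the heap fix-up and the ambiguity in ``the element with the least count'' are exactly the places where a data-structure-specific implementation could silently diverge from the semantics of Space-Saving, and the whole meta-lemma rests on ruling that out. The weighted-update reduction of (ii) is routine but must be made explicit, and the counter-count inequality of (iii) is a one-line computation.
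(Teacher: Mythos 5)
Your proposal is correct, and it supplies exactly the justification the paper leaves implicit: the paper states this meta-lemma without any proof, simply asserting that the properties are ``implied from'' the corresponding results of Metwally et al. Your three-part argument (faithful implementation of the Space-Saving summary by the min-max heap, reduction of weighted to unit updates, and the observation that the rounding of $m$ only adds counters and hence only tightens the bounds) is the natural and complete way to discharge that assertion, so it matches the paper's intended reasoning while being more explicit than what the paper provides.
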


\begin{table}[hb!]
\centering
\begin{tabular}{ | c | c | }
   \hline
   \emph{Symbol}    &    \emph{Description}\\
   \hline
      $\epsilon$ & User-specified approximation factor.\\
   \hline
    $\phi$ & User-specified support threshold.\\
    \hline
   $\mathcal{S}$ & The stream of elements.\\
   \hline
   $N$ & The stream length of $\mathcal{S}$. 
   \\
   \hline 
   $U$ & Domain of $\mathcal{S}$.\\
   \hline
   $e$ & An element of a stream.\\
   \hline
   $r(e)$ & The rank of a stream element.\\
   \hline
      $f_N(e)$ & The number of occurrences of $e$ in $\mathcal{S}$.\\
   \hline
    $\hat{f}_N(e)$ & The estimated number of occurrences of $e$ in $\mathcal{S}$.\\
    \hline
   $m$ & The number of counters in \shortsubalgorithmname.\\
   \hline
    $F$ & Set of elements and estimated occurrence in $\mathcal{S}$ tracked by \shortsubalgorithmname.\\
   \hline
   $F_{min}$ & Least estimated occurrence of an element in $\mathcal{S}$ tracked by \shortsubalgorithmname.\\
    \hline
    $T$ & Number of dispatched threads.\\
   \hline
   $\mathbb{D}(e)$ & The number of counts of $e$ in a Delegation Filter.\\
   \hline
    $E$ & Parameter controlling the number of elements in delegation filters.\\
   \hline
    $D$ & Number of slots in a delegation filter.\\
   \hline
   $\zeta$ & The Euler–Riemann function.\\
   \hline
   $a$ & Skew parameter for Zipf distribution.\\
 \hline
    $N_S$ & Stream length at the start of a query.\\
 \hline
    $N_E$ & Stream length at the end of a query.\\
 \hline
\end{tabular}
\caption{Descriptions of symbols used.}
\label{table:symbols}
\end{table}

The rest of this section adheres to the following structure: 
First, we show that maintaining accuracy requires fewer counters when the \shortsubalgorithmname algorithm observes a stream of elements belonging to a subset of the original domain of possible elements. Second, we describe the counter requirements of \shortalgorithmname, composed of multiple \shortsubalgorithmname.
Lastly, we focus on the consistency guarantees of the frequent elements in the face of concurrently overlapping queries and updates and provide consistency-implied accuracy bounds.

\subsection{Domain Splitting and Space Requirements}
We begin by analyzing the number of counters required by \shortsubalgorithmname to accurately report the frequent elements defined in Section \ref{sec:preliminaries} when processing elements from a split-domain stream, i.e. a stream where elements not in a specific subset of the universe of possible elements are omitted.

To this end, we introduce $\mathcal{S}_{1..x}$ as the $x$-prefix of an unbounded stream $\mathcal{S}$, containing the first $x$ elements.
Each symbol in the sequence, called $\mathcal{S}_i$ for each $i \in \{1..x\}$, can be found in $U \cup \{\varnothing\}$, the universe of possible elements in union with the null symbol. We include $\varnothing$ to denote the absence of an element, used for highlighting element-wise differences between streams. Furthermore, we introduce a function to transform a stream to a \emph{stream block} (analogous to a set block \cite{brualdi_introductory_2010}), containing only stream elements from a specific set block $B$ of a partition of $U$. What follows are three function definitions for constructing a stream block, counting the number of deleted elements in a stream, and finding the length of a bounded stream.
\[\text{block}(\mathcal{S},B) = \begin{cases} 
    \mathcal{S}_1\ \text{if}\ \mathcal{S}_1\in B\ \text{else}\ \varnothing & \text{if\ } x = 1 \\
     (\varnothing,\text{block}(\mathcal{S}_{2...x},B))    & \text{if\ } \mathcal{S}_1 \notin B  \\
    (\mathcal{S}_1,\text{block}(\mathcal{S}_{2...x},B)) & \text{if\ } \mathcal{S}_1 \in B \\
   \end{cases}
\]
\[\text{\#del}(\mathcal{S}) = \begin{cases} 
      1\ \text{if}\ \mathcal{S}_1 = \varnothing\ \text{else}\ 0 & \text{if\ } x = 1  \\
      \text{\#del}(\mathcal{S}_{2...x}) & \text{if\ } \mathcal{S}_1 \neq \varnothing \\
     1 + \text{\#del}(\mathcal{S}_{2...x})    & \text{if\ } \mathcal{S}_1 = \varnothing 
   \end{cases}
\]
\[\text{len}(\mathcal{S}) = \begin{cases} 
      0\ \text{if}\ \mathcal{S}_1 = \varnothing\ \text{else}\ 1 & \text{if\ } x = 1  \\
      1 + \text{len}(\mathcal{S}_{2...x}) & \text{if\ } \mathcal{S}_1 \neq \varnothing \\
     \text{len}(\mathcal{S}_{2...x})    & \text{if\ } \mathcal{S}_1 = \varnothing 
   \end{cases}
\]
Using these definitions, a stream block can be constructed as in the following example: if $U=\{a,b,c,d\}$, $\mathcal{S}_{1..5} \coloneqq a,a,b,d,c$, and $B = \{a,c\}$, then $\mathcal{B}_{1..5} \coloneqq$ $block(\mathcal{S},B) = a,a,\varnothing,\varnothing,c$,
\#del($\mathcal{B}$)=2, and len($\mathcal{B}$) $=3$.

From this point, intending to bound the number of counters needed by \shortalgorithmname to produce the frequent elements from Definition \ref{def:epsapprox}, we observe the relation between a stream's domain size and the minimum counter in \shortsubalgorithmname.
\begin{lemma}
\label{claim:block_counters}
When \shortsubalgorithmname observes the stream block $\mathcal{B} \coloneqq $ block$(\mathcal{S},B)$ and maintains $m=\frac{1}{T\epsilon}$ counters, the minimum counter is at most $\lfloor\frac{N}{\epsilon}\rfloor$,
if $|B| = \lceil \frac{|U|}{T} \rceil$.
\end{lemma}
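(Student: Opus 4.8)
The plan is to treat the block stream $\mathcal{B} = \text{block}(\mathcal{S},B)$ as the genuine input consumed by a single \shortsubalgorithmname instance and to reduce the claim to the counter-sum/averaging argument underlying property~1 of Lemma~\ref{claim:meta_imported} (Lemma~3.3 of \cite{metwally_efficient_2005}), applied with the effective precision parameter $T\epsilon$ in place of $\epsilon$. The key observation is that every \shortsubalgorithmname update increases the total of the $m$ counters by exactly the update weight: a hit adds $w$ to an existing counter, and a miss replaces the minimum value $\hat{f}(e)_{\min}$ by $\hat{f}(e)_{\min}+w$ (the update branch of Algorithm~\ref{algo:qoss}), a net increase of $w$. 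Since all counters start at $0$, after \shortsubalgorithmname has processed $\mathcal{B}$ the sum of the $m$ counter values equals the number of unit-weight occurrences it actually ingested, which is precisely $\text{len}(\mathcal{B})$ (the $\varnothing$ symbols contribute nothing, and the min-max heap reorganizations only permute counters).

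From the sum invariant the averaging step is immediate: the minimum of $m$ nonnegative integers summing to $\text{len}(\mathcal{B})$ cannot exceed their mean, so $F_{\min}\le \lfloor \text{len}(\mathcal{B})/m\rfloor$. Substituting $m = \tfrac{1}{T\epsilon}$ gives $F_{\min}\le \lfloor \text{len}(\mathcal{B})\,T\epsilon\rfloor$. I would then bound $\text{len}(\mathcal{B})\le N$: by the recursive definitions of $\text{block}$ and $\text{len}$, every retained symbol of $\mathcal{B}$ is a symbol of $\mathcal{S}$ and the deletions only shorten it, so the block length never exceeds the full stream length $N=\text{len}(\mathcal{S})$. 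Hence $F_{\min}\le \lfloor N\,T\epsilon\rfloor$. Finally, since $m=\tfrac{1}{T\epsilon}\ge 1$ forces $T\epsilon\le 1$, and since $\epsilon<1$ gives $1/\epsilon>1$, we have $T\epsilon\le 1< 1/\epsilon$ and therefore $N T\epsilon < N/\epsilon$, so that $\lfloor N T\epsilon\rfloor \le \lfloor N/\epsilon\rfloor$, yielding the stated $F_{\min}\le \lfloor N/\epsilon\rfloor$.

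The step that deserves the most care is the sum invariant together with the claim that \shortsubalgorithmname ingests exactly $\text{len}(\mathcal{B})$ occurrences; this is where the block formalism must be used honestly, confirming that the heap swaps leave the counter total untouched and that discarded $\varnothing$ positions never trigger an update. The hypothesis $|B|=\lceil |U|/T\rceil$ fixes $B$ to a single part of the $T$-way domain partition and hence caps the number of distinct elements \shortsubalgorithmname can ever hold; it is what ties this lemma to the domain-splitting regime and to the reduced counter budget $m=\tfrac{1}{T\epsilon}$, even though the worst-case bound above uses only $\text{len}(\mathcal{B})\le N$ and does not require the sharper estimate $\text{len}(\mathcal{B})\le N/T$ that a balanced-domain assumption would provide.
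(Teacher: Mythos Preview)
Your argument is sound for the literal bound $\lfloor N/\epsilon\rfloor$, but that bound is vacuous: since $\epsilon<1$ we have $N/\epsilon>N$, and the minimum counter can never exceed the total number of inserted occurrences $\text{len}(\mathcal{B})\le N$. The lemma as printed contains a typo; the paper's own proof, and every subsequent use (Lemma~\ref{claim:counters_general}, Corollary~\ref{claim:qpopss_space_requirement}), target $F_{\min}\le\lfloor N\epsilon\rfloor$, not $\lfloor N/\epsilon\rfloor$. Your chain reaches only $F_{\min}\le\lfloor N\,T\epsilon\rfloor$, which is off by a factor of $T$ from the intended result.

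The missing ingredient is precisely the one you set aside in your last paragraph: the paper \emph{does} use the sharper estimate $\text{len}(\mathcal{B})=N/T$. It obtains it by fixing a specific stream in which the $|U|=m+1+j$ distinct elements each occur $L$ times, so that the block over a part $B$ of size $|U|/T$ has length exactly $L(m+1+j)/T=N/T$; plugging this into your averaging step yields $F_{\min}\le (N/T)\cdot T\epsilon=N\epsilon$. Your sum-invariant derivation is cleaner than the paper's construction and would deliver the intended bound immediately once coupled with $\text{len}(\mathcal{B})\le N/T$; but note that this inequality is \emph{not} a consequence of $|B|=\lceil |U|/T\rceil$ alone for arbitrary streams (all of $\mathcal{S}$ could land in $B$), so the paper's restriction to a balanced-occurrence stream is doing real work, and a fully general version would need either an explicit assumption on the substream lengths or a per-thread length parameter.
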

\begin{proof}
 Let $j$ and $L$ be arbitrary positive integers. Consider a stream $\mathcal{S}$ with length $len(\mathcal{S}) = L(m+1+j)$, containing $m+1+j$ distinct elements, each repeated $L$ times. Let these elements belong to the set $U$, such that $|U| = m+1+j$.
Given $|B| = \lceil \frac{|U|}{T} \rceil$, the stream $\mathcal{B} \coloneqq block(\mathcal{S},B)$ has a length of $len(\mathcal{B}) = L\frac{m+1+j}{T}$.
We utilize claim 1 in Lemma \ref{claim:meta_imported} to determine the minimum counter of \shortsubalgorithmname while observing $\mathcal{B}$:
$$F_{min} \leq
\lfloor L \frac{m+1+j - \frac{(m+1+j)(T-1)}{T}}{m}\rfloor = \lfloor \epsilon L \big(m + 1 + j\big)  \rfloor = 
\lfloor N\epsilon\rfloor $$
\end{proof}

This follows naturally from the linear relationship between consumed space and the accuracy of the Space-Saving algorithm and has implications for the required number of counters of QOSS under domain splitting in the following.
\begin{lemma}
\label{claim:counters_general}
When \shortsubalgorithmname observes the stream block $\mathcal{B} \coloneqq $ block$(\mathcal{S},B)$ and maintains $m = \frac{1}{T\epsilon}$ counters, it tracks every element occurring more than $\lfloor N\epsilon \rfloor$ times with an estimation error of at most $\lfloor N\epsilon \rfloor$ if $|B| = \lceil \frac{|U|}{T} \rceil$.
\end{lemma}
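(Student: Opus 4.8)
The plan is to combine Lemma \ref{claim:block_counters} with the imported properties of Space-Saving collected in Lemma \ref{claim:meta_imported}, applied to the stream block $\mathcal{B}$ rather than to the original stream $\mathcal{S}$. Lemma \ref{claim:block_counters} already establishes that with $m = \frac{1}{T\epsilon}$ counters and $|B| = \lceil \frac{|U|}{T} \rceil$, the minimum counter $F_{\min}$ of the \shortsubalgorithmname instance observing $\mathcal{B}$ satisfies $F_{\min} \leq \lfloor N\epsilon \rfloor$, where $N$ is the length of the (original) stream. So the core content of the present lemma is really the translation of that bound on $F_{\min}$ into the two operational guarantees: (a) every sufficiently frequent element is \emph{tracked}, and (b) every tracked element is overestimated by at most $\lfloor N\epsilon \rfloor$.

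The key steps, in order: First I would invoke Lemma \ref{claim:block_counters} to obtain $F_{\min} \leq \lfloor N\epsilon \rfloor$ under the stated hypotheses on $m$ and $|B|$. Second, I would apply claim 2 of Lemma \ref{claim:meta_imported} (from Theorem 3.5 of \cite{metwally_efficient_2005}): any element occurring more than $F_{\min}$ times in the observed stream is guaranteed to be tracked by \shortsubalgorithmname. Since $F_{\min} \leq \lfloor N\epsilon \rfloor$, any element with in-block frequency exceeding $\lfloor N\epsilon \rfloor$ exceeds $F_{\min}$ and is therefore tracked — and here one notes that for an element $e \in B$, its frequency in $\mathcal{B}$ equals its frequency in $\mathcal{S}$, so the guarantee transfers to the original stream. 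Third, I would apply claim 3 of Lemma \ref{claim:meta_imported} (from Lemma 3.4 of \cite{metwally_efficient_2005}): elements tracked by \shortsubalgorithmname are overestimated by at most $F_{\min}$, hence by at most $\lfloor N\epsilon \rfloor$. Stringing these together yields exactly the statement: every element occurring more than $\lfloor N\epsilon \rfloor$ times is tracked, with estimation error at most $\lfloor N\epsilon \rfloor$.

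The main obstacle — really a bookkeeping subtlety rather than a deep difficulty — is making precise which quantity $N$ refers to and ensuring consistency between the frequency of an element in $\mathcal{B}$ and in $\mathcal{S}$. In Lemma \ref{claim:block_counters}'s proof, $N$ is identified with $L(m+1+j) = \mathrm{len}(\mathcal{S})$, i.e. the full stream length, even though \shortsubalgorithmname only observes the shorter block $\mathcal{B}$ of length $\mathrm{len}(\mathcal{B}) = N/T$. One must be careful that the ``more than $\lfloor N\epsilon \rfloor$ times'' in the conclusion is counted with respect to occurrences in $\mathcal{S}$ (equivalently in $\mathcal{B}$, since $B$-elements are untouched by the block operation), so that there is no hidden factor-of-$T$ discrepancy; I would state this identification explicitly. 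Also worth a sentence: the construction in Lemma \ref{claim:block_counters} uses a particular worst-case stream to certify the bound on $F_{\min}$, but the monotonicity/worst-case nature of Space-Saving's guarantees means the resulting inequality $F_{\min} \leq \lfloor N\epsilon\rfloor$ holds for arbitrary streams of the given length, which is what we need here.

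\begin{proof}
Suppose \shortsubalgorithmname maintains $m = \frac{1}{T\epsilon}$ counters and $|B| = \lceil \frac{|U|}{T} \rceil$, and let it observe $\mathcal{B} \coloneqq \text{block}(\mathcal{S},B)$, where $N = \text{len}(\mathcal{S})$. By Lemma \ref{claim:block_counters}, the minimum counter satisfies $F_{\min} \leq \lfloor N\epsilon \rfloor$. Since elements of $B$ are not removed by the block operation, the number of occurrences of any $e \in B$ in $\mathcal{B}$ equals $f_N(e)$, its number of occurrences in $\mathcal{S}$. Hence if $f_N(e) > \lfloor N\epsilon \rfloor \geq F_{\min}$, claim 2 of Lemma \ref{claim:meta_imported} guarantees that $e$ is tracked by \shortsubalgorithmname. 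Finally, by claim 3 of Lemma \ref{claim:meta_imported}, every tracked element is overestimated by at most $F_{\min} \leq \lfloor N\epsilon \rfloor$, i.e. $f_N(e) \leq \hat{f}_N(e) \leq f_N(e) + \lfloor N\epsilon \rfloor$. Combining the two, \shortsubalgorithmname tracks every element occurring more than $\lfloor N\epsilon \rfloor$ times with estimation error at most $\lfloor N\epsilon \rfloor$.
\end{proof}
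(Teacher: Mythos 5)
Your proposal is correct and follows essentially the same route as the paper: invoke Lemma~\ref{claim:block_counters} to get $F_{\min} \leq \lfloor N\epsilon \rfloor$, then apply claims 2 and 3 of Lemma~\ref{claim:meta_imported} to obtain the tracking guarantee and the overestimation bound. Your added remarks about identifying frequencies in $\mathcal{B}$ with those in $\mathcal{S}$ and about which stream length $N$ denotes are reasonable clarifications, but they do not change the argument, which matches the paper's.
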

\begin{proof}
From Lemma \ref{claim:block_counters}, we establish that \shortsubalgorithmname maintains a minimum counter value $F_{\min} \leq \lfloor N\epsilon \rfloor$. Leveraging claims 2 and 3 in Lemma~\ref{claim:meta_imported}, elements occurring more than $\lfloor N\epsilon \rfloor$ times are guaranteed to be tracked, with an overestimation error of at most $\lfloor N\epsilon \rfloor$.
\end{proof}

Since \shortalgorithmname consists of multiple \shortsubalgorithmname instances, the two's space requirements and accuracy guarantees are interlinked. By its design, \shortalgorithmname dispatches $T$ \shortsubalgorithmname algorithmic instances, one for each thread. According to Lemma \ref{claim:counters_general}, each requires $\frac{1}{T\epsilon}$ counters to track the frequent elements. Therefore, the total number of counters needed to track and report the $\epsilon$-approximate frequent elements of $\mathcal{S}$ are $T\frac{1}{T\epsilon} = \frac{1}{\epsilon}$.
\begin{corollary}
\label{claim:qpopss_space_requirement}
\shortalgorithmname{} requires $\frac{1}{\epsilon}$ counters to track every element in the stream $\mathcal{S}$ occuring more than $\lfloor N\epsilon \rfloor$ times, with an estimation error of at most $\lfloor N\epsilon \rfloor$.
\end{corollary}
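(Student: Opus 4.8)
The plan is to assemble Corollary~\ref{claim:qpopss_space_requirement} directly from Lemma~\ref{claim:counters_general} by summing over the $T$ thread-local \shortsubalgorithmname instances. First I would recall the design: \shortalgorithmname dispatches exactly $T$ \shortsubalgorithmname instances, one per thread, and by domain splitting thread $i$'s instance observes precisely the stream block $\mathcal{B}^{(i)} \coloneqq \text{block}(\mathcal{S}, U_i)$, where $\{U_1,\dots,U_T\}$ is the partition of $U$ induced by the $owner$ function. Since the subdomains are equally sized (up to ceiling), each satisfies $|U_i| = \lceil |U|/T \rceil$, so the hypothesis of Lemma~\ref{claim:counters_general} is met for every $i$. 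Setting the per-instance counter budget to $m = \frac{1}{T\epsilon}$ then guarantees, for each $i$, that $\mathcal{B}^{(i)}$'s instance tracks every element of $U_i$ occurring more than $\lfloor N\epsilon\rfloor$ times in $\mathcal{S}$ with overestimation at most $\lfloor N\epsilon\rfloor$.

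Next I would argue that these per-thread guarantees compose into a global one. Because the $U_i$ partition $U$, any element $e\in U$ belongs to exactly one subdomain $U_{owner(e)}$, and its occurrences in $\mathcal{S}$ are exactly the occurrences recorded in $\mathcal{B}^{(owner(e))}$ (the $\text{block}$ function deletes only elements outside the block, so $f_N(e)$ is unchanged for $e\in U_i$ — this is the content of the $\text{len}$ bookkeeping in the preceding subsection). Hence if $f_N(e) > \lfloor N\epsilon\rfloor$, the single instance responsible for $e$ tracks it within the stated error, and a global query — which by the design of Section~\ref{sec:fequeries} aggregates the tracked sets of all $T$ instances — reports it. Summing counter budgets gives $T\cdot\frac{1}{T\epsilon} = \frac{1}{\epsilon}$ total counters, which is the claimed bound.

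I would close by noting this matches the sequential Space-Saving requirement of $\frac{1}{\epsilon}$ counters, so parallelization via domain splitting incurs no asymptotic memory penalty — the point of challenge \ref{challenge:memory_accuracy}. The main obstacle, such as it is, is not the arithmetic but making the reduction rigorous: one must be careful that the "stream" each \shortsubalgorithmname instance sees really is $\text{block}(\mathcal{S}, U_i)$ rather than some reordering or a stream carrying Delegation-Filter-aggregated weights, and that the length parameter $N$ appearing in Lemma~\ref{claim:counters_general}'s error term is the \emph{global} stream length rather than the block length $len(\mathcal{B}^{(i)})$. Re-reading Lemma~\ref{claim:block_counters}, its conclusion is already phrased in terms of the global $N$ (the bound $\lfloor N\epsilon\rfloor$ with $N = L(m+1+j)$ counting the full stream), so this alignment is consistent; weighted updates are handled because \shortsubalgorithmname supports weighted updates and claim~3 of Lemma~\ref{claim:meta_imported} is weight-agnostic. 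The concurrency effects (filter slack, overlap of queries with updates) are deliberately deferred to the consistency analysis later in the section and do not enter this counting argument.
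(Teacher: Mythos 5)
Your proposal is correct and follows essentially the same route as the paper: the paper justifies the corollary in the sentence immediately preceding it by invoking Lemma~\ref{claim:counters_general} for each of the $T$ thread-local \shortsubalgorithmname instances with $\frac{1}{T\epsilon}$ counters apiece and summing to $T\cdot\frac{1}{T\epsilon}=\frac{1}{\epsilon}$. Your version is simply a more careful spelling-out of that argument — in particular your explicit remarks that the per-thread guarantees compose because the $U_i$ partition $U$, and that the error term is phrased in the global $N$, are points the paper leaves implicit.
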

We now investigate the number of required counters, assuming that the input data stream conforms to the Zipf distribution.
\begin{theorem}
\label{claim:counters_zipf}
\shortsubalgorithmname{} with $m = \left(\frac{1}{\epsilon T}\right)^{\frac{1}{a}}$ counters, fed with stream $\mathcal{B} \coloneqq \text{block}(\mathcal{S}, B)$, tracks every element occurring more than $N\phi$ times and reports occurrences with an error of at most $\lfloor N\epsilon \rfloor$, provided $\mathcal{S}$ is constructed from a noiseless Zipf distribution with $a > 1$, regardless of stream permutation.
\end{theorem}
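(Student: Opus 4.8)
The plan is to follow the template of the proofs of Lemmas~\ref{claim:block_counters} and~\ref{claim:counters_general} and reduce the whole statement to a single estimate on the minimum counter. Concretely, it suffices to show that \shortsubalgorithmname, fed with $\mathcal{B} \coloneqq \text{block}(\mathcal{S},B)$ and equipped with $m = (1/(\epsilon T))^{1/a}$ counters (up to the rounding in Algorithm~\ref{algo:qoss}), satisfies $F_{\min} \le \lfloor N\epsilon \rfloor$ under \emph{every} ordering of $\mathcal{B}$. Given that, claims~2 and~3 in Lemma~\ref{claim:meta_imported} conclude the argument: since $\epsilon \le \phi$ we have $F_{\min} \le \lfloor N\epsilon\rfloor \le N\phi$, so every element with count exceeding $N\phi$ also exceeds $F_{\min}$ and is therefore tracked, with overestimation at most $F_{\min} \le \lfloor N\epsilon\rfloor$; permutation-independence of the conclusion comes for free because the bound on $F_{\min}$ will hold for all orders, and is in any case reinforced by claim~4 in Lemma~\ref{claim:meta_imported}.

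The core step is a \emph{skew-refined} counter bound, which is the analog over a block of the Zipfian analysis in~\cite{metwally_efficient_2005}. Fix any ordering of $\mathcal{B}$. Two facts hold: the $m$ counters always sum to $\text{len}(\mathcal{B})$; and by claims~2 and~3 in Lemma~\ref{claim:meta_imported}, every element whose true count in $\mathcal{B}$ exceeds $F_{\min}$ occupies a counter whose value is at least that true count. Let $e^{\mathcal B}_1, e^{\mathcal B}_2,\dots$ list the distinct elements of $\mathcal{B}$ in non-increasing order of true count, and let $k$ be the number of them whose count is above $F_{\min}$. Summing the counters gives $\text{len}(\mathcal{B}) \ge \sum_{\ell \le k} f_{\mathcal B}(e^{\mathcal B}_\ell) + (m-k)F_{\min}$, hence
\[
 F_{\min} \;\le\; \frac{\ \text{len}(\mathcal{B}) - \sum_{\ell \le k} f_{\mathcal B}(e^{\mathcal B}_\ell)\ }{m-k}\;=\;\frac{\ \sum_{\ell > k} f_{\mathcal B}(e^{\mathcal B}_\ell)\ }{m-k}.
\]
Next I would bring in the Zipf structure of $\mathcal{S}$: since $B$ is one of the $T$ equal parts of $U$ and the ranks appearing in $B$ are distinct positive integers, the $\ell$-th heaviest element of $\mathcal{B}$ has global rank at least $\ell$, so $f_{\mathcal B}(e^{\mathcal B}_\ell) \le N\ell^{-a}/\zeta(a)$, with $\zeta(a)$ the Zipf normalizer over $U$; moreover $\text{len}(\mathcal{B})$ is the block's share of the stream, about $N/T$ for a balanced split — this is the step that injects the factor $T$, mirroring $\text{len}(\mathcal{B}) = N/T$ in the proof of Lemma~\ref{claim:block_counters}. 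Applying the tail estimate $\sum_{\ell > k}\ell^{-a} \le k^{1-a}/(a-1)$, the displayed inequality becomes $F_{\min} \lesssim \frac{N}{T}\cdot\frac{k^{1-a}}{(a-1)\,\zeta(a)\,(m-k)}$.

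It remains to pin down $k$ and optimize. The cutoff is self-referential — $k$ counts the elements heavier than $F_{\min}$, and $F_{\min}$ is what we are bounding — so I would close the loop exactly as in~\cite{metwally_efficient_2005}: combining the inequality above with the complementary estimate $F_{\min} \gtrsim f_{\mathcal B}(e^{\mathcal B}_{k+1})$, which under the Zipf law is of order $\frac{N}{T}k^{-a}/\zeta(a)$, forces $m - k = \Theta(k)$, i.e.\ $k = \Theta(m)$. Substituting $k = \Theta(m)$ collapses the bound to $F_{\min} \lesssim \frac{N}{T}\,m^{-a}$ with a constant depending only on $a$; setting this at most $N\epsilon$ and solving gives $m^{a} \gtrsim 1/(\epsilon T)$, i.e.\ the claimed $m = (1/(\epsilon T))^{1/a}$.

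I expect the main obstacle to be precisely this self-consistent determination of $k$, together with turning the Zipf partial-sum estimates into rigorous inequalities with explicit constants — in particular verifying that the $a$-dependent constant absorbed in the last step is genuinely at most $1$, so that the stated $m$ is on the safe side. Two subsidiary points also need care: justifying $\text{len}(\mathcal{B}) \lesssim N/T$ for the relevant block, which relies on the domain split being balanced (in the worst case one can only use the trivial $\text{len}(\mathcal{B}) \le N$), and the degenerate regimes in which $\mathcal{B}$ has few distinct elements or small length, where $F_{\min}$ is simply the least true count in $\mathcal{B}$ (or $0$) and the bound is immediate. The reduction, the order-independence, and the raw Zipf tail estimate are all routine once the refined counter inequality is in hand.
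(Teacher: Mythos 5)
Your overall architecture is defensible and, in its core step, takes a different route from the paper: the paper starts from the imported tail-mass bound on the maximum possible minimum counter, $F_{\min}\le\frac{N}{m}\,\frac{\sum_{i=m+1}^{|U|}i^{-a}}{\sum_{i=1}^{|U|}i^{-a}}$, and then computes which ranks exceed it, injecting the factor $T$ by an explicit (footnoted) assumption that this tail mass is split evenly among the $T$ blocks. You instead re-derive the bound on $F_{\min}$ from the counter-sum identity with a self-consistent cutoff $k=\Theta(m)$, in the style of the original Zipfian analysis of Space-Saving. That re-derivation is legitimate in principle, but as written it has a concrete gap at exactly the point where the factor $T$ must enter.

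The gap is this: from ``the $\ell$-th heaviest element of $\mathcal{B}$ has global rank at least $\ell$'' you get $f_{\mathcal B}(e^{\mathcal B}_\ell)\le N\ell^{-a}/\zeta(a)$ and hence $\sum_{\ell>k}f_{\mathcal B}(e^{\mathcal B}_\ell)\le\frac{N}{\zeta(a)}\cdot\frac{k^{1-a}}{a-1}$, with no $T$ anywhere; separately, $\text{len}(\mathcal{B})\le N/T$ gives $\sum_{\ell>k}f_{\mathcal B}(e^{\mathcal B}_\ell)\le N/T$, with no $k^{1-a}$. Your displayed intermediate bound $F_{\min}\lesssim\frac{N}{T}\cdot\frac{k^{1-a}}{(a-1)\,\zeta(a)\,(m-k)}$ multiplies these two savings together, but it follows from neither premise nor from their combination --- only their minimum does, and the minimum is not enough: it yields $m\gtrsim(1/\epsilon)^{1/a}$ with no dependence on $T$, which is not the claimed bound. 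To put a $1/T$ in front of the tail sum you need a statement about how the Zipf \emph{tail} mass (not the total block length) splits across the $T$ blocks, which is precisely the heuristic assumption the paper invokes when it replaces $\zeta(a)-1$ by $(\zeta(a)-1)/T$; your proof must state and use an assumption of this kind explicitly. Note also that the most natural ``balanced'' version (the $\ell$-th heaviest element of $B$ has global rank about $\ell T$) gives a tail of order $N\,T^{-a}k^{1-a}$, i.e.\ a factor $T^{-a}$ rather than $T^{-1}$, and hence a different value of $m$ than the theorem states --- so the exact form of the assumption matters and cannot be left implicit. The remaining issues you flag yourself (closing the self-referential loop on $k$, and checking that the $a$-dependent constant is at most $1$) are real but secondary to this.
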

\begin{proof}
According to claim 4 in \ref{claim:meta_imported}, \shortsubalgorithmname reports frequent elements occurring more often than the maximum possible value of $F_{min}$. For a Zipf-distributed input stream, the maximum value of $F_{min}$ is less than or equal to the cumulative occurrences of the elements ranked between $m+1$ and $|U|$, divided equally over the number of counters: $F^{zipf}_{min} \leq \frac{N}{m} \frac{\sum_{i=m+1}^{|U|} \frac{1}{i^a}}{\sum_{i=1}^{|U|} \frac{1}{i^a}}$. Similarly, the number of occurrences of an element of a particular rank is described as $\frac{N}{r(e)^a} \frac{1}{\sum_{i=1}^{|U|} \frac{1}{i^a}}$.  The following proof obligation describes the ranks of elements whose occurrences exceed the maximum value of $F^{zipf}_{min}$:
\begin{equation*}
\label{eq:obligation}
   \frac{1}{r(e)^a}>\frac{1}{m} \sum_{i=m+1}^{|U|} \frac{1}{i^a} 
\end{equation*}

The right-hand side of inequality can be simplified:
\begin{equation*}
    \frac{1}{r(e)^a}>\frac{1}{m^a} \sum_{i=2}^{\frac{|U|}{m}} \frac{1}{i^a}
\end{equation*}
Since $\sum_{i=2}^{\frac{|U|}{m}} \frac{1}{i^a}$ has no closed-form expression, $\zeta(a)-1$ is used as a substitute, imposing a greater constraint on $m$:
\begin{equation*}
    \frac{1}{r(e)^a}>\frac{1}{m^a} (\zeta(a)-1)
\end{equation*}
Given that $B$ is formed by randomly selecting $\frac{|U|}{T}$ elements from $U$ with uniformity (approximately), the cumulative elements denoted by $(\zeta(a)-1)$ can be assumed to be evenly distributed among threads
\footnote{This assumption is motivated by the tendency for the total occurrences of the $|U| - (m+1)$ least frequent elements to be low and for $|U| - (m+1)$ to be much larger than $m$.}:
\begin{equation*}
\label{eq:introduceT}
    \frac{1}{r(e)^a}>\frac{1}{m^a} \frac{(\zeta(a)-1)}{T}
\end{equation*}
This simplifies to:
\begin{equation}
\label{eq:a5}
    m > r(e) \bigg(\frac{\zeta(a)-1}{T}\bigg)^\frac{1}{a}
\end{equation}
Now, considering the inequality $\frac{N}{r(e)^a \zeta(a)} < N\epsilon$, satisfied for element ranks that occur more often than the threshold. The inequality can be solved for $r(e)$ to obtain the least element rank satisfying the above inequality:
\begin{equation*}
    r(e)\geq\bigg(\frac{1}{\epsilon\zeta(a)}\bigg)^\frac{1}{a}
\end{equation*}
We can now substitute $r(e)$ in inequality \ref{eq:a5}: 
\begin{equation*}
    m > \bigg(\frac{\zeta(a)-1}{T\epsilon \zeta(a)}\bigg)^\frac{1}{a}= \bigg(\frac{1}{T\epsilon} - \frac{1}{T\epsilon\zeta(a)}\bigg)^\frac{1}{a}
\end{equation*}
Thus, setting $m=\big(\frac{1}{T\epsilon}\big)^\frac{1}{a}$ guarantees that the proof obligation in (\ref{eq:obligation}) is satisfied, i.e., elements with rank less than $\bigg(\frac{1}{\epsilon\zeta(a)}\bigg)^\frac{1}{a}$ exceed the maximum minimum counter value, and will therefore be reported by \shortsubalgorithmname.
\end{proof}
Having determined the space requirements of \shortsubalgorithmname when observing a stream with Zipfian distribution, we can discuss the space requirements of \shortalgorithmname. \shortalgorithmname dispatches $T$ threads, each with its own \shortsubalgorithmname algorithm instance requiring $\big(\frac{1}{T\epsilon}\big)^\frac{1}{a}$ counters according to  Theorem \ref{claim:counters_zipf}.

\begin{corollary}
\label{claim:qpopss_zipf_requirements}
\shortalgorithmname requires $T\big(\frac{1}{T\epsilon}\big)^\frac{1}{a}$ counters to track every element of a stream $\mathcal{S}$ occuring more than $\lfloor N\epsilon \rfloor$ times and reports the number of occurrences of elements with an error of at most $\lfloor N\epsilon \rfloor$, given that $\mathcal{S}$ is constructed from a noiseless Zipf distribution with $a>1$, regardless of stream permutation.
\end{corollary}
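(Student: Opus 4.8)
This corollary is a lift of Theorem~\ref{claim:counters_zipf} from a single \shortsubalgorithmname{} instance to the full \shortalgorithmname{} ensemble through domain splitting, so the plan is mostly bookkeeping on top of that theorem. First I would recall the design: \shortalgorithmname{} runs one \shortsubalgorithmname{} instance per thread, and the $owner$ function partitions $U$ into subdomains $U_1,\dots,U_T$ with $|U_i|=\lceil|U|/T\rceil$; once every Delegation Filter has been handed over and flushed, the instance of thread $i$ has processed exactly the stream block $\mathcal{B}_i=\mathrm{block}(\mathcal{S},U_i)$. I would then observe that every $e\in U$ lies in exactly one $U_i$ and that blocking preserves multiplicities, so the count of $e$ inside $\mathcal{B}_i$ equals $f_N(e)$.

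Next I would apply Theorem~\ref{claim:counters_zipf} to each of the $T$ instances with $m=(1/(\epsilon T))^{1/a}$: its hypotheses ($\mathcal{S}$ noiseless Zipf with $a>1$, $|B|=\lceil|U|/T\rceil$, and $B$ an approximately uniform random selection from $U$) hold with $B=U_i$, so instance $i$ tracks every element of $\mathcal{B}_i$ above the theorem's rank threshold --- in particular every element occurring more than $\lfloor N\epsilon\rfloor$ times --- with overestimation at most $\lfloor N\epsilon\rfloor$, independent of stream permutation (invoking, exactly as the theorem's proof does, claims~2 and~3 of Lemma~\ref{claim:meta_imported}). Taking the union over $i\in\{1..T\}$ and using that each heavy element lies in a single block, I would conclude that the ensemble tracks and reports, with error at most $\lfloor N\epsilon\rfloor$, every element of $\mathcal{S}$ occurring more than $\lfloor N\epsilon\rfloor$ times; summing the per-instance counter budgets yields $T(1/(\epsilon T))^{1/a}$, and permutation-independence is inherited directly from the per-instance guarantee.

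The hard part will be transferring the ``approximately uniform random selection'' hypothesis of Theorem~\ref{claim:counters_zipf} to the deterministic $owner$/hashing scheme used here. What that theorem actually needs is not which elements are kept but only that the tail mass $\zeta(a)-1$ borne by the $|U|-(m+1)$ rarest elements is split into roughly equal shares $1/T$ across the $T$ blocks --- the same modelling assumption already made (and footnoted) inside the proof of Theorem~\ref{claim:counters_zipf}, justified there by $|U|-(m+1)\gg m$ and the smallness of the aggregate tail. I would therefore state explicitly that I adopt this assumption for the partition induced by $owner$, after which no fresh estimate is required. Finally I would remark that the Delegation Filters contribute only a bounded, transient slack to the counts and hence do not change the counter requirement, with the consistency implications of concurrency deferred to the next subsection.
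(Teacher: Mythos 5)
Your proposal matches the paper's treatment: the corollary is obtained exactly as you describe, by dispatching one \shortsubalgorithmname{} instance per thread, applying Theorem~\ref{claim:counters_zipf} to each with $m=(1/(T\epsilon))^{1/a}$ counters, and summing over the $T$ instances (the paper states this in a single sentence preceding the corollary rather than a formal proof). Your extra care in transferring the uniform-tail-splitting assumption to the $owner$ partition and in noting the Delegation Filter slack only makes explicit what the paper leaves implicit, so the approach is essentially the same.
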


Ultimately, corollaries \ref{claim:qpopss_space_requirement} and \ref{claim:qpopss_zipf_requirements} describe the required number of counters needed by \shortalgorithmname to track the $\epsilon$-approximate $\phi$-frequent elements across various distributions.
Additionally, the Delegation Filters described in \ref{sec:updates} contain counters equal to the number of threads squared times the number of counters kept by each filter ($T^2D$).
Having established space requirements, we now explore the query consistency guarantees of \shortalgorithmname.

\subsection{Query Consistency Guarantees}
\label{sec:consistency_guarantees}
In this section, we provide an invariant for the approximation guarantees of the frequent elements and their occurrence reported by  \shortalgorithmname as they relate to challenge \ref{challenge:accuracy_bound}.
We use a similar reasoning and method as Rinberg and Keidar~\cite{RinbergKeidarDisc2020}, who defined bounds for the estimated count of an element on a concurrent Count-Min Sketch.
The authors utilize that counters of a Count-Min Sketch are monotonically increasing, as is the case with counters in \shortsubalgorithmname, which is used as a basis for a method to bound the error for a query in this work.
Before we delve into the consistency analysis, we first discuss the implications of the algorithm design.

\noindent
\textbf{Query Scalability Enhancement:}
As implied by the \emph{query scalability enhancement} of \shortalgorithmname described in section~\ref{sec:updates}, the parameter $E$ expresses the maximum number of elements present in a delegation filter at any point in time. Since there are $T$ delegation filters that can contain an element, the maximum number of element occurrences that can be missing from a reported element count is $T\cdot E$, which is put more concisely as the following Lemma.
\begin{lemma}
\label{claim:underestimation}
When $\mathcal{S}$ consists of elements drawn from an arbitrary distribution, $\mathbb{D}(e)$, the number of counts of $e$ inside delegation filters, is less than $T \cdot E$.
\end{lemma}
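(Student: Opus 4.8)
The plan is to bound $\mathbb{D}(e)$, the total number of in-filter counts of a fixed element $e$, by summing over the $T$ delegation filters that can possibly hold $e$ and arguing that each such filter contributes strictly fewer than $E$ counts at any instant. First I would recall the ownership structure from Section~\ref{sec:updates}: by domain splitting, $e\in U_{i}$ for exactly one owner thread $i=owner(e)$, and the only filters that ever store $e$ are the $T$ filters $Threads[i].DelegationFilters[j]$ reserved for the pairs $(j,i)$ with $j\in\{1..T\}$. Hence $\mathbb{D}(e)=\sum_{j=1}^{T}\mathbb{D}_{j}(e)$, where $\mathbb{D}_{j}(e)$ is the count of $e$ currently buffered in the filter thread $j$ uses to delegate to thread $i$.

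Next I would bound each $\mathbb{D}_{j}(e)$. The key observation is the handover discipline in Algorithm~\ref{algo:update}: thread $j$ maintains the counter $c$ of element occurrences processed since its last handover, and whenever $c$ reaches $E$ (or any filter becomes full), it flushes \emph{all} of its filters to their owners, after which $c$ is reset to $0$. Consequently, between two consecutive handovers thread $j$ inserts at most $E-1$ element occurrences into the union of its $T$ outgoing filters before a flush is triggered, and after a flush all of those filters are empty. Therefore at every point in time the total number of element occurrences buffered across all of thread $j$'s outgoing filters is at most $E-1 < E$; in particular $\mathbb{D}_{j}(e)\le \sum_{e'} \mathbb{D}_{j}(e') < E$, and strictly $\mathbb{D}_{j}(e) < E$ for the specific element $e$. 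Summing over the $T$ choices of $j$ gives $\mathbb{D}(e) < T\cdot E$, which is the claim. Strictness can be kept throughout since the flush is triggered as soon as $c=E$, so the buffered total never actually attains $E$ on a single thread, and distinctness of the $T$ filters (one per delegating thread) prevents double counting.

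The main obstacle is the concurrency bookkeeping: filters are read and emptied asynchronously by the owner thread $i$ via \texttt{process\_pending\_updates} (Algorithm~\ref{algo:process_pending_updates}) while thread $j$ is still inserting, so one must be careful that ``the number of counts of $e$ inside delegation filters at a given instant'' is well-defined and that emptying a filter only ever decreases $\mathbb{D}(e)$. I would handle this by noting that the only operations touching a filter are (i) inserts by the single delegating thread $j$, which increase some $\mathbb{D}_{j}(e')$ by exactly one and are precisely the occurrences counted by $c$, and (ii) a pop-and-empty by the owner, which sets the popped filter's contents (and hence its contribution to $\mathbb{D}$) to zero. Since a handover pushes a filter to the owner's linked list but the filter's element counts remain logically ``in a delegation filter'' until emptied, the worst case for $\mathbb{D}(e)$ is that none of thread $j$'s pushed filters has yet been drained; even then the invariant $c<E$ at the moment just before a flush, together with the reset, shows the cumulative unflushed-plus-flushed-but-undrained occurrences attributable to thread $j$ never exceed $E-1$. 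Aggregating over the $T$ delegating threads yields $\mathbb{D}(e)<T\cdot E$ unconditionally, i.e., for streams drawn from an arbitrary distribution.
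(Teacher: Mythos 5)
Your argument is essentially the paper's own justification (the paper states this lemma without a separate formal proof, relying on the preceding paragraph): only the $T$ filters reserved for $owner(e)$ can ever hold $e$, each delegating thread buffers at most $E$ occurrences between handovers by the counter-$c$ discipline of Algorithm~\ref{algo:update}, and the contributions sum to $T\cdot E$. The only loose end, which the paper shares, is strictness: since the $E$-th insertion precedes the flush check, a thread's buffered count can momentarily equal $E$, so the honest bound is $\mathbb{D}(e)\le T\cdot E$ rather than strictly less.
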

 Note that the Lemma is a rather substantial overestimation. In common executions, Delegation Filters contain various elements. Being fully occupied by a single element is rather unlikely for each Delegation Filter. 

Suppose the input distribution is noiseless Zipf with skew parameter $a>1$. In that case, we can give a tighter bound on the number of counts of a particular element $e$ inside Delegation Filters.

\begin{lemma}
\label{claim:abs_err}
When $\mathcal{S}$ 
consists of elements drawn from a noiseless Zipf distribution with infinite domain and skew parameter $a>1$, $\mathbb{D}(e)$ is at most $\frac{T \cdot E}{\zeta(a)r(e)^a}$.
\end{lemma}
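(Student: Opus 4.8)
The plan is to bound $\mathbb{D}(e)$, the number of counts of a fixed element $e$ residing inside Delegation Filters at any instant, by combining the global cap of $E$ on the total element-mass buffered per thread (from the query scalability enhancement of Section~\ref{sec:updates}) with the Zipfian frequency of $e$. The starting point is Lemma~\ref{claim:underestimation}: across the $T$ Delegation Filters that can hold $e$ (one per potential source thread, all pointing to the owner of $e$), the total number of buffered occurrences of \emph{all} elements is bounded, and in particular $\mathbb{D}(e) < T\cdot E$. The refinement for the noiseless Zipf case comes from observing that the buffered occurrences of $e$ are a sub-population of the stream prefix currently held in filters, and that population is itself Zipf-distributed (or can be treated as such), so $e$'s share of it is proportional to its Zipf weight $\frac{1}{\zeta(a)\, r(e)^a}$.

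The key steps, in order, would be: (i) Fix an arbitrary point in the execution and let $P$ denote the multiset of element occurrences currently sitting in the (up to) $T$ Delegation Filters reserved toward $owner(e)$; by the handover discipline governed by $E$, $|P| \le T\cdot E$. (ii) Argue that, since elements are routed to filters by the $owner$ hash independently of their values and the stream is drawn i.i.d.\ from the noiseless Zipf law with infinite domain and skew $a>1$, the occurrences inside $P$ follow the same Zipf distribution restricted to $U_{owner(e)}$; in expectation (or as the bound we claim), the fraction of $P$ equal to $e$ is $\frac{1/r(e)^a}{\sum_{i\ge 1} 1/i^a} = \frac{1}{\zeta(a)\,r(e)^a}$. (iii) Multiply: $\mathbb{D}(e) = (\text{count of } e \text{ in } P) \le |P|\cdot \frac{1}{\zeta(a)\,r(e)^a} \le \frac{T\cdot E}{\zeta(a)\, r(e)^a}$, which is the claimed bound. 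One should also note that $\zeta(a)$ converges precisely because $a>1$, which is why the infinite-domain assumption is harmless here.

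The main obstacle I expect is step (ii): making the "$e$'s share of the buffered mass is $\frac{1}{\zeta(a)r(e)^a}$" claim rigorous rather than merely heuristic. The buffered multiset $P$ is not a clean i.i.d.\ sample — it is a sliding, adversarially-timed window determined by which threads have recently handed over, and filters are deduplicated (an element present in a filter contributes at most... well, its running count, not repeated slots). A fully rigorous argument would need either an expectation statement (in which case the lemma should read "in expectation") or a worst-case combinatorial argument bounding how concentrated $e$'s occurrences can be within a length-$\le T E$ sub-stream of a Zipf stream. I would most likely resolve this the way the surrounding analysis resolves analogous points — by treating the Zipf frequencies as exact proportions (the "noiseless" qualifier in the hypothesis is doing exactly this work, mirroring its use in Theorem~\ref{claim:counters_zipf}), so that a buffer holding $k \le TE$ occurrences contains at most $k \cdot \frac{1}{\zeta(a) r(e)^a}$ copies of rank-$r(e)$ element $e$, and then the bound follows immediately. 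If a stricter reading is wanted, I would state the bound as holding in expectation and flag, as the authors already do for Lemma~\ref{claim:underestimation}, that it is a conservative estimate since filters in practice hold a spread of distinct elements.
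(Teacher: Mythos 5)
The paper states Lemma~\ref{claim:abs_err} without any proof, so there is nothing to diverge from: your reconstruction --- the $T\cdot E$ cap on buffered occurrences from Lemma~\ref{claim:underestimation} multiplied by $e$'s exact Zipf share $\frac{1}{\zeta(a)\,r(e)^a}$, with the ``noiseless'' hypothesis read as making that share hold as an exact proportion of any buffered sub-stream --- is evidently the intended argument, and your caveat that this is an expectation-style/heuristic step rather than a worst-case one is exactly the gap the paper leaves unaddressed. One small inconsistency to tidy: if you define $P$ as only the filters destined for $owner(e)$, then $e$'s share of $P$ is the Zipf weight conditioned on $U_{owner(e)}$, not $\frac{1}{\zeta(a)r(e)^a}$; it is cleaner to take $P$ as all $\le E$ occurrences buffered per source thread across all destinations (which is what the handover counter $c$ actually bounds) and apply the unconditional share, which yields the stated bound directly.
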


\noindent
\textbf{Query consistency:}
To capture the notion of queries that are concurrent with updates, we introduce $N_S$ and $N_E$, which describe the stream length at the start and end of a query. These values are ordered such that $N_S \leq N \leq N_E$. We update claim 3 in Lemma \ref{claim:meta_imported} to capture potential concurrent update operations during a query in the following.

\begin{lemma}
\label{claim:consistency_count}
Given a stream $\mathcal{S}$ of $N$ elements, \shortalgorithmname estimates the occurrence of an element $e \in U$ such that $f_{N_{S}}(e) - \mathbb{D}(e) \leq \hat{f_N}(e)\leq f_{N_E}(e) + \epsilon N_{E}$.
\end{lemma}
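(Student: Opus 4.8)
The claim is a two-sided bound on the reported estimate $\hat{f_N}(e)$ of \shortalgorithmname relative to the true counts at the start and end of the query. The plan is to treat the upper and lower inequalities separately, leveraging the monotonicity of the underlying \shortsubalgorithmname counters (as in Rinberg--Keidar~\cite{RinbergKeidarDisc2020}) together with the structural facts already established: Lemma~\ref{claim:meta_imported} (claim 3) for the single-instance overestimation bound, Corollary~\ref{claim:qpopss_space_requirement} for the aggregate counter budget, and Lemma~\ref{claim:underestimation} for the amount of mass temporarily held in delegation filters.

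\textbf{Upper bound.} First I would argue $\hat{f_N}(e) \leq f_{N_E}(e) + \epsilon N_E$. A query reads each thread-local \shortsubalgorithmname instance under its try-lock, so the value it returns for $e$ is the value of the corresponding counter at some linearization point during the query interval, i.e., when the stream had processed some $N' \in [N_S, N_E]$ elements. Since counters in \shortsubalgorithmname are monotonically non-decreasing, this is at most its value at $N_E$. By claim 3 of Lemma~\ref{claim:meta_imported} applied to the split-domain stream block owned by $owner(e)$ (combined with the domain-splitting space accounting of Corollary~\ref{claim:qpopss_space_requirement}, which ensures the $\epsilon N$-type error bound holds globally with $1/\epsilon$ total counters), the estimate is bounded by $f_{N_E}(e) + \epsilon N_E$. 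The only subtlety here is that $N$ itself is computed as $\sum_i N[i]$ at line 2 of Algorithm~\ref{algo:frequent_elements_query}, so one must note that the $N_E$ appearing in the error term dominates any such intermediate sum; this is immediate from $N' \le N_E$ and monotonicity of each $N[i]$.

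\textbf{Lower bound.} For $f_{N_S}(e) - \mathbb{D}(e) \leq \hat{f_N}(e)$: every occurrence of $e$ observed strictly before the query starts (there are $f_{N_S}(e)$ of them) has either already been flushed into the owner's \shortsubalgorithmname instance --- in which case it is counted in $\hat{f_N}(e)$, possibly as part of a counter that survived eviction, and Space-Saving never decreases a tracked element's estimate below its true partial count --- or it is still sitting in some delegation filter at the query's start. The number of occurrences of $e$ in the latter category is exactly $\mathbb{D}(e)$ by definition. Hence $\hat{f_N}(e) \geq f_{N_S}(e) - \mathbb{D}(e)$. One should be careful that $e$ might have been evicted from its \shortsubalgorithmname instance entirely; but then its estimate is $0$ only if its true count is below $F_{\min}$, and since $F_{\min}$ bounds the undercount this case is absorbed --- more cleanly, one observes that an evicted-and-reinserted element's estimate is at least the current $F_{\min}$, which still lies above the true count it replaced, so the inequality $\hat f_N(e) \ge f_{N_S}(e) - \mathbb D(e)$ is never violated. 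Finally, combining with Lemma~\ref{claim:underestimation} ($\mathbb{D}(e) < T\cdot E$) yields the concrete slack if a distribution-free constant is wanted, but the Lemma as stated keeps $\mathbb{D}(e)$ symbolic.

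\textbf{Main obstacle.} The delicate part is the lower bound bookkeeping: precisely tracking which pre-query occurrences of $e$ are guaranteed to be reflected in $\hat{f_N}(e)$ despite (i) possible eviction of $e$ from its \shortsubalgorithmname heap and (ii) the concurrent handover of filters during the query interval. The clean way is to fix the query's start point, partition the $f_{N_S}(e)$ early occurrences into "already in some \shortsubalgorithmname counter's history" versus "still buffered", bound the second class by $\mathbb{D}(e)$, and invoke the Space-Saving invariant that a tracked element's estimate is always at least the number of its occurrences since it was (last) inserted, plus the $F_{\min}$ value it inherited on insertion --- which together never undershoot the true count minus the buffered mass. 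Everything else reduces to monotonicity of counters and of the per-thread $N[i]$, which is routine.
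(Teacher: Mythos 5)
Your proof is correct and follows essentially the same route as the paper's: the lower bound comes from accounting for the at most $\mathbb{D}(e)$ occurrences of $e$ still buffered in delegation filters at the query's start, and the upper bound comes from counter monotonicity together with the Space-Saving overestimation bound of claim 3 in Lemma~\ref{claim:meta_imported}, evaluated at $N_E$ where it is maximized. You are in fact somewhat more careful than the paper's three-sentence argument, in particular in handling the case where $e$ may have been evicted from its \shortsubalgorithmname instance, which the paper's proof does not explicitly address.
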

\begin{proof}
We have that $f_{N_S}(e) \leq f_{N_E}(e)$.
There are at most $\mathbb{D}(e)$
counts of element $e$ that have not yet been inserted into the \shortsubalgorithmname instance of the owner of $e$, therefore, $f_{N_S}(e) - \mathbb{D}(e) \leq \hat{f_N}(e)$ is the minimum value a counter can assume. The maximum over-estimation of \shortsubalgorithmname is $\epsilon N$, which is maximized at the end of a query when $N_{E}$ elements have been processed. Therefore, the estimated count of an element is at most $\hat{f_N}(e) \leq f_{N_E}(e) + \epsilon N_{E}$.
\end{proof}

We provide a consistency guarantee for the set of elements reported by \shortalgorithmname, aligning with Definition \ref{def:epsapprox} for queries spanning more than 0 updates.

\begin{theorem}
\label{claim:consistency_freq_elems}
 Given a stream $\mathcal{S}$ of $N$ elements, 
\shortalgorithmname is guaranteed to report the set $F$ containing all elements $e\in U$ with $f(e)_{N_S} > \phi N_{S} + \mathbb{D}(e)$, and no elements $e\in U$ with $f(e)_{N_E} < (\phi-\epsilon) N_{S} - \epsilon(N_{E}-N_S)$, where $0<\epsilon\leq \phi<1$.
\end{theorem}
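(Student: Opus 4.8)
The plan is to derive the two required properties of the reported set $F$ directly from Lemma~\ref{claim:consistency_count}, which already bounds $\hat{f_N}(e)$ from both sides in the presence of concurrency, together with the fact that \shortalgorithmname's query selects exactly those tracked elements whose estimated count exceeds the threshold $\phi N$, where $N$ is the stream-length estimate computed at the start of the query (so $N = N_S$ in the notation of the analysis).

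\textbf{Completeness (every truly frequent element is reported).} First I would assume $f_{N_S}(e) > \phi N_S + \mathbb{D}(e)$ and show $e$ is tracked and passes the threshold test. From Lemma~\ref{claim:consistency_count}, $\hat{f_N}(e) \geq f_{N_S}(e) - \mathbb{D}(e) > \phi N_S$, so once $e$ is in its owner's \shortsubalgorithmname instance its estimate exceeds $\phi N_S$ and it is output by \textsc{QueryQOSS}. The one subtlety is arguing $e$ is actually tracked: this needs claim~2 of Lemma~\ref{claim:meta_imported} (anything occurring more than $F_{\min}$ times is tracked) combined with $F_{\min} \leq \lfloor N\epsilon\rfloor \leq \phi N$ from claim~1 and $\epsilon \leq \phi$; since $f_{N_S}(e) > \phi N_S \geq F_{\min}$ counted within the owner's substream, $e$ is retained. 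I would phrase this carefully because the ``occurrences'' relevant to $F_{\min}$ are those already flushed into \shortsubalgorithmname, hence the $-\mathbb{D}(e)$ term is exactly what makes the bookkeeping consistent.

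\textbf{Soundness (no too-infrequent element is reported).} For the converse, suppose $e$ is reported; then at query time its owner's \shortsubalgorithmname instance satisfied $\hat{f_N}(e) \geq \phi N_S$. Applying the upper bound of Lemma~\ref{claim:consistency_count} in reverse — more precisely, the overestimation bound restricted to the substream the owner has processed — gives $\hat{f_N}(e) \leq \hat{f}(e) \le (\text{flushed count of }e) + \epsilon N_S$, and the flushed count of $e$ is at most $f_{N_E}(e)$ while the stream has grown from $N_S$ to at most $N_E$. Rearranging $\phi N_S \leq f_{N_E}(e) + \epsilon N_S$ yields $f_{N_E}(e) \geq (\phi - \epsilon) N_S$, which I would then rewrite as $(\phi-\epsilon)N_S - \epsilon(N_E - N_S) \le (\phi-\epsilon)N_S \le f_{N_E}(e)$, i.e.\ contrapositively no element with $f_{N_E}(e) < (\phi-\epsilon)N_S - \epsilon(N_E-N_S)$ is reported. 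The extra slack term $-\epsilon(N_E-N_S)$ is a loosening that absorbs the at most $\epsilon(N_E-N_S)$ additional overestimation that concurrent updates during the query could, in the worst case, have injected into the counter before it was read.

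\textbf{Main obstacle.} The genuinely delicate part is not the algebra but pinning down \emph{which} value of $N$ the threshold uses and \emph{which} snapshot of each \shortsubalgorithmname instance the query observes, since the query holds each thread's try-lock only momentarily and different threads are visited at different real times. I would handle this by fixing, for the soundness direction, the worst case that a thread's instance is read as late as possible (stream length up to $N_E$, overestimation up to $\epsilon N_E$ but the threshold still computed against $N_S$), and for the completeness direction the worst case that $\mathbb{D}(e)$ occurrences are still stuck in delegation filters at the moment $e$'s owner is queried; Lemma~\ref{claim:consistency_count} is precisely the tool that packages both extremes, so the theorem follows by substituting its bounds into the threshold inequality $\hat{f_N}(e) \gtrless \phi N_S$ and simplifying.

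\begin{proof}
Let a query be invoked, reading $N_S \le N \le N_E$ as in Lemma~\ref{claim:consistency_count}, and recall the query reports exactly those elements $e$ whose owner's \shortsubalgorithmname estimate satisfies $\hat{f_N}(e) \ge \phi N_S$.

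For the first claim, let $e \in U$ with $f_{N_S}(e) > \phi N_S + \mathbb{D}(e)$. The occurrences of $e$ already flushed into its owner's instance number at least $f_{N_S}(e) - \mathbb{D}(e) > \phi N_S \ge \lfloor N_S\epsilon\rfloor \ge F_{\min}$ (using $\epsilon \le \phi$ and claim~1 of Lemma~\ref{claim:meta_imported}), so by claim~2 of Lemma~\ref{claim:meta_imported} $e$ is tracked. By the lower bound of Lemma~\ref{claim:consistency_count}, $\hat{f_N}(e) \ge f_{N_S}(e) - \mathbb{D}(e) > \phi N_S$, so $e$ passes the threshold test and is reported.

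For the second claim, suppose $e \in U$ is reported, so $\hat{f_N}(e) \ge \phi N_S$. By the upper bound of Lemma~\ref{claim:consistency_count}, $\phi N_S \le \hat{f_N}(e) \le f_{N_E}(e) + \epsilon N_E = f_{N_E}(e) + \epsilon N_S + \epsilon(N_E - N_S)$, hence $f_{N_E}(e) \ge (\phi-\epsilon)N_S - \epsilon(N_E - N_S)$. Contrapositively, no element with $f_{N_E}(e) < (\phi-\epsilon)N_S - \epsilon(N_E - N_S)$ is reported. Since $0 < \epsilon \le \phi < 1$ by assumption, this completes the proof.
\end{proof}
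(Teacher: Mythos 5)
Your proof is correct and follows essentially the same route as the paper: both directions come from substituting the two-sided bound of Lemma~\ref{claim:consistency_count} into the reporting condition $\hat{f}(e) > \phi N_S$ fixed at query start, with the soundness bound $\phi N_S - \epsilon N_E$ rewritten as $(\phi-\epsilon)N_S - \epsilon(N_E - N_S)$. Your extra step verifying that $e$ is actually tracked (via claims~1 and~2 of Lemma~\ref{claim:meta_imported}) is a welcome bit of rigor the paper leaves implicit.
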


\begin{proof}
The \shortalgorithmname algorithm reports all elements with an estimated count 
\begin{equation}
\label{eq:returned}
\hat{f(e)}_{N_S} > \phi N_S
\end{equation}
This is true since $N_S$ is calculated at the start of a query, all elements with $\hat{f_N}(e)>\phi N_S$ are reported, and \shortsubalgorithmname counters increase monotonically.

As shown in Lemma \ref{claim:consistency_count}, $\hat{f(e)}$ is at least (a) $f(e)_{N_S} - \mathbb{D}(e)$ and at most (b) $f(e)_{N_E} + N_E \epsilon$. 
Substituting $\hat{f(e)}$ for (a) in (\ref{eq:returned}) yields:
\begin{equation}
\label{eq:consist_recall}
    f(e)_{N_S} > \phi N_S+\mathbb{D}(e)
\end{equation}
Substituting $\hat{f(e)}$ for (b) in (\ref{eq:returned}) yields:
\begin{equation*}
    \label{eq:consistency_max_error}
    f(e)_{N_E} > \phi N_S- \epsilon N_E
\end{equation*}
Expression (a) and the negation of (b) together give that all elements $f(e)_{N_S} > \phi N_S$ and no elements $f(e)_{N_E} < \phi N_S- \epsilon N_E$ are reported by \shortalgorithmname.
\end{proof}

According to Theorem \ref{claim:consistency_freq_elems} \shortalgorithmname will report all elements occurring more than $\phi N_S$ times after processing $N_S$ elements, given that delegation filters are empty. With fixed-size delegation filters, \shortalgorithmname tends to report all elements more frequently than $\phi N_S$ as the stream length grows in relation to the delegation filter size. This notion is formalized in Theorem \ref{claim:perfectrecall_general} below.
\begin{theorem}
\label{claim:perfectrecall_general}
As the length of the stream tends to infinity, \shortalgorithmname reports \emph{all}
$\epsilon$-approximate frequent elements.
\end{theorem}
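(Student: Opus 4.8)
The plan is to take a limit in the consistency bound of Theorem~\ref{claim:consistency_freq_elems}. The intuition is simple: the "defect" terms that obstruct perfect recall — namely $\mathbb{D}(e) \le T\cdot E$ on the reporting side, and the slack $\epsilon(N_E - N_S)$ on the exclusion side — are bounded by constants that do not grow with $N$, whereas the threshold $\phi N_S$ scales linearly with the stream length. So any element with true frequency $f_N(e) > \phi N$ eventually satisfies $f_{N_S}(e) > \phi N_S + \mathbb{D}(e)$ once $N_S$ is large enough, and hence is reported.

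\textbf{Key steps.} First I would fix an arbitrary $\epsilon$-approximate frequent element $e$, i.e. one with relative frequency $f_N(e)/N$ bounded below by some constant exceeding $\phi$; write $f_N(e) \ge (\phi + \delta)N$ for a fixed $\delta > 0$ that holds in the limit (for a fixed distribution this is exactly the elements whose limiting frequency strictly exceeds $\phi$). Second, invoke Lemma~\ref{claim:underestimation} to bound $\mathbb{D}(e) < T\cdot E$, a constant independent of $N$. Third, bound a query's span: since the query scalability enhancement caps each thread's processed-but-unhandled-over count, and by the design of Algorithm~\ref{algo:update}/\ref{algo:frequent_elements_query}, the number of updates concurrent with a single query, $N_E - N_S$, is bounded by a constant (on the order of $T\cdot E$) — this is the same observation that underlies the "$T\cdot E$ reporting delay" claim in Section~\ref{sec:fequeries}. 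Fourth, plug these into the recall condition $f_{N_S}(e) > \phi N_S + \mathbb{D}(e)$ from Theorem~\ref{claim:consistency_freq_elems}: using $f_{N_S}(e) \ge f_N(e) - (N - N_S) \ge (\phi+\delta)N - \text{const}$ and $N_S \le N$, the inequality reduces to $\delta N > \text{const}$, which holds for all sufficiently large $N$. Therefore $e \in F$, and taking $N \to \infty$ gives that every $\epsilon$-approximate frequent element is eventually reported.

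\textbf{Main obstacle.} The delicate point is making precise what "all $\epsilon$-approximate frequent elements" means asymptotically and justifying that $N_E - N_S$ is genuinely $O(1)$ rather than something that could grow with the number of in-flight operations. The paper's informal argument bounds the handover delay by "$E$ element occurrences times the number of threads," so I would lean on that: a single query touches each thread's \shortsubalgorithmname instance once, acquiring each try-lock, and between the start and end of the query each of the $T$ threads can contribute at most $E$ further buffered occurrences before its next handover, giving $N_E - N_S \le T\cdot E$. With that constant in hand the limit argument is routine. A secondary subtlety is that for a truly adversarial (non-stationary) stream "frequent" must be interpreted relative to the instantaneous $N$; the cleanest statement restricts to streams where each such element's count grows at least linearly, which is implicit in Definition~\ref{def:epsapprox} being applied at stream length $N \to \infty$.
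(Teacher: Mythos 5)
Your proposal is correct and follows essentially the same route as the paper: the paper also starts from the recall condition $f_{N_S}(e) > \phi N_S + \mathbb{D}(e)$ of Theorem~\ref{claim:consistency_freq_elems}, uses that $\mathbb{D}(e)$ is bounded (Lemma~\ref{claim:underestimation}), divides by $N_S$, and lets $N_S \to \infty$ so that the $\mathbb{D}(e)/N_S$ term vanishes and the condition reduces to $P(e) > \phi$. Your extra bookkeeping with $\delta$ and the bound on $N_E - N_S$ is a more careful rendering of the same argument (the query-span bound is not actually needed for the recall-only statement), so there is nothing substantively different to report.
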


\begin{proof}
We start by normalizing inequality (\ref{eq:consist_recall}):
\begin{equation*}
 \frac{f(e)_{N_S}}{N_S} > \phi + \frac{ \mathbb{D}(e)}{N_S }
\end{equation*}
The element $e$ occurs a fraction $P(e)= \frac{f_{N_S}(e)}{N_S}$ of the time in the stream. We then have that:
\begin{equation*}
     \lim_{N_S \to \infty} P(e) > \phi + \frac{ \mathbb{D}(e)}{N_S } \to P(e) > \phi
\end{equation*}
Thus, when enough elements have been processed, all elements with $P(e) > \phi$ are reported.
\end{proof}

\begin{theorem}
\label{claim:perfectrecall_zipf}
As $N$ tends to infinity, \shortalgorithmname achieves perfect recall when processing a stream constructed by drawing elements from a Zipf distribution with parameter $a>1$.
\end{theorem}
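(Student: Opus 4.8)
The plan is to specialize Theorem~\ref{claim:perfectrecall_general} to the noiseless Zipf case, so the main task is to verify that \emph{every} $\epsilon$-approximate frequent element survives the limiting argument. First I would recall what "perfect recall" means here: we must report all elements $e$ with $f_N(e) > \phi N$ (cf.\ Definition~\ref{def:epsapprox}), i.e., all $e$ with $P(e) := f_N(e)/N > \phi$. For a noiseless Zipf stream with parameter $a>1$, the fraction of occurrences of the element of rank $r(e)$ is $P(e) = \frac{1}{\zeta(a)\, r(e)^a}$, which is a fixed positive constant independent of $N$. Hence each such element occurs a constant fraction of the time, and the hypothesis $P(e) > \phi$ from the proof of Theorem~\ref{claim:perfectrecall_general} is either satisfied or not, uniformly in $N$.

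The key steps, in order, are: (1) Apply Theorem~\ref{claim:consistency_freq_elems}: \shortalgorithmname reports every $e$ with $f_{N_S}(e) > \phi N_S + \mathbb{D}(e)$. (2) Normalize by $N_S$ to get $P(e) > \phi + \mathbb{D}(e)/N_S$. (3) Bound $\mathbb{D}(e)$: by Lemma~\ref{claim:underestimation} it is at most $T\cdot E$, a fixed constant (and Lemma~\ref{claim:abs_err} gives an even tighter bound $\frac{T\cdot E}{\zeta(a)\, r(e)^a}$ in the Zipf case), so $\mathbb{D}(e)/N_S \to 0$ as $N_S \to \infty$ regardless of which element is considered. (4) Since $N_S \le N$ and $N \to \infty$ forces $N_S \to \infty$, for every fixed element $e$ with $P(e) > \phi$ there is a threshold beyond which $\phi + \mathbb{D}(e)/N_S < P(e)$, so $e$ is reported. (5) Conclude that, in the limit, all elements with $P(e) > \phi$ — i.e., all $\epsilon$-approximate frequent elements — are reported, which is precisely perfect recall.

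One subtlety worth addressing explicitly: the set of $\epsilon$-approximate frequent elements is \emph{finite} for a Zipf stream (only finitely many ranks $r$ satisfy $\frac{1}{\zeta(a) r^a} > \phi$), so we may take the maximum of the finitely many per-element thresholds from step~(4) and obtain a single $N$ beyond which all of them are reported simultaneously; this is what makes the limit statement clean rather than merely pointwise. The main obstacle — such as it is — is mostly bookkeeping: making sure the constant bound on $\mathbb{D}(e)$ is genuinely independent of $N$ (it is, since $E$ and $T$ are fixed design parameters), and noting that the Zipf frequencies are exactly the stationary occurrence fractions, so that "as $N\to\infty$" is meaningful for the noiseless generative model. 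No new estimates beyond Lemmas~\ref{claim:underestimation}/\ref{claim:abs_err} and Theorems~\ref{claim:consistency_freq_elems}/\ref{claim:perfectrecall_general} are required; the proof is essentially "instantiate Theorem~\ref{claim:perfectrecall_general} with the explicit Zipf frequencies and use finiteness of the frequent set."
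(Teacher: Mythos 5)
Your proposal is correct and follows essentially the same route as the paper: both start from the recall condition $f_{N_S}(e) > \phi N_S + \mathbb{D}(e)$ of Theorem~\ref{claim:consistency_freq_elems}, substitute the Zipf occurrence fraction $\frac{1}{\zeta(a)r(e)^a}$, and observe that the $\mathbb{D}(e)/N_S$ correction vanishes as $N_S\to\infty$ since $\mathbb{D}(e)\le T\cdot E$ is a fixed constant. Your added remark about the finiteness of the frequent set, which upgrades the pointwise limit to a single uniform threshold, is a small rigor improvement over the paper's presentation but not a different argument.
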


\begin{proof}
Using the expression from inequality (\ref{eq:consist_recall}), we can rewrite it in its Zipfian form:
\begin{equation*}
\frac{1}{\zeta(a)r(e)^a} > \phi + \frac{\mathbb{D}(e)}{\zeta(a)r(e)^a N_S}
\end{equation*}
Solving for r(e) gives:
\begin{equation*}
    \bigg( \frac{1}{\zeta(a)\phi} - \frac{\mathbb{D}(e)}{\zeta(a)\phi N_S}\bigg)^\frac{1}{a} > r(e)
\end{equation*}
Simplifying yields:
\begin{equation*}
\label{eq:zipf_perfectrecall}
    \bigg( \frac{1}{\zeta(a)\phi} \bigg)^\frac{1}{a} \bigg(1 - \frac{\mathbb{D}(e)}{N_S}\bigg)^\frac{1}{a} > r(e)
\end{equation*}
We then have that:
\begin{equation*}
    \lim_{N_S \to \infty} \bigg( \frac{1}{\zeta(a)\phi} \bigg)^\frac{1}{a} \bigg(1 - \frac{\mathbb{D}(e)}{N_S}\bigg)^\frac{1}{a} > r(e) \to \bigg( \frac{1}{\zeta(a)\phi} \bigg)^\frac{1}{a} > r(e)
\end{equation*}
Meaning that all elements with a rank lower than $\bigg( \frac{1}{\zeta(a)\phi} \bigg)^\frac{1}{a}$ are guaranteed to be reported given that enough elements have been processed.
\end{proof}
To summarize the results, this analysis underscores the memory-space-related benefits of operating on a subset of the original domain (Corollaries 
\ref{claim:qpopss_space_requirement} and \ref{claim:qpopss_zipf_requirements}). It also highlights the impact of excluding elements within Delegation Filters on query consistency and accuracy, offering bounds for the latter (Lemmas \ref{claim:abs_err} and \ref{claim:underestimation}). Moreover, it addresses the consistency guarantees of frequent elements amidst concurrent queries and updates (Theorems \ref{claim:consistency_freq_elems}, \ref{claim:perfectrecall_general}, and \ref{claim:perfectrecall_zipf}). 
Having evaluated the analytical properties of \shortalgorithmname, we continue 
with an experimental evaluation.

\section{Evaluation}\label{sec:evaluation}

This section contains the in-depth empirical evaluation of \shortalgorithmname 's properties.
We begin by describing the experiment setup in detail; then, we investigate the performance of \shortsubalgorithmname compared to Space-Saving. We also compare \shortalgorithmname and the representative works \cite{zhang_efficient_2014, TopKapi} on throughput and scalability, accuracy and memory requirements, as well as query latency.

\subsection{Experimental Setup}
\label{sec:experimental_setup}

\noindent\textbf{Computing platform:} 
The experiments were carried out on a server with 2 Intel Xeon X5675 processors, each with 6 physical cores and 2-way hyperthreading enabled, for 12 physical and 24 logical cores. Each core operates at a clock speed of 3.07GHz, with cache sizes L1 32KB, L2 256KB, and L3 12MB. The main memory was 70 GB, and the installed operating system was Debian 10.9. The code was compiled for the x86 architecture using GCC 8.3.

\noindent\textbf{Data sets:}
Both synthetic and real data were used in the evaluation of \shortalgorithmname.
Unless otherwise stated, the synthetic data sets contain $100M$ elements, sampled from a universe of $|U|=100M$ elements according to the probability mass function of the Zipf distribution such that $f_N(e)=\frac{N}{H_{|U|,a}r(e)^a}$~\cite{zipf1949human}, where $H_{|U|,a} = \sum_{i=1}^{|U|}\frac{1}{i^a}$. 
In total, 11 synthetic data sets were created from Zipf distributions with skew parameter $a$ ranging from $0.5$ to $3$ in increments of $0.25$. Hereinafter, we use the term skew to mean $a$ in $f_N(e)=\frac{N}{H_{|U|,a}r(e)^a}$.

\begin{table}[ht]
  \begin{varwidth}[b]{0.45\linewidth}
    \centering
\begin{tabular}{|l|l|l|l|}
\hline
\diaghead{\hskip1.5cm}{Data set}{$\phi$}        & $10^{-3}$ & $10^{-4}$ & $10^{-5}$ \\ \hline
CAIDA & 44    &   1555     & 10463   \\ \hline
Zipf a=1.25 & 74 & 467 & 2952 \\
\hline
Zipf a=2 & 24 & 77 & 246 \\
\hline
Zipf a=3 & 9 & 20 & 43 \\
\hline
 \end{tabular}
    \caption{The number of frequent elements for different threshold values of $\phi$ in the CAIDA and selected Zipf data sets.}
    \label{table:frequent_elems}
  \end{varwidth}%
  \hfill
  \begin{minipage}[b]{0.50\linewidth}
    \centering
    \vspace{0.5em}
    \includegraphics[width=0.95\columnwidth]{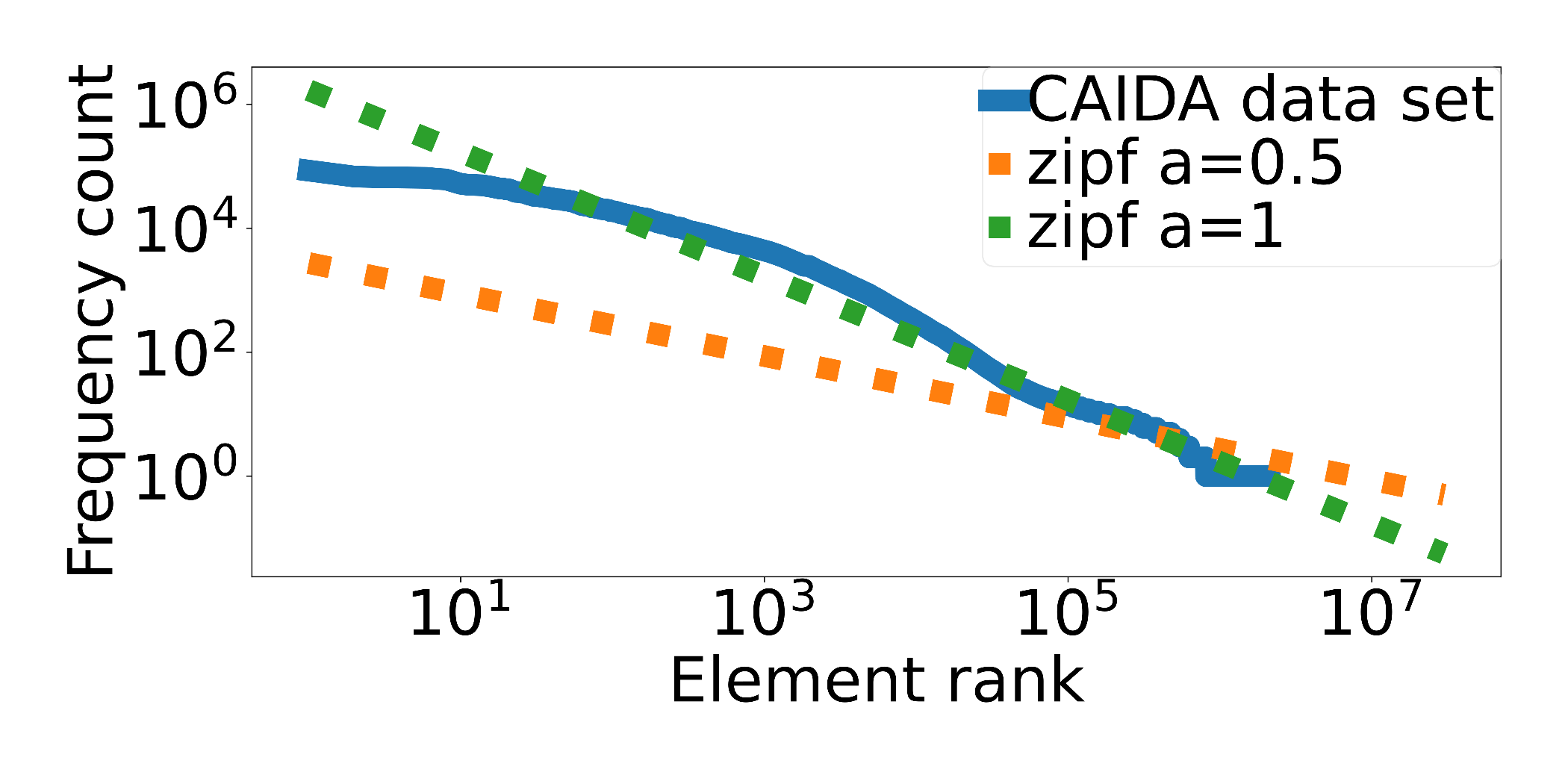}
    \captionof{figure}{Rank and count of each unique element in the CAIDA data set. Zipf distributions with skew 0.5 and 1 are plotted as a guide. Note the logarithmic scale on x- and y-axes.}
    \label{fig:real_data_distribution}
  \end{minipage}
\end{table}

A real-world data set was extracted from the CAIDA Anonymized Internet Traces 2019 data set~\cite{CAIDA} by selecting an arbitrary 60-minute window of IP packet traffic in an arbitrary direction of a backbone interface. Each packet contains a 5-tuple (flow) of source-destination IP addresses, source-destination ports, and protocol used. The set contains roughly $21M$ packets belonging to around $2.1M$ unique flows. As shown in figure~\ref{fig:real_data_distribution}, the distribution of the flows is similar to that of synthetic data generated with a skew parameter of 1.
The number of frequent elements in the data set is detailed in table~\ref{table:frequent_elems} for different values of $\phi$. As for the synthetic data, the expression $\big(\frac{1}{\zeta(a)\phi}\big)^\frac{1}{a}$ describes the least element rank for a certain threshold value $\phi$ and Zipf distribution skew parameter~$a$.

\noindent\textbf{Metrics:} 
Evaluation metrics include query and update \emph{throughput} (millions of operations per second), \emph{accuracy}, \emph{memory consumption} (megabytes reserved), and the \emph{latency} (the time between the start and end of a query in microseconds). 
More specifically, accuracy is measured as \emph{average relative error} (ratio of estimated element occurrences to actual occurrences), \emph{precision} (ratio of actual positive frequent elements to the number of reported frequent elements), and \emph{recall} (ratio of actual positive frequent elements to the number of actual frequent elements). 

\noindent
\textbf{Measurement Methodology:}
Our experiments measured throughput as the number of operations (updates and queries) per time unit. Throughput experiments were executed for 10 seconds while processing a stream of 100 Million elements repeatedly for the duration.
The number of counters of each baseline was set according to the respective theoretical bound. The query latency was calculated by measuring processor clock cycles between a query's start and end. This was done for multiple queries whose mean value was calculated. The accuracy metrics were measured by processing a stream and issuing a single query at the end. The memory consumption was calculated by selecting an accuracy level and computing the memory consumption for the different baselines according to their theoretical space/accuracy bounds.

\medskip

\noindent\textbf{Baselines:}
Our  of \shortalgorithmname (which we share with the community in open-source format in~\cite{Delegation_Space-Saving_2022}, along with the code to generate the synthetic data and links to the real data used in this evaluation) was compared to the following baselines (motivated also in Section~\ref{sec:problem_analysis}, with more detail provided here, so as to better explain the associations and comparisons)
\begin{enumerate}[topsep=0pt, leftmargin=*]
\setlength\itemsep{0em}
    \item A single-threaded \shortsubalgorithmname 
    \item \shortalgorithmname using Space-Saving as the inner algorithm
    \item PRIF \cite{zhang_efficient_2014}
    \item Topkapi~\cite{TopKapi}
\end{enumerate} 
To examine the speedup between a single-threaded \shortsubalgorithmname and the multi-threaded \shortalgorithmname,  (1) was selected as a baseline, while (2) facilitates studying the impact of the query timeliness-improvement of \shortsubalgorithmname.
(3) and (4) were chosen since they are representative multi-threaded approaches to the frequent elements problem.

\textit{PRIF}~\cite{zhang_efficient_2014} entails a reserved merging thread that periodically merges updates from thread-local algorithm instances. 
As an algorithmic component, the authors present OWFrequent, an optimized version of Frequent~\cite{misra_finding_1982} that supports weighted updates. Due to the merging thread, extra latencies are introduced. The authors propose an update coefficient $\beta$ that controls the rate at which the merging thread receives updates from the thread-local OWFrequent algorithm instances.
The authors give a rigorous analysis and evaluation of the approach. The evaluation shows that PRIF is somewhat precise in reporting the frequent elements and that there is a good speedup compared to a single-threaded version.
Due to the absence of open-source implementations of PRIF, one was created for evaluation purposes \cite{Delegation_Space-Saving_2022}. 
As in the authors' implementation, a shared-bounded buffer was implemented with semaphores \cite{robbins_unix_2003} to handle the communication between the sub-threads and the merging thread. 
OWFrequent was implemented using the frequent items sketch algorithm package 
of~\cite{finding_frequent_items_website} as a base. 
The implementation of \shortsubalgorithmname was also based on said algorithm package.

The \emph{Topkapi Sketch}~\cite{TopKapi} combines the concepts of the Frequent Algorithm with the Count-Min Sketch by keeping a Frequent counter in each cell of a Count-Min Sketch matrix. 
A Topkapi Sketch with $\log(2\frac{N}{\delta})$ rows and $\frac{1}{\epsilon}$ counters (see analysis section in \cite{cormode_improved_2005}), solves the $\phi$-approximate frequent elements problem outlined in definition \ref{def:epsapprox}.  The authors present a multi-threaded approach wherein multiple streams are processed concurrently. At the end of processing, the summaries are merged into a final result, promising to deliver the frequent elements of the combined stream, adhering to the theoretical limits outlined. The code is open source and was adopted for the relative study in this paper.

\medskip

\noindent\textbf{Baseline adaptations:}
To ensure a fair comparison, certain adaptations were made.
Since PRIF supports concurrent queries and updates, we implemented this algorithm \emph{as is}.                                
Regarding Topkapi, as it lacks design elements that enable concurrent queries and updates, when it comes to throughput, the experiments allowed it to perform \emph{thread-unsafe} queries without synchronization (as its original design was not targeting concurrent updates~\cite{TopKapi}). This adaptation clearly favors the throughput performance of Topkapi, which would otherwise require a synchronization mechanism. However, it allows us to establish a best-case estimate for Topkapi's performance under parallelism and compare it with our approach. This does, however, not affect the accuracy evaluation of Topkapi since a single query is carried out at the end of an element stream, irrespective of the synchronization mechanisms used. 

\medskip

\noindent\textbf{Experiment Parameters:} 
Across experiments, the following parameters are varied: \emph{skewness} of the input distribution, \emph{frequent element threshold parameter ($\phi$, controlling query size)}, \emph{number of dispatched threads}, \emph{query rate}, and \emph{stream length}. The latter simulates the point in the execution when a query occurs.
To keep the vast parameter space minimal, $\epsilon$, present in all baselines, is set to $\epsilon=0.1\phi$. The PRIF-specific $\beta$ parameter controlling the delay at which elements are sent 
to the merging thread is set to $\beta=0.9\epsilon$, as in the authors' evaluation \cite{zhang_efficient_2014}. The Topkapi-specific \emph{rows} (also present in the Count-Min Sketch \cite{cormode_improved_2005}) parameter controlling the probability of failure to estimate an element count within a certain error is affixed to 4, which was also the case in the authors' evaluation~\cite{TopKapi}. 
The analysis in Section~\ref{sec:algo_impl_analysis} implies that \shortalgorithmname finds all frequent elements with $\frac{1}{\epsilon}$ counters for both the real-world data sets and the Zipf distribution data sets with $a\leq1$. However, for Zipf data sets with $a>1$, $\frac{1}{\epsilon T}^{\frac{1}{a}}$ counters suffice.

\subsection{\subalgorithmname}
We begin our evaluation by examining the impact of the inner algorithm employed by \shortalgorithmname. We study the differences in latency and throughput between \shortsubalgorithmname and Space-Saving since both algorithms have identical accuracy and memory consumption.

Figure \ref{fig:maxheap_optimization} contains the experimental results, where a query is carried out every 10000 updates, and $\phi$ is set to $10^{-4}=10\epsilon$ to limit the parameter space. 

Figures~\ref{fig:opt_skew_throughput} and \ref{fig:opt_skew_latency} show the latency and throughput over varied skew levels. QOSS yields a higher throughput than the baseline for all skew values, and the latency is significantly lower with QOSS (up to 5x lower) for low skew values of 0.5-1.25. The improvement is less noticeable at the higher skew levels in both figures \ref{fig:opt_skew_throughput}  and \ref{fig:opt_skew_latency}. The performance improvement of \shortsubalgorithmname over Space-Saving is significant when the skew level $a < 1$. This is due to \shortsubalgorithmname's more efficient query procedure and the fact that both algorithms use \(\frac{1}{\epsilon}\) counters instead of \(\big(\frac{1}{\epsilon}\big)^{\frac{1}{a}}\) for higher skew levels. Additionally, lower skew levels contain more frequent elements, as shown in Table \ref{table:frequent_elems}.

    \begin{figure}[t!]
    \begin{subfigure}{0.49\columnwidth}
    \centering
    \includegraphics[width=0.8\columnwidth]{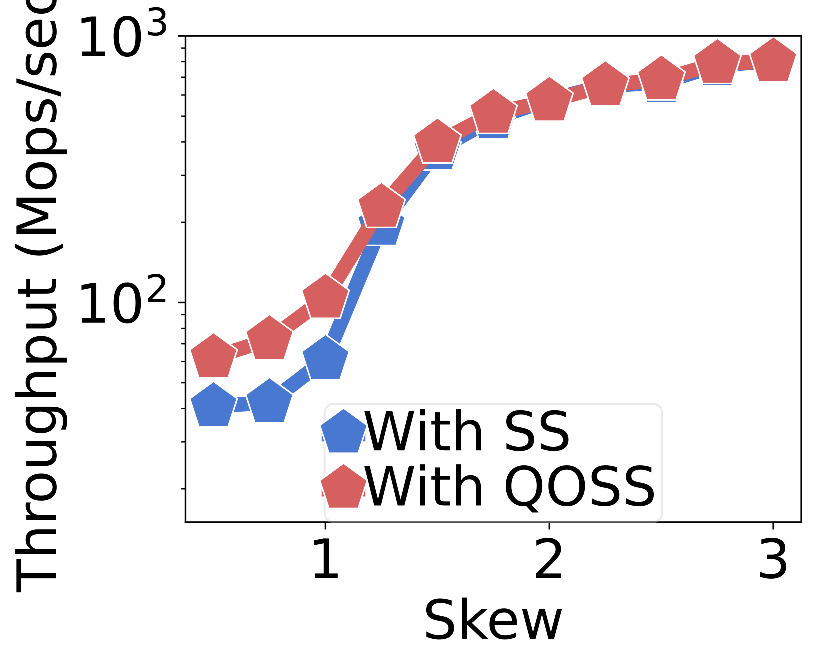}
    \caption{\newtext{Throughput as skew varies  along the x-axis, T=24 threads. Note the logarithmic y-axis 
    }}
    \label{fig:opt_skew_throughput}
    \end{subfigure}
    \begin{subfigure}{0.49\columnwidth}
    \centering
    \includegraphics[width=0.8\columnwidth]{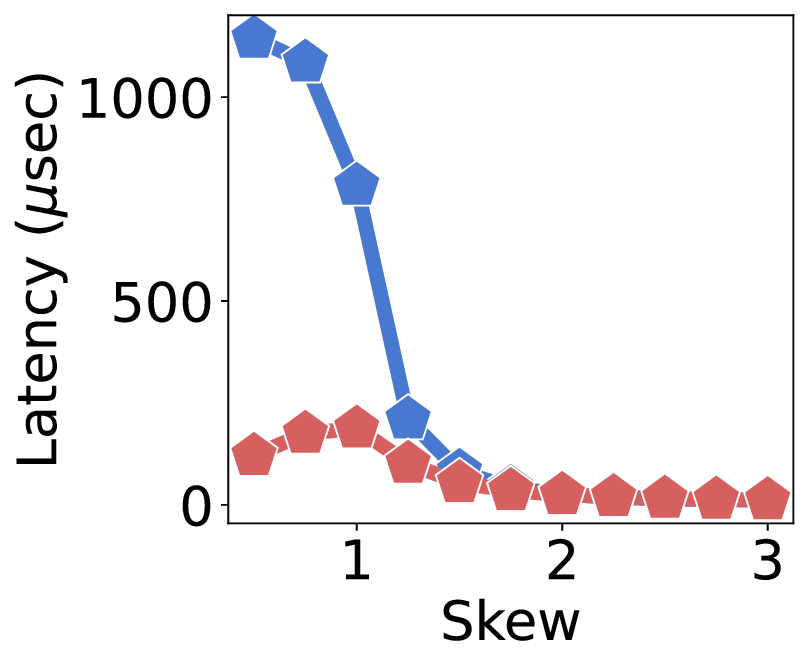}
    \caption{\newtext{Latency as skew varies along the x-axis, T=24 threads}}
    \label{fig:opt_skew_latency}
    \end{subfigure}
    \begin{subfigure}{0.49\columnwidth}
    \centering
    \includegraphics[width=0.8\columnwidth]{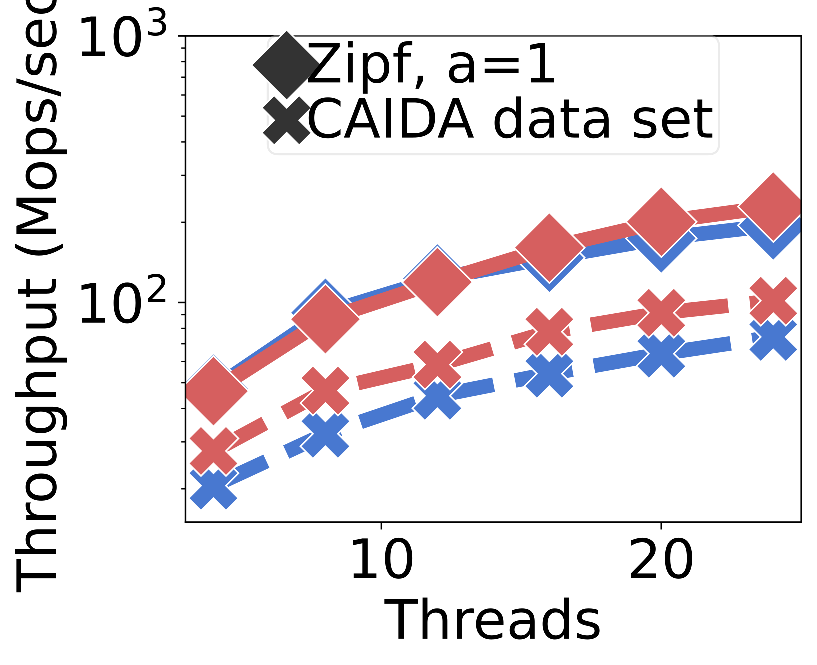}
    \caption{\newtext{Throughput as threads vary  along the x-axis while processing Zipf and CAIDA data sets. Note the logarithmic y-axis.}}
    \label{fig:opt_threads_throughput}
    \end{subfigure}
    \begin{subfigure}{0.49\columnwidth}
    \centering
    \includegraphics[width=0.8\columnwidth]{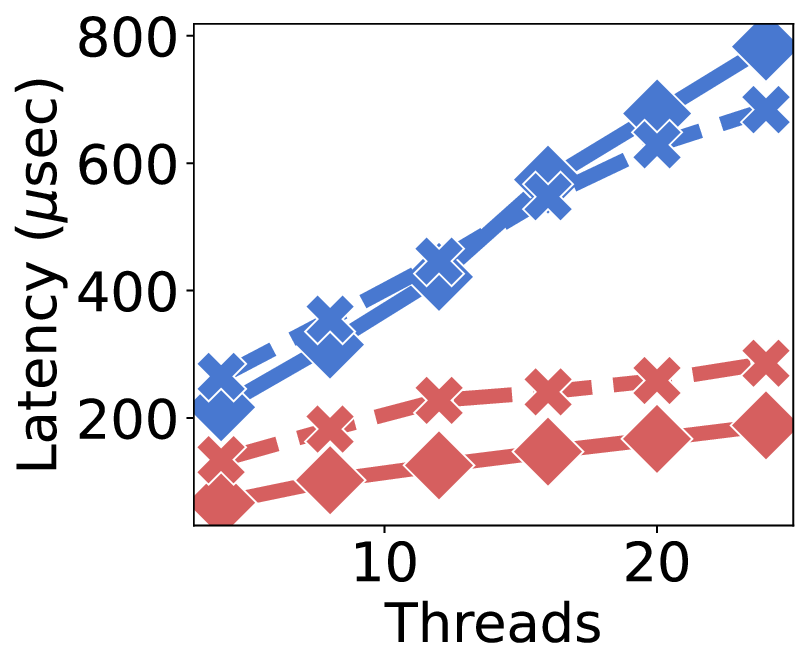}
    \caption{\newtext{Latency as threads vary  along the x-axis while processing Zipf and CAIDA data sets.}}
    \label{fig:opt_threads_latency}
    \end{subfigure}
    \caption{\newtext{Throughput and query latency when \shortalgorithmname employs QOSS or Space-Saving as the inner algorithm. Queries make up 0.01\% of the operations and $\phi=10^{-4}$.}}
    \label{fig:maxheap_optimization}
    \vspace{-3mm}
    \end{figure}

In Figures~\ref{fig:opt_threads_throughput} and~\ref{fig:opt_threads_latency}, the scalability of the approaches is evaluated as the number of dispatched threads varies. The experiments use two data sets: the synthetic Zipf data with skew parameter a=1 and the CAIDA IP-packet trace. The results show that using \shortsubalgorithmname yields up to 1.3x higher throughput for the CAIDA data set; moreover, the improved algorithmic implementation yields lower latency (e.g., up to 4x lower latency for both data sets). In particular, Figure~\ref{fig:opt_threads_latency} shows that the latency of \shortsubalgorithmname scales significantly better with the increasing number of threads, compared to Space-Saving.

The difference between the two methods steadily increases with the number of threads in favor of \shortsubalgorithmname. 
This improvement is attributed to the fact that \shortsubalgorithmname, the query method of which is described in \ref{sec:QOSS}, only needs to check a fraction of its counters to determine which elements have a count above the user-defined threshold $\phi$, while Space-Saving needs to inspect all of its counters.

\subsection{Throughput and Scalability}
To understand how \shortalgorithmname compares to representative approaches Topkapi \cite{TopKapi} and PRIF~\cite{zhang_efficient_2014}, 
we begin by focusing on throughput and scalability when varying the support parameter $\phi$, the number of dispatched threads and the number of concurrent queries. We also compare the internal throughput scalability of \shortalgorithmname (i.e., speedup) to a single-threaded \shortsubalgorithmname algorithm.

Figure~\ref{fig:throughput_skew} shows the update and query throughput relative to the skew of the input data. Three different values of $\phi$ signify queries containing various amounts of frequent elements (see Table \ref{table:frequent_elems}). In each execution, 24 threads are dispatched for each algorithm. The three plots in Figures \ref{fig:sub_throughput_skew_0_queries}, \ref{fig:sub_throughput_skew_0.01_queries}, and \ref{fig:sub_throughput_skew_0.02_queries} contain the results as the number of queries carried out varies.
\shortalgorithmname shows a higher throughput than Topkapi in all cases except for when no queries are carried out, and the skew value is between 0.5 and 1 (figure~\ref{fig:sub_throughput_skew_0_queries}). As the query rate increases from fig \ref{fig:sub_throughput_skew_0_queries} to fig \ref{fig:sub_throughput_skew_0.02_queries}, \shortalgorithmname maintains a high throughput across all skew levels and values of $\phi$, several times higher than Topkapi, which was shown to be highly scaleable in \cite{TopKapi}. As expected, for query rates above 0, the computationally heavy merge operations carried out by Topkapi yield a diminished overall throughput. On the contrary, our algorithms continue to maintain a high throughput despite the presence of concurrent queries.

Due to its query-prioritized design, PRIF copes well with an increased query rate. However, the update throughput of PRIF seems to be strongly dependent on the threshold parameter $\phi$, as setting $\phi=10^{-5}$ yields very low throughput across all skew levels and for all query rates.
Overall, it is observed that \shortalgorithmname is the balanced choice, performing well in most circumstances and combinations of parameter variations.

The plots containing black lines in Figure \ref{fig:throughput_skew} show the speedup relative to a single-threaded \shortsubalgorithmname. The speedup of \shortalgorithmname with 24 dispatched threads compared to a single-threaded \shortsubalgorithmname is around 10-30x across all combinations of $\phi$, query rate, and skew. Interestingly, due to the efficiency of the delegation filters, \shortalgorithmname achieves a speedup greater than the number of dispatched threads in higher skew levels. As worker-threads in \shortalgorithmname swiftly insert elements in filters (an operation consisting of linear searching through a small fixed-size array and incrementing a counter), the single-threaded execution of \shortsubalgorithmname must update a more complex tree data structure, potentially performing multiple time-consuming swap operations to ensure maintained heap properties.
\begin{figure}[t!]
    \centering
    \begin{subfigure}{0.45\textwidth}
    \includegraphics[width=\columnwidth]{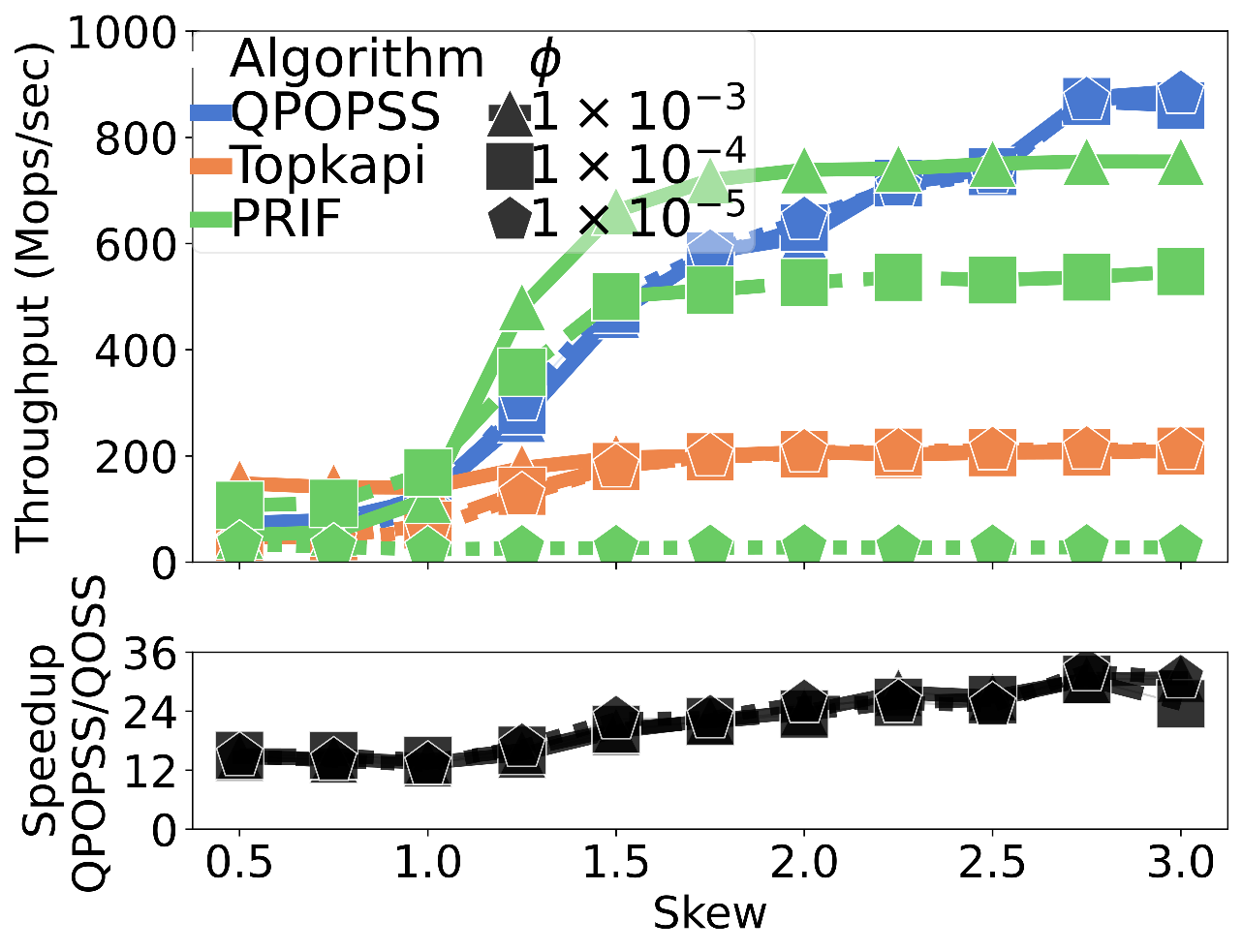}
    \caption{0\% queries.}
    \label{fig:sub_throughput_skew_0_queries}
    \end{subfigure}
    \begin{subfigure}{0.45\textwidth}
    \includegraphics[width=\columnwidth]{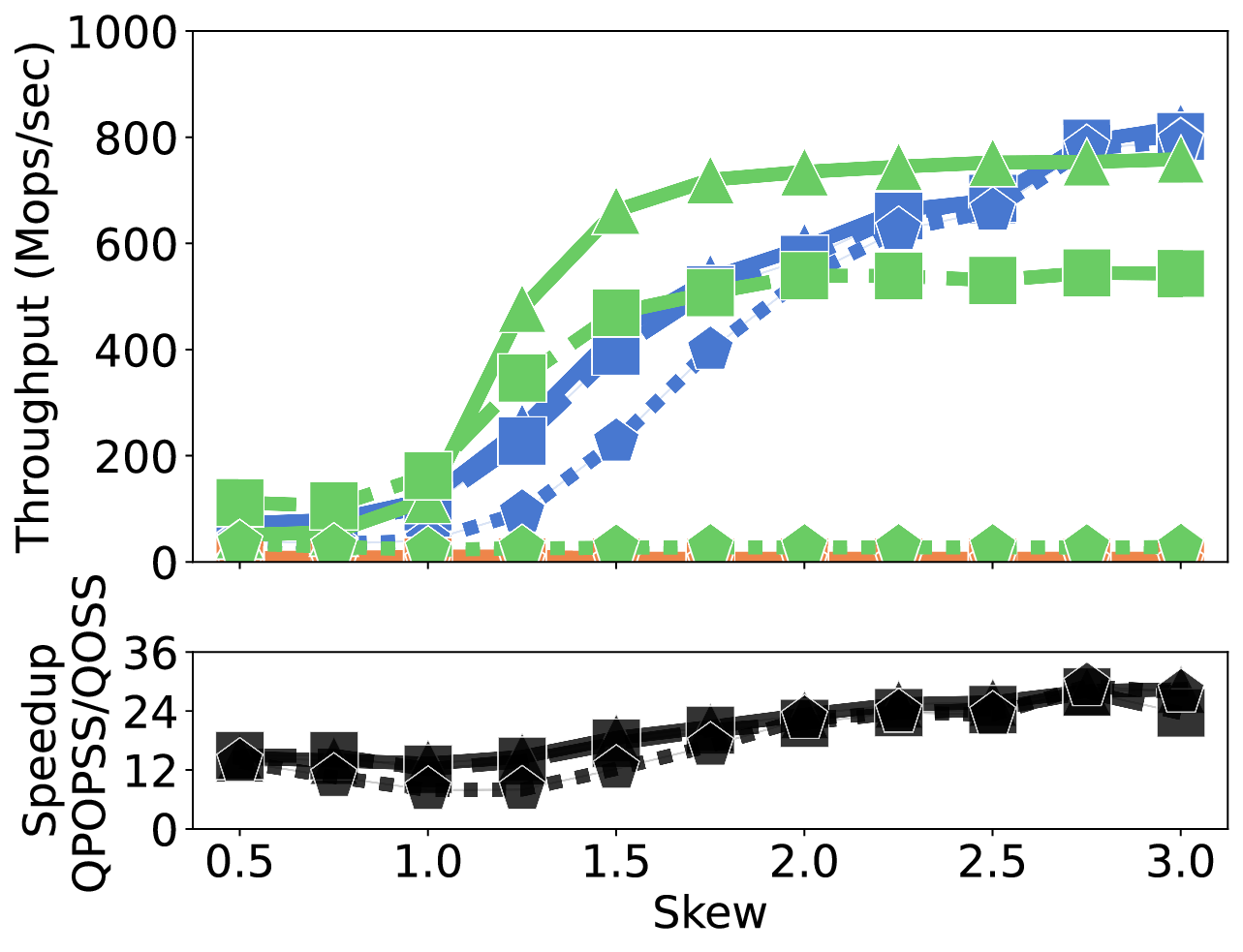}
    \caption{0.01\% queries.}
    \label{fig:sub_throughput_skew_0.01_queries}
    \end{subfigure}
    \begin{subfigure}{0.45\textwidth}
    \includegraphics[width=\columnwidth]{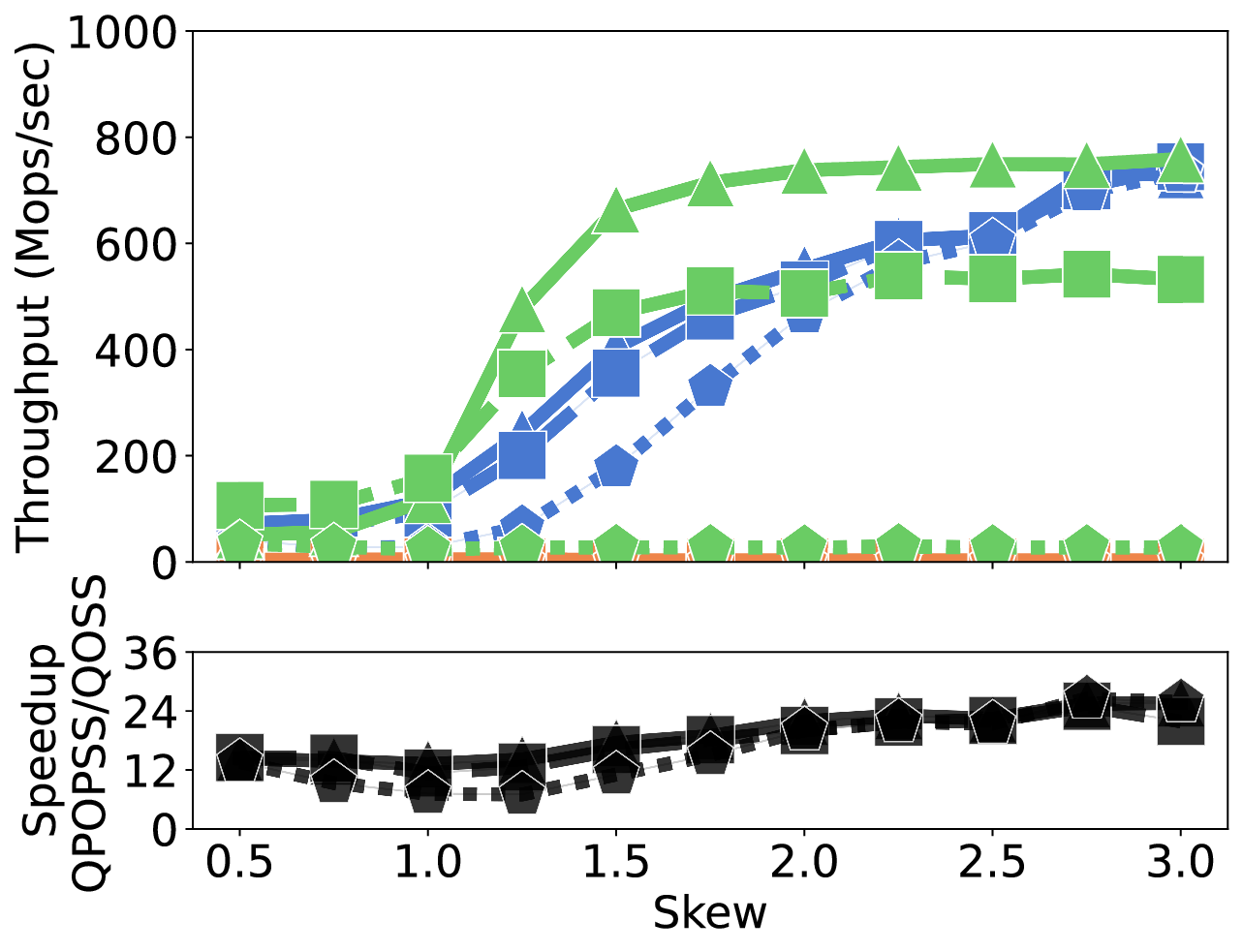}
    \caption{0.02 \% queries.}
    \label{fig:sub_throughput_skew_0.02_queries}
    \end{subfigure}
    \caption{Throughput in million operations per second and multicore speedup of \shortalgorithmname for different skew parameters of the synthetic Zipf data sets. The skew level varies along the x-axis, $T=24$ threads.}
    \label{fig:throughput_skew}
    \vspace{-3mm}
 \end{figure}
 
\begin{figure}[b!]
    \centering
    \begin{subfigure}{0.45\textwidth}
    \includegraphics[width=\columnwidth]{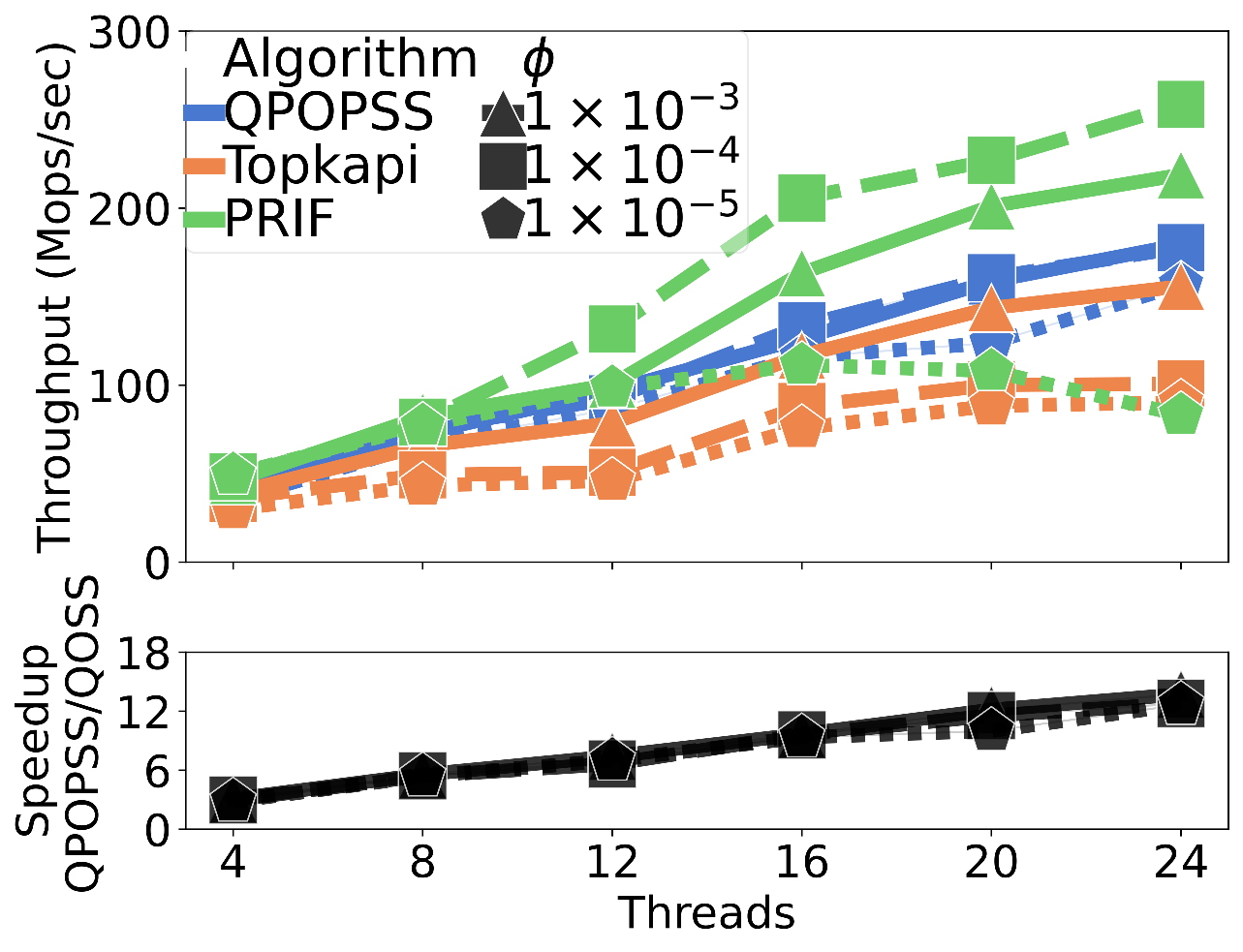}
    \caption{0\% queries.}
    \label{fig:sub_throughput_threads_A_0_queries}
    \end{subfigure}
    \begin{subfigure}{0.45\textwidth}
    \includegraphics[width=\columnwidth]{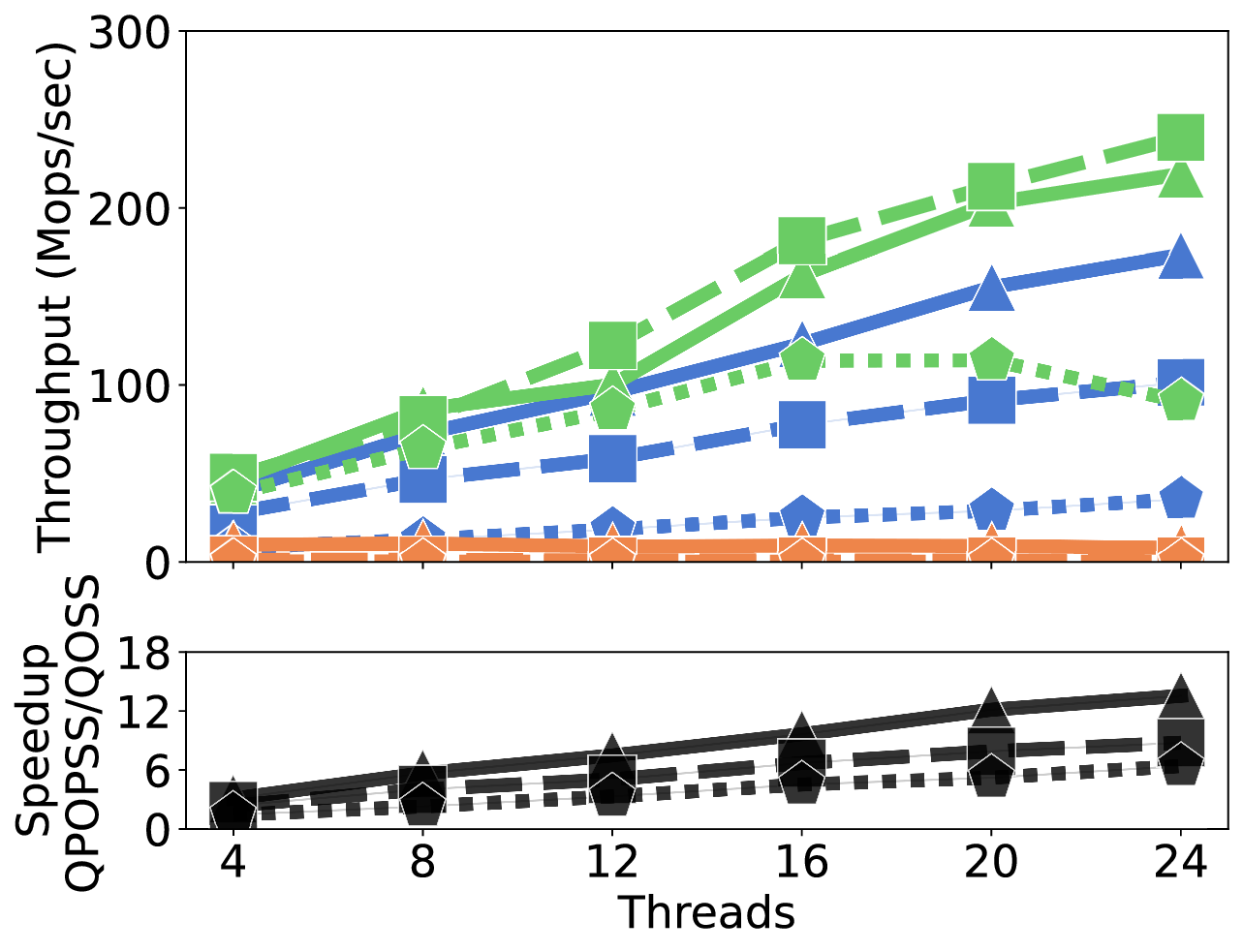}
    \caption{0.01\% queries.}
    \label{fig:sub_throughput_threads_A_0.01_queries}
    \end{subfigure}
    \begin{subfigure}{0.45\textwidth}
    \includegraphics[width=\columnwidth]{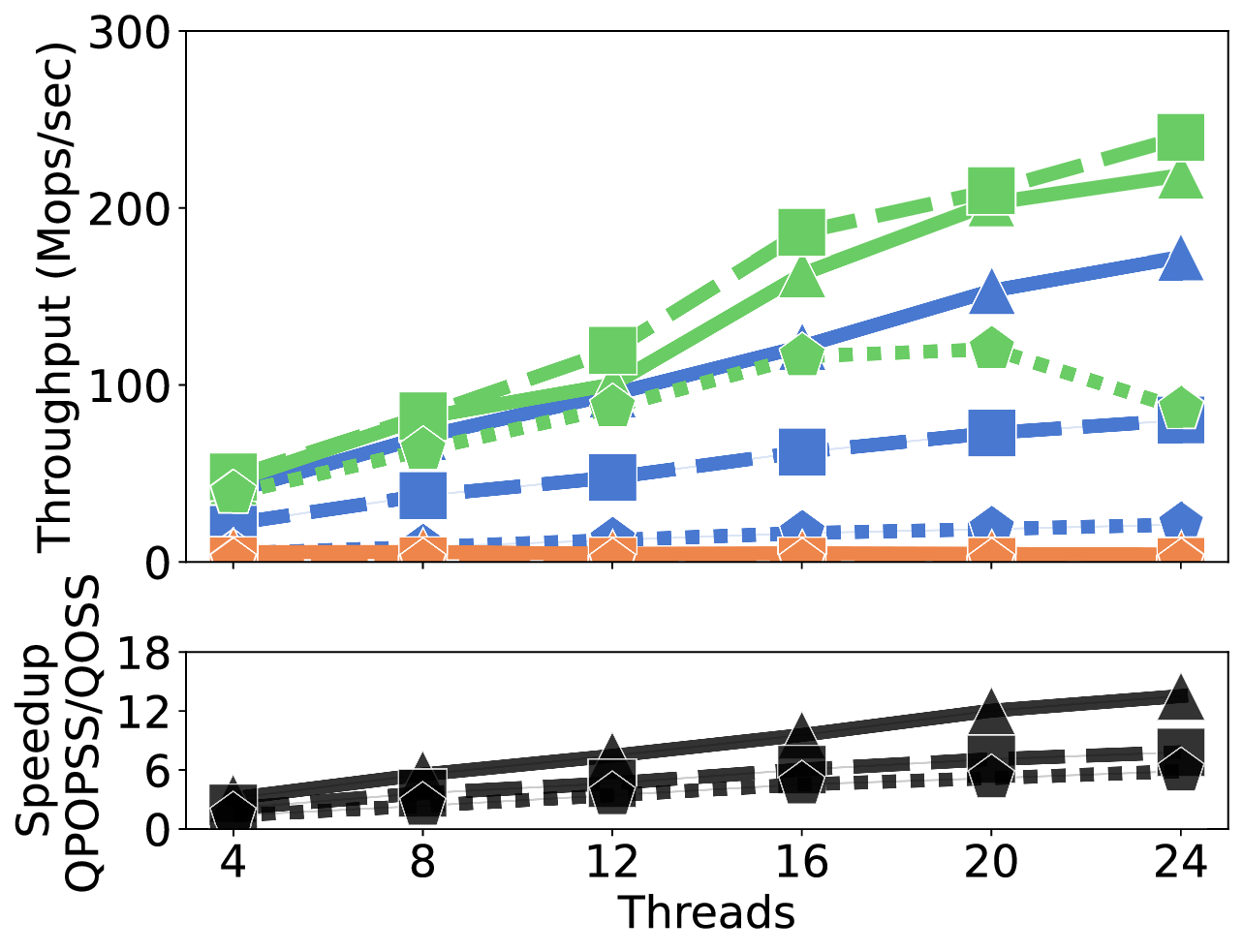}
    \caption{0.02\% queries.}
    \label{fig:sub_throughput_threads_A_0.02_queries}
    \end{subfigure}
    \caption{Throughput in million operations per second and multicore speedup of \shortalgorithmname using the CAIDA backbone router data set. The number of threads varies along the x-axis.}
    \label{fig:throughput_threads_A}
    \vspace{-3mm}
 \end{figure}
The throughput and scalability results for the CAIDA data set are illustrated in Figure~\ref{fig:throughput_threads_A}. Here, the number of threads is varied along the x-axis. 
When no queries are carried out (Figure \ref{fig:sub_throughput_threads_A_0_queries}), the throughput of Topkapi and \shortalgorithmname is similar. However, as the query rate increases, Topkapi's performance rapidly decreases, highlighting the cumbersomeness of the query process. As for PRIF, the throughput does not vary with the query rate but instead depends on the support parameter $\phi$. The different algorithms' throughput clearly correlates to the frequent elements per value of $\phi$ in  Table~\ref{table:frequent_elems}, with numerous frequent elements corresponding to lower throughput and vice-versa. In the case that $\phi=10^{-3}$, PRIF outperforms \shortalgorithmname. Still, in all other cases, \shortalgorithmname can be observed to be the more balanced approach, maintaining high throughput when responding to large and small queries. \shortalgorithmname has a positive trend, wherein throughput increases with the number of dispatched threads. This is not the case for PRIF, which, especially for low values of support parameter $\phi$, seems to have declining throughput as more threads are added. 
This may result from PRIF's single merging thread, which can become a bottleneck.

In the case of real-world data input as in Figure~\ref{fig:throughput_threads_A}, the throughput speedup of \shortalgorithmname scales linearly with the number of threads compared to single-threaded Space-Saving, independently of $\phi$ and the query rate. Interestingly, the throughput of the different algorithms in Figures~\ref{fig:sub_throughput_threads_A_0.01_queries} and~\ref{fig:sub_throughput_threads_A_0.02_queries}
correlates very well to the frequent elements for each data set and value of $\phi$ in Table~\ref{table:frequent_elems}, with numerous frequent elements corresponding to lower throughput and vice-versa.

\subsection{Memory Consumption and Query Accuracy}
\label{sec:evaluation_accuracy_and_space}
We now compare each approach's memory requirements and the ability to report the frequent elements of a stream correctly. Correctness is measured by the metrics recall, precision, and average relative error, previously mentioned in \ref{sec:experimental_setup}. Due to the space-accuracy trade-off associated with the $\epsilon$-approximate $\phi$-frequent problem, we set the memory consumption of each approach to be equal to that of the respective analysis \cite{TopKapi,zhang_efficient_2014} (for \shortalgorithmname, see Corollaries \ref{claim:qpopss_space_requirement} and \ref{claim:qpopss_zipf_requirements}). Each counter equals 32 bytes.

Figure \ref{fig:memory_consumed} 
shows the megabytes consumed by each approach as the number of dispatched threads increases. Three values of $\phi$ are plotted for each baseline. 

\begin{figure}[b!]
\centering
    \includegraphics[width=0.5\columnwidth]{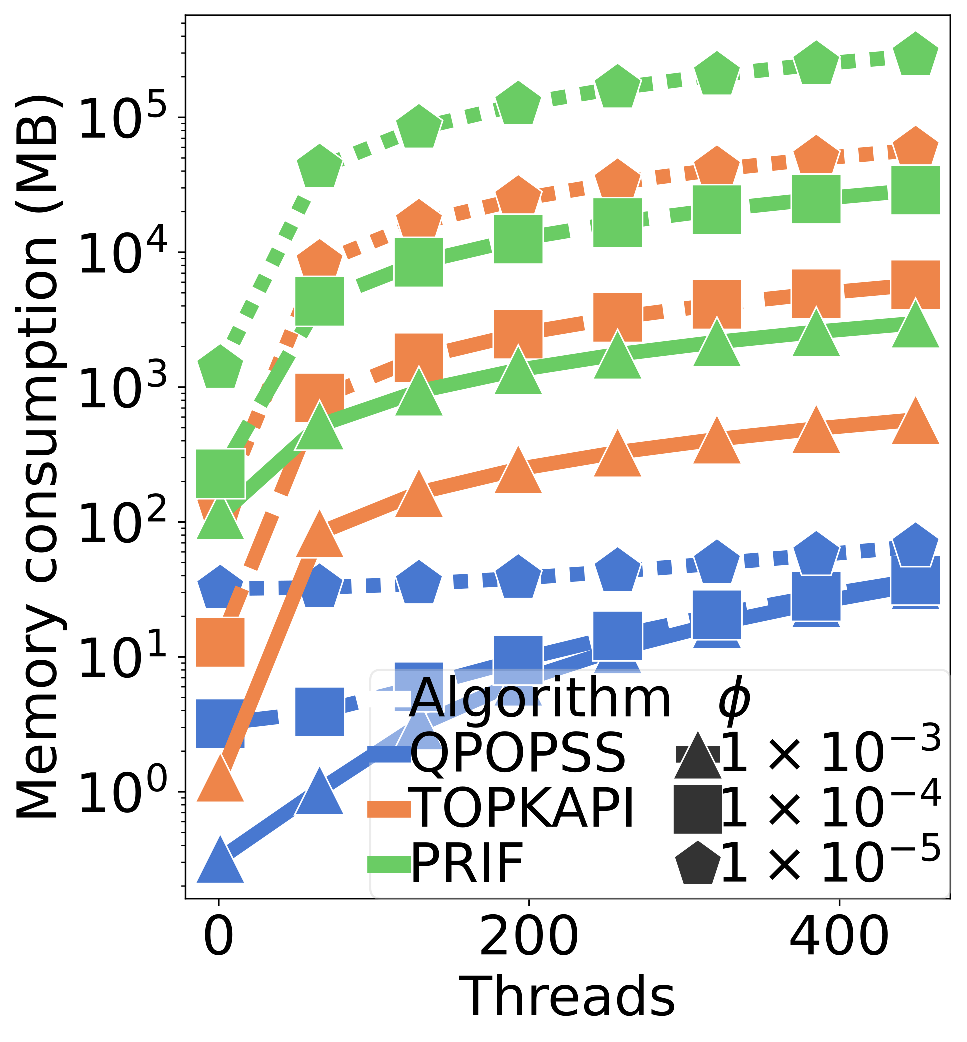}
    \caption{\newtext{Memory consumed by each approach in megabytes. The number of threads varies along the x-axis. Note the logarithmic y-axis.}}
    \label{fig:memory_consumed}
\end{figure}

\shortalgorithmname consumes the least space for each support parameter value of $\phi$ and scales up to 450 threads with at most 65,8 MB consumed in the case of $\phi=10^{-5}$, compared to the 288,1 GB required by PRIF and 57 GB required by Topkapi. The memory consumption of \shortalgorithmname also scales very well with different values of $\phi$, as seen in Figure \ref{fig:memory_consumed}. For example, at 450 threads, $\phi=10^{-3}$ requires 34 MB, $\phi=10^{-4}$ requires 37 MB, and $\phi = 10^{-5}$ requires 65 MB, which is comparatively very low in terms of bytes, while also being highly scalable. 
The PRIF memory requirements are $2\frac{T+1}{\epsilon-\beta}$, where $\beta < \epsilon$, and $T$ is the number of dispatched threads (compared to \shortalgorithmname, the latter uses $\frac{1}{\epsilon}$ counters, with an additional $T^2D$ counters Delegation Filters, where $D$ is the maximum number of unique elements in a filter).

\begin{figure}[t!]
    \begin{subfigure}{0.48\columnwidth}
    \centering
    \includegraphics[width=0.9\columnwidth]{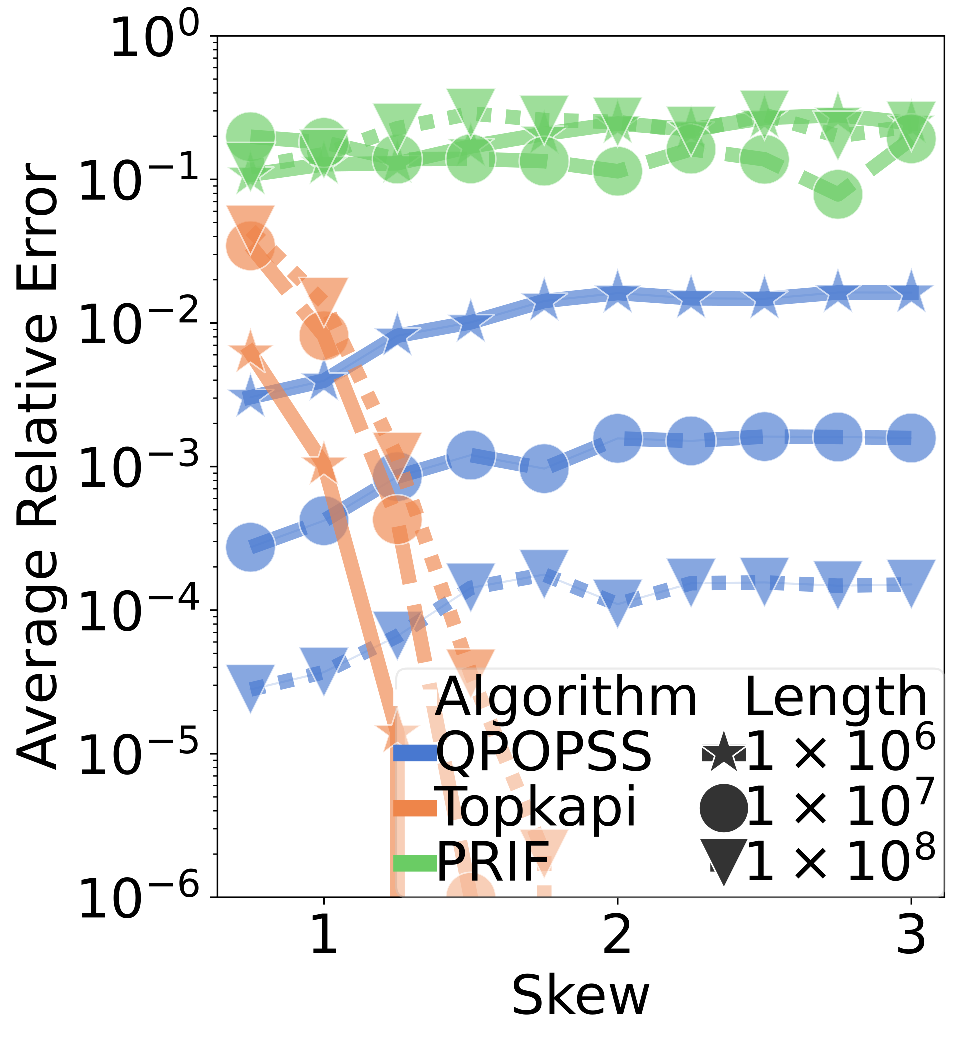}
    \caption{\newtext{Different stream lengths as Zipf skew varies along the x-axis, $T=24$ Threads.}}
    \label{fig:avgre_varyN}
    \end{subfigure}
    \begin{subfigure}{0.48\columnwidth}
    \centering
    \includegraphics[width=0.9\columnwidth]{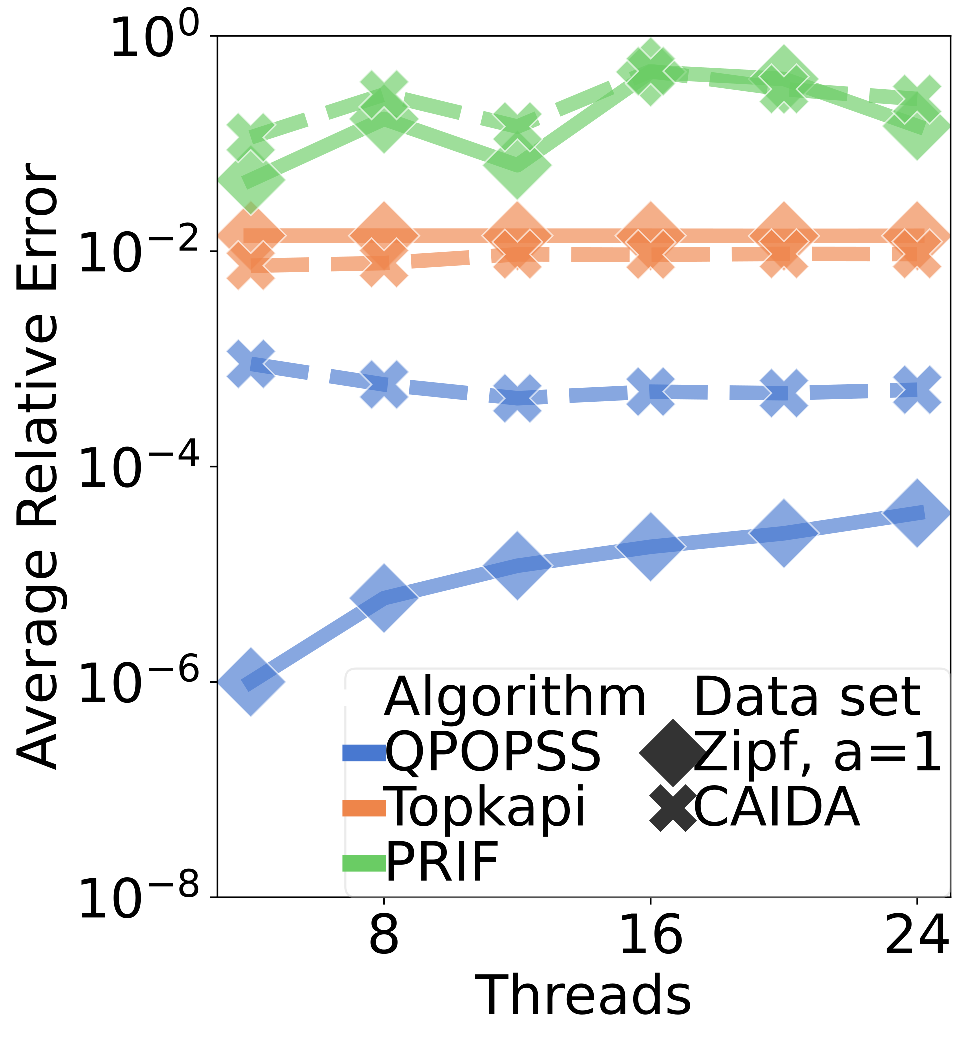}
    \caption{\newtext{Different data sets as the number of threads vary along the x-axis.}}
    \label{fig:avgre_threads}
    \end{subfigure}
    \caption{\newtext{Average relative error. The Zipf skew level varies along the x-axis. $\phi=10^{-4}$. Note the logarithmic y-axes.}}
    \label{fig:average_relative_error}
    \vspace{-3mm}
\end{figure}

\begin{figure}[b!]
    \begin{subfigure}{0.48\columnwidth}
    \centering
    \includegraphics[width=\columnwidth]{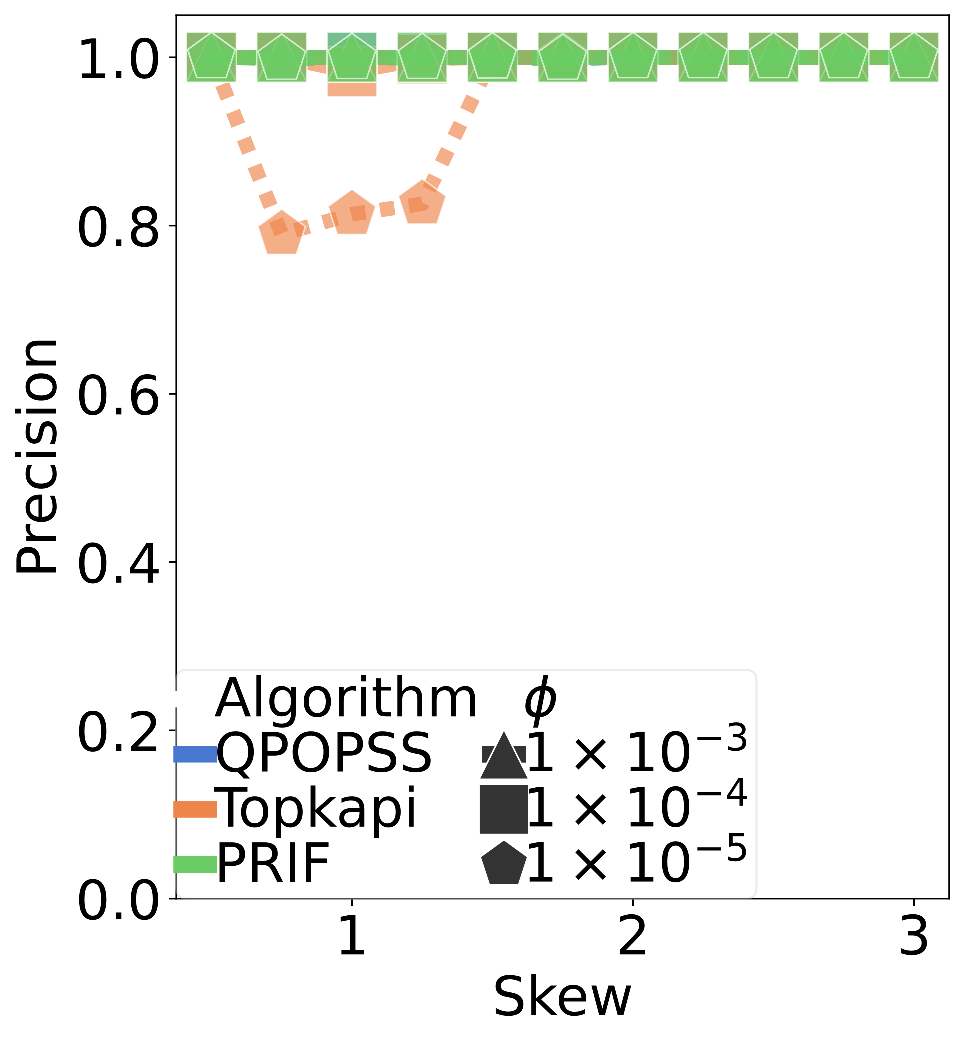}
    \caption{\newtext{Precision.}}
    \label{fig:skew_precision}
    \end{subfigure}
    \begin{subfigure}{0.48\columnwidth}
    \centering
    \includegraphics[width=\columnwidth]{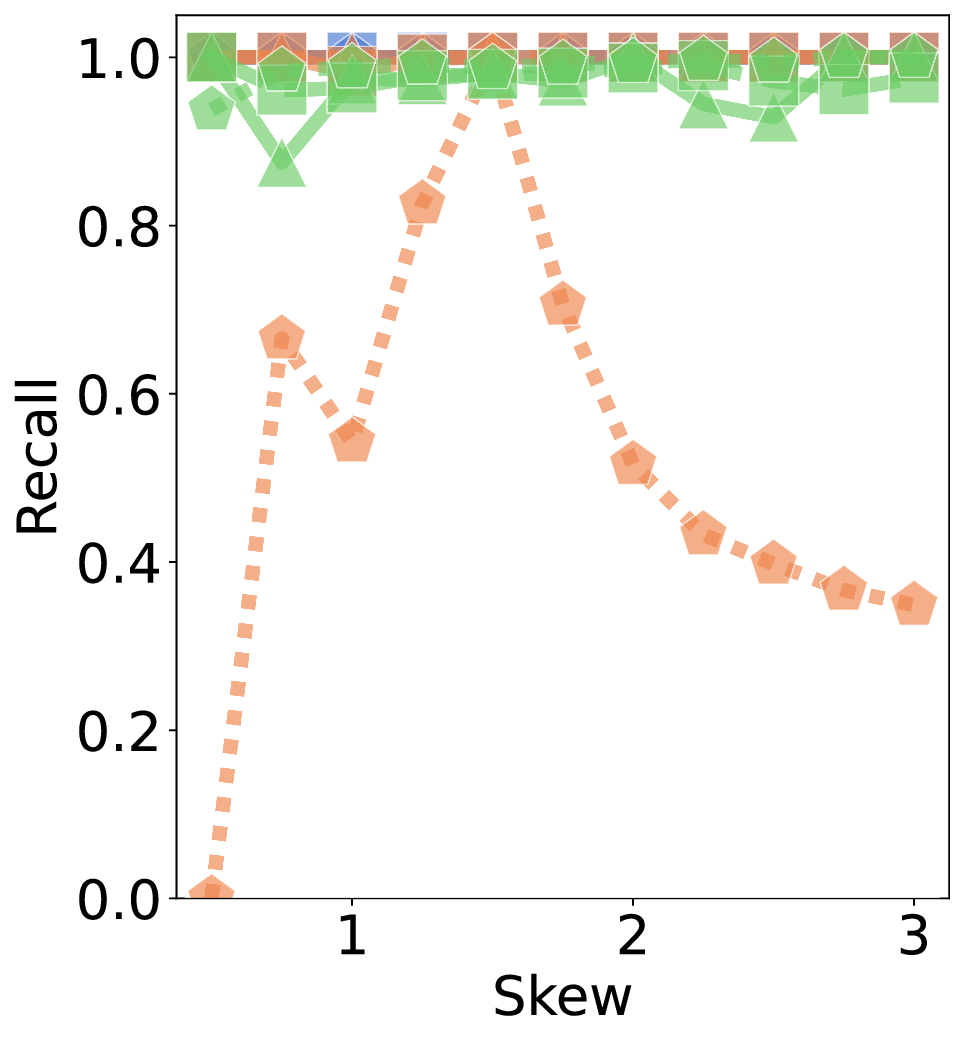}
    \caption{\newtext{Recall.}}
    \label{fig:skew_recall}
    \end{subfigure}
    \caption{\newtext{Accuracy in terms of precision and recall for different threshold values of $\phi$.  The Zipf skew level varies along the x-axis. 
    }}
    \label{fig:precision_and_recall}
    \vspace{-2mm}
\end{figure}
Due to the space/accuracy tradeoff associated with synopsis algorithms, \shortalgorithmname can, therefore, also be said to be very accurate. 
Using the above-described amount of memory bytes, we now compare the accuracy of each approach. To compare fairly between baselines, the experiments entail processing a stream of elements followed by a single query, which is compared to the ground truth. The following results, therefore, show the accuracy of each approach without considering the effects of concurrency. 
Figure~\ref{fig:average_relative_error} contains the average relative error, which is the arithmetic mean of the error in each element count divided by the actual count. The average relative error was measured for different data sets, data stream lengths, and number of dispatched threads. In Figure~\ref{fig:avgre_varyN}, the level of input data skew is varied along the x-axis. The stream lengths are simulated by querying after a certain number of elements have been observed. As the length of the stream increases, \shortalgorithmname displays decreasing average relative error over all skew levels.
This behavior is predicted by theorem~\ref{claim:perfectrecall_zipf}, i.e., as the stream length tends to infinity, the average relative error tends to 0. PRIF's average relative error is relatively high compared to the other baselines and seems to not correlate with either skew level or stream length. 
For Topkapi, however, the average relative error seems to depend less on stream length and more on the skew level as the accuracy improves in the higher levels. 
For many skew values over 2, Topkapi reached 0 average relative error, meaning that all the reported element counts were correct. 
Figure~\ref{fig:avgre_threads} contains the average relative error for two input data sets while varying the number of dispatched threads. The data sets were Zipf with skew level 1 and for the real CAIDA data set. \shortalgorithmname is the approach with the least average relative error, both in the case of real and synthetic data. However, for the synthetic data, there seems to be an upward trend as the number of dispatched threads increases, while for the real data, the average relative error seems to stabilize from 16 to 24 threads. Topkapi and PRIF have a high average relative error, which remains somewhat stable as the number of threads increases. 

The results presented in figure~\ref{fig:precision_and_recall} describe how the baselines compare on precision and recall when the support $\phi$ and skew level of the synthetic data sets are varied. The results show that \shortalgorithmname maintains perfect precision and recall in all cases.
All approaches show high precision and recall; however, \shortalgorithmname is alone in achieving perfect precision and recall for all parameter combinations.
Figure \ref{fig:skew_precision} shows that Topkapi achieves sub-optimal precision in skew levels between 0.75 and 1.25. This can be attributed to Topkapi being based on the probabilistic Count-Min Sketch. In Figure \ref{fig:skew_recall}, both PRIF and Topkapi have varied outcomes, with PRIF's recall dropping to 0.87 in one case.

\subsection{Query Latency}
\label{sec:evaluation_latency}
To understand the delay between issuing a query and the associated response, the execution timing of a series of queries was recorded to calculate a mean value representative of a typical query duration.
In each run of a query latency experiment, the query and update workload is distributed evenly across all threads, with 0.01\% of the operations carried out being queries and the other 99.9\% being update operations. For the experiments in Figure~\ref{fig:query_latency},  the support parameter was set to $\phi = 10^{-4} = 10\epsilon$. A synthetic Zipf data set with shape parameter $a=1$ was used as input to the baselines.

As seen in Figure~\ref{fig:query_latency_skew}, PRIF takes the least time to perform a query in all showcased parameter combinations. This is a result of its query-dedicated merging thread. A design that foregoes memory conservativeness in favor of minimal query latency. A query by \shortalgorithmname takes on the order of 10 to 100s of microseconds, suiting many real-world applications. The Topkapi approach takes on the order of 100s of milliseconds, which introduces delays unsuitable for real-world high-throughput applications. The query latency of \shortalgorithmname decreases with skew level, while the latency of the two other approaches is constant regardless of stream distribution. This is most likely due to the small number of required counters for \shortalgorithmname in the higher skew levels.

\begin{figure}[t!]
    \begin{subfigure}{0.49\columnwidth}
    \includegraphics[width=\columnwidth]{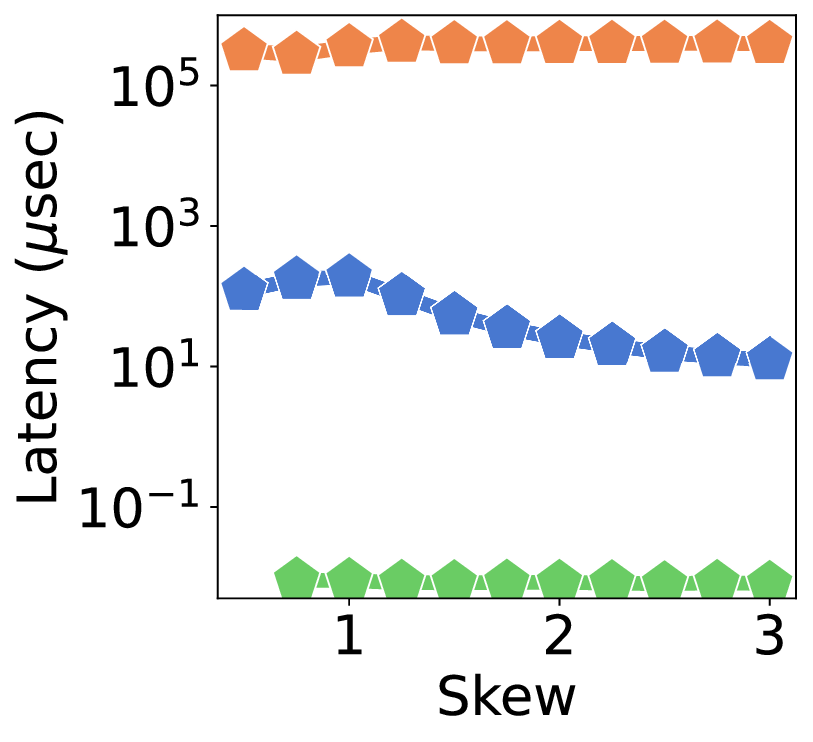}
    \caption{Latency as the Zipf skew level varies along the x-axis. T=24 threads.}
    \label{fig:query_latency_skew}
    \end{subfigure}
    \begin{subfigure}{0.49\columnwidth}
    \includegraphics[width=\columnwidth]{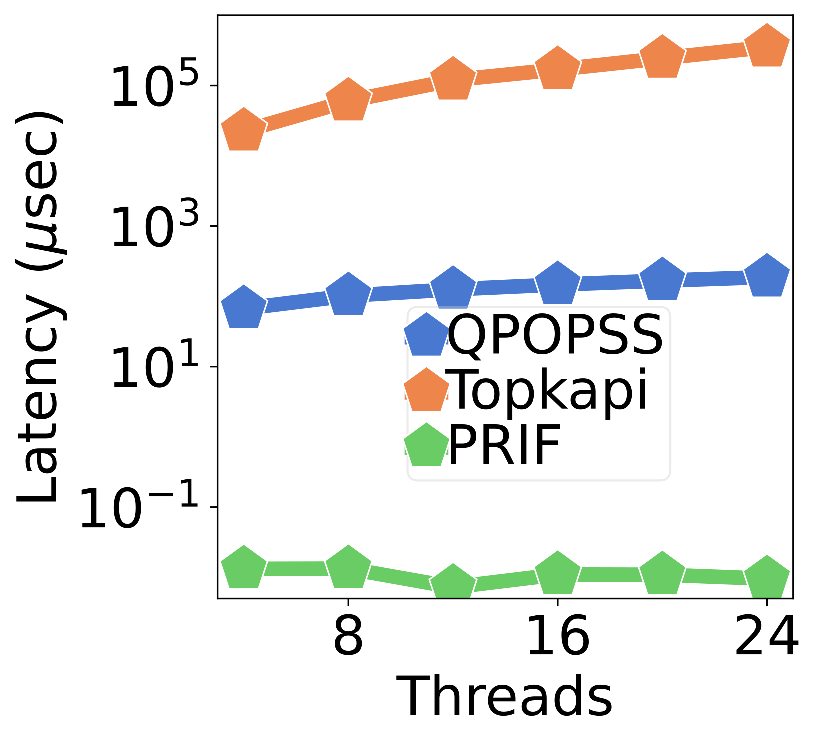}
    \caption{Latency as the number of threads vary along the x-axis}
    \label{fig:query_latency_threads}
    \end{subfigure}
    \caption{Query latency when 0.01\% queries are carried out and $\phi=10^{-4}$. Note the logarithmic y-axes. }
    \label{fig:query_latency}
    \vspace{-3mm}
\end{figure}

Figure~\ref{fig:query_latency_threads} contains the results of the experiments where the number of threads was varied. These results are consistent with previous ones. Due to the design of PRIF, where a single merging thread is queried, it is not affected by increasing the number of threads. \shortalgorithmname has a slight upward trend, meaning that the number of threads dispatched affects the query latency since more \shortsubalgorithmname instances need to be merged. Topkapi has a slightly steeper upward trend due to its cumbersome merging process, initiated each time a query is carried out. Nonetheless, \shortalgorithmname maintains at least an order of magnitude lower latency compared to Topkapi across all numbers of threads. 

\subsection{Summary}
When it comes to the comparison between Space-Saving and \shortsubalgorithmname as the inner algorithm employed by QPOPSS, throughput and latency greatly improve, especially in the lower skew levels.
The results of the comparative evaluation between the state-of-the-art methods are summarized in Table \ref{table:results_summary}. The throughput of \shortalgorithmname excels when processing streams with high data skew and while answering large queries. Compared to PRIF and Topkapi, \shortalgorithmname handles this scenario exceptionally well. When processing the CAIDA data set, especially when queries are present in the workload, Topkapi lacks the competitive throughput of \shortalgorithmname and PRIF\footnote{recall that this is despite the fact that  Topkapi was given an advantage regarding throughput in the presence of concurrent queries, by not enforcing thread-safe synchronization, as they were not part of its design}. PRIF handles both the low and high query rates equally well due to its query-favored design. The precision and recall of \shortalgorithmname are perfect when processing Zipf data sets, and the average relative error (ARE) is low, diminishing quickly as the stream length grows. PRIF and Topkapi have higher ARE when processing real-world data, and Topkapi has excellent ARE when processing high-skew synthetic data. When scalable memory consumption is important, \shortalgorithmname has a clear advantage over both PRIF and Topkapi due to their counters increasing with a factor of $T$, which is not the case for \shortalgorithmname.
PRIF excels in latency due to its design of constant merging by a dedicated thread, while \shortalgorithmname and Topkapi appear less favorable due to employing a merge-on-demand style of querying.

\begin{table}[htb]
\centering
\footnotesize\setlength{\tabcolsep}{6pt}
\begin{tabular}{|l|ll|llllll|}
\hline
\multicolumn{1}{|c|}{\multirow{2}{*}{\begin{turn}{90}Baseline\end{turn}}} & \multicolumn{2}{l|}{\multirow{2}{*}{Query aspects}} & \multicolumn{6}{c|}{Metric}                                                                                                                                                                                                                                    \\[5ex] \cline{4-9} 
\multicolumn{1}{|c|}{}                                                     & \multicolumn{2}{l|}{}                                  & \multicolumn{1}{c|}{Throughput}              & \multicolumn{1}{c|}{Precision}           & \multicolumn{1}{c|}{Recall}                  & \multicolumn{1}{c|}{ARE}                 & \multicolumn{1}{c|}{Memory}                  & \multicolumn{1}{c|}{Latency} \\ \hline
\multirow{4}{*}{\vspace{-0.5em}\begin{turn}{90}QPOPSS\end{turn}}                         & \multicolumn{1}{l|}{\multirow{2}{*}{Few}}    & Small   & \multicolumn{1}{l|}{$\uparrow \uparrow$}     & \multicolumn{1}{l|}{$\uparrow \uparrow$} & \multicolumn{1}{l|}{$\uparrow \uparrow$}     & \multicolumn{1}{l|}{$\uparrow \uparrow$} & \multicolumn{1}{l|}{$\uparrow \uparrow$}     & $\uparrow$                   \\ \cline{3-9} 
                                                                           & \multicolumn{1}{l|}{}                        & Large   & \multicolumn{1}{l|}{$\uparrow$}              & \multicolumn{1}{l|}{$\uparrow \uparrow$} & \multicolumn{1}{l|}{$\uparrow \uparrow$}     & \multicolumn{1}{l|}{$\uparrow \uparrow$} & \multicolumn{1}{l|}{$\uparrow \uparrow$}     & $\uparrow$                   \\ \cline{2-9} 
                                                                           & \multicolumn{1}{l|}{\multirow{2}{*}{Many}}   & Small   & \multicolumn{1}{l|}{$\uparrow$}              & \multicolumn{1}{l|}{$\uparrow \uparrow$} & \multicolumn{1}{l|}{$\uparrow \uparrow$}     & \multicolumn{1}{l|}{$\uparrow \uparrow$} & \multicolumn{1}{l|}{$\uparrow \uparrow$}     & $\uparrow$                   \\ \cline{3-9} 
                                                                           & \multicolumn{1}{l|}{}                        & Large   & \multicolumn{1}{l|}{$\uparrow \uparrow$}     & \multicolumn{1}{l|}{$\uparrow \uparrow$} & \multicolumn{1}{l|}{$\uparrow \uparrow$}     & \multicolumn{1}{l|}{$\uparrow \uparrow$} & \multicolumn{1}{l|}{$\uparrow \uparrow$}     & $\uparrow$                   \\ \hline
\multirow{4}{*}{\begin{turn}{90}PRIF\end{turn}}                           & \multicolumn{1}{l|}{\multirow{2}{*}{Few}}    & Small   & \multicolumn{1}{l|}{$\uparrow \uparrow$}     & \multicolumn{1}{l|}{$\uparrow \uparrow$} & \multicolumn{1}{l|}{$\uparrow$}              & \multicolumn{1}{l|}{$\downarrow$}        & \multicolumn{1}{l|}{$\downarrow \downarrow$} & $\uparrow \uparrow$          \\ \cline{3-9} 
                                                                           & \multicolumn{1}{l|}{}                        & Large   & \multicolumn{1}{l|}{$\downarrow \downarrow$} & \multicolumn{1}{l|}{$\uparrow \uparrow$} & \multicolumn{1}{l|}{$\uparrow \uparrow$}     & \multicolumn{1}{l|}{$\downarrow$}        & \multicolumn{1}{l|}{$\downarrow \downarrow$} & $\uparrow \uparrow$          \\ \cline{2-9} 
                                                                           & \multicolumn{1}{l|}{\multirow{2}{*}{Many}}   & Small   & \multicolumn{1}{l|}{$\uparrow \uparrow$}     & \multicolumn{1}{l|}{$\uparrow \uparrow$} & \multicolumn{1}{l|}{$\uparrow$}              & \multicolumn{1}{l|}{$\downarrow$}        & \multicolumn{1}{l|}{$\downarrow \downarrow$} & $\uparrow \uparrow$          \\ \cline{3-9} 
                                                                           & \multicolumn{1}{l|}{}                        & Large   & \multicolumn{1}{l|}{$\downarrow \downarrow$} & \multicolumn{1}{l|}{$\uparrow \uparrow$} & \multicolumn{1}{l|}{$\uparrow \uparrow$}     & \multicolumn{1}{l|}{$\downarrow$}        & \multicolumn{1}{l|}{$\downarrow \downarrow$} & $\uparrow \uparrow$          \\ \hline
\multirow{4}{*}{\begin{turn}{90}Topkapi\end{turn}}                        & \multicolumn{1}{l|}{\multirow{2}{*}{Few}}    & Small   & \multicolumn{1}{l|}{$\uparrow$}              & \multicolumn{1}{l|}{$\uparrow \uparrow$} & \multicolumn{1}{l|}{$\downarrow \downarrow$} & \multicolumn{1}{l|}{$\uparrow$}          & \multicolumn{1}{l|}{$\downarrow$}            & $\downarrow$                 \\ \cline{3-9} 
                                                                           & \multicolumn{1}{l|}{}                        & Large   & \multicolumn{1}{l|}{$\uparrow$}              & \multicolumn{1}{l|}{$\uparrow$}          & \multicolumn{1}{l|}{$\downarrow$}            & \multicolumn{1}{l|}{$\uparrow$}          & \multicolumn{1}{l|}{$\downarrow$}            & $\downarrow$                 \\ \cline{2-9} 
                                                                           & \multicolumn{1}{l|}{\multirow{2}{*}{Many}}   & Small   & \multicolumn{1}{l|}{$\downarrow \downarrow$} & \multicolumn{1}{l|}{$\uparrow \uparrow$} & \multicolumn{1}{l|}{$\downarrow \downarrow$} & \multicolumn{1}{l|}{$\uparrow$}          & \multicolumn{1}{l|}{$\downarrow$}            & $\downarrow$                 \\ \cline{3-9} 
                                                                           & \multicolumn{1}{l|}{}                        & Large   & \multicolumn{1}{l|}{$\downarrow \downarrow$} & \multicolumn{1}{l|}{$\uparrow$}          & \multicolumn{1}{l|}{$\downarrow$}            & \multicolumn{1}{l|}{$\uparrow$}          & \multicolumn{1}{l|}{$\downarrow$}            &   $\downarrow$                           \\ \hline
\end{tabular}
\caption{Summary of the evaluation results, given a high number of dispatched threads, moderate to high data skewness, and considerable stream length. The number of $\uparrow$ and $\downarrow$ symbols indicate positive and negative comparative performance on a specific metric, respectively (i.e., $\uparrow$ on latency and ARE means a low and therefore desirable metric value).}
\label{table:results_summary}
\end{table}
 \section{Other related work}\label{sec:relatedwork}

As briefly touched upon in Section~\ref{sec:problem_analysis}, several algorithms target variants of the $\epsilon$-approximate frequent elements using a multi-threading, with emphasis on thread-local approaches as discussed in that section and the evaluation baselines~\cite{TopKapi, Augmented_Sketch,zhang_efficient_2014,cafaro_parallel_2016, das_cots:_2009}.

The Augmented Sketch (ASketch) \cite{Augmented_Sketch} is a highly accurate stream processing algorithm for element frequency estimation. The design comprises two interconnected data structures: a sketch and a filter. The fixed-size filter tracks elements and their occurrence. When the filter becomes full and a non-tracked element appears in the stream, the element is inserted both in the filter and in the underlying sketch. The ASketch improves accuracy and increases throughput when processing streams with high data skew. The authors also provide designs for parallel processing with either the filter and the sketch running on different cores or where a complete ASketch runs on a reserved core. However, although frequent elements estimation is possible, the approach focuses on the point estimation of the frequency count of specific elements. 

The HeavyKeeper algorithm~\cite{heavy-keeper} targets combining counter-based and sketch-based approaches for element-count tracking. By periodically decaying element counts, freshness is ensured as input data distributions shift over time. HeavyKeeper can be combined with a min-heap to track the most frequent elements of a stream. It is a sequential algorithm, though, and its parallelization, allowing concurrent updates with queries and the appropriate supporting data structures, is not discussed in the work.

M. Cafaro et al.~\cite{cafaro_parallel_2016} present a parallel design utilizing Space-Saving as the core algorithm in a purely thread-local design. Queries merge each Space-Saving algorithm instance in a tree-like fashion until only a final algorithm instance that contains the frequent elements is left. The design gives perfect speedup compared to a sequential Space-Saving algorithm instance. Moreover, the accuracy of the presented design was precisely equal to the single-threaded version. However, no scheme for concurrent queries is given and is therefore not usable in real-life applications where continuous queries are required, which are targeted here. This fact, coupled with our inability to find a readily available open-source implementation, made us choose not to include the approach in our evaluation.

The {Cooperative Thread Scheduling Framework (CoTS)}~\cite{das_cots:_2009} and its multi-stream extension~\cite{parallel_space_saving} are approaches for parallelizing the frequent elements problem. Similar to \shortalgorithmname, this approach builds on the Space-Saving algorithm.
The design features a single Space-Saving algorithm instance on which each thread operates. Update operations are carried out directly on the space-saving algorithm instance or handed over to whichever thread currently has exclusive access. The synchronization primitives used are lock-free, promoting high throughput. 

The approach was evaluated using synthetic Zipfian data sets, showing relative throughput gains between CoTS and a lock-based design. 
However, the work lacks discussion on the effect of overlapping updates and queries on query accuracy and rely solely on the analysis of the sequential Space-Saving algorithm. Additionally, due to the combination of the complexity of the approach and the absence of a readily available open-source implementation, implying risks of misinterpretation of the work, it was not possible to include this approach in our evaluation.
\section{Conclusions}\label{sec:conclusions}
Data analytics for demanding high-throughput applications (e.g., optimizing network traffic, detecting cyber-security threats, and performing online data analysis) is important in many fields. It requires novel algorithmic solutions that exploit concurrency to achieve a higher degree of parallelism. To this end, we analyzed the problem of finding the frequent elements of high throughput data streams with concurrent updates and queries, identifying a set of challenges. 

Intending to address these challenges, we designed and extensively evaluated \algorithmname, 
both analytically and empirically, exploring the extended trade-off space in the presence of concurrent queries and updates for the frequent elements problem. To address concurrency-associated accuracy challenges we provided a bound on the space required by \shortalgorithmname to report the frequent elements of a stream accurately. Furthermore, we bounded the frequent elements reported and their estimated occurrence when queries overlap concurrently with updates. To our knowledge, this has not been done before in the context of frequent elements estimation. 

In addressing timeliness challenges, we proposed the \subalgorithmname algorithmic implementation, which was shown to reduce the query latency compared to the original approach drastically. We evaluated \shortalgorithmname through comparison with representative methods in the literature, using synthetic and real-life data sets. The results clearly show that \shortalgorithmname accurately and swiftly reports the frequent elements of a stream compared to other approaches, using extremely few bytes of memory, in line with memory-footprint accuracy-associated challenges. This can be attributed to the unique combination of domain partitioning, inter-thread filters, and query-optimized synopsis data structures, which together ensure high throughput, low latency, low memory, and high accuracy for an overall balanced approach.
The code and the data used (alt. code to generate the synthetic data-sets) are openly available~\cite{Delegation_Space-Saving_2022}.

Future studies may build on the work presented in this paper by extending the method beyond frequent element estimation, such as quantiles, histograms, wavelets, and similar.
Moreover, recent advances where parallelization improvements are of interest include methods for finding frequent elements in streams of both updates and removals \cite{space-saving-plus-minus}. Further work is also needed to study efficient mechanisms for tracking the most recent frequent elements \cite{heavy_hitter_windows_basat,memento_basat}, as the stream data distribution may change over time. 

\section*{Acknowledgements}
Partially supported by the following grants:
the Marie Skłodowska-Curie Doctoral Network prj. RELAX-DN, funded by the European Union under Horizon Europe 2021-2027 Framework Programme Grant Agreement nr. 101072456; the Chalmers AoA frameworks Energy and Production, WPs INDEED, and ``Scalability, Big Data and AI", respectively; the Swedish Research Council grant “EPITOME” (VR 2021-05424);  the Wallenberg AI, Autonomous Systems and Software Program and Wallenberg Initiative Materials for Sustainability prj. WASP-WISE STRATIFIER; and the TANDEM project within the framework of the Swedish Electricity Storage and Balancing Centre (SESBC), co-funded by the Swedish Energy Agency.
 
\bibliographystyle{plainurl}
\bibliography{references}

\pagebreak

\pagebreak

\end{document}